\newtheorem{theorem}{Theorem}
\newtheorem{example}{Example}
\title{Logic-dependent emergence of multistability, hysteresis, and biphasic dynamics in a ``minimal'' positive feedback network with an autoloop} 
\author[1,2]{Akriti Srivastava\thanks{akriti.iitp@gmail.com}}
\author[1]{Mubasher Rashid\thanks{\textbf{Corresponding Author: Mubasher Rashid (mubasherrashid@gmail.com or mubasher@iitk.ac.in)}}}
\affil[1]{Department of Mathematics and Statistics, Indian Institute of Technology Kanpur, Kanpur-208016, India}
\affil[2]{Department of Mathematics, Dayananda Sagar College of Engineering, Bangalore-560078, India}
\date{}
\begin{document}
	\maketitle
	\begin{abstract}
\noindent Cellular decision-making (CDM) is a dynamic phenomenon crucial for development and diseases. CDM is often, if not invariably, controlled by regulatory networks defining interactions between genes and transcription factor proteins. Traditional studies have focussed on molecular switches such as positive feedback circuits that upon multimerization exhibit at most bistability. However, higher-order dynamics such as tristability is also prominent in many biological processes. It is thus imperative to identify a "minimal" circuit that can alone explain mono, bi, and tristable dynamics. In this work, we consider a two-component positive feedback network with an autoloop and explore these regimes of stability for different degrees of multimerization and the choice of Boolean logic functions. We report that this network can exhibit numerous dynamical scenarios such as bi-and tristability, hysteresis, and biphasic kinetics, explaining the possibilities of abrupt cell state transitions and the $smooth \ state \ swap$ without a step-like switch. 
Specifically, while with monomeric regulation and competitive OR logic, the circuit exhibits mono-and bistability and biphasic dynamics, with non-competitive AND and $\mathcal{OR}$ logics only monostability can be achieved. To obtain bistability in the latter cases, we show that the autoloop must have (at least) dimeric regulation. In pursuit of higher-order stability, we show that tristability occurs with higher degrees of multimerization and with non-competitive $\mathcal{OR}$ logic only. Our results, backed by rigorous analytical calculations and numerical examples, thus explain the association between multistability, multimerization, and logic in this "minimal" circuit. Since this circuit underlies various biological processes, including epithelial-mesenchymal transition which often drives carcinoma metastasis, these results can thus offer crucial inputs to control cell state transition by manipulating dimerization and the logic of regulation in cells.

	\end{abstract} 
 
\textbf {Keywords:} Cell fate decision; Multistability; Minimal genetic circuit; Positive feedback loop; Biphasic dynamics; Boolean logic. \\	

\section{Introduction}
Cellular decision-making is a cell non-autonomous process where at each cell-lineage branch point, a cell drives into one of the alternative distinct cell types \cite{cell-fate-roy-soc-inter}. Investigating dynamical principles of regulatory networks that govern cellular decision-making is essential to understanding and controlling stepwise lineage decisions of cells \cite{und-gene_cir_cell-fate_branch_pts, Syn-gencir_cellDecMak, CellDecMak_JJ-collins}. A key aspect of this decision-making process is the ability of cells to manifest multiple stable states or phenotypes in response to varying internal and external cues without altering their genetic makeup \cite{multiStab_dec-mak_in_diff, cell-fate-roy-soc-inter}. This phenomenon, known as multi-stability, is pivotal to cellular differentiation \cite{multiStab_dec-mak_in_diff, differ-ruiqi}, reprogramming \cite{und-gene_cir_cell-fate_branch_pts, cell-fate-roy-soc-inter}, and its manifestation through epithelial-mesenchymal plasticity  \cite{Phys-of-cell-dec-mak-in-EMT}, a cellular program enabling bidirectional transition among epithelial, mesenchymal, and one or more hybrid epithelial-mesenchymal cells, enables intra-tumor heterogeneity \cite{EMT-heterog_tripathi, Heter&Plast}, induces cancer progression and metastasis of carcinomas \cite{phenotypicHetero_Science_adv, myPlosCompaper}. Since cellular decision-making is largely controlled by regulatory networks defining “molecular switches” \cite{why-are-cell-switch-boolean, necess-cond-kauffman}, unraveling the dynamical behaviors and logic of multi-stable switches has profound implications in synthetic biology and regenerative medicine.

A necessary condition for nonlinear control systems and networks to exhibit multistability is the presence of positive feedback loops \cite{Angeli&Sontag2004, sontag-1, sontag-2, graphic-req-multi, necess-cond-kauffman}. Interestingly, this requirement also extends to biological systems where a commonly observed network structure underlying multistability is the toggle switch \cite{Toggle-Switch-JJCollins}, comprising mutual inhibition of two opposite fate-determining transcription factors thus forming a positive feedback loop. This mutual repression enables cells to adopt different states, driving an 'either-or' binary choice between alternative cell fates from a common progenitor \cite{und-gene_cir_cell-fate_branch_pts}. However, it is rational to posit that cellular decision-making doesn't rely solely on binary outcomes. The intricate structure of gene regulatory networks enables them to exhibit multiple alternative stable states. Examples of, for instance, bi-and tristable solutions are found across biological contexts including development \cite{develop}, differentiation \cite{multistable-switches-differen, diff-1, diff-2, tristability-mouse}, epigenetic processes \cite{epigene}, and metastasis of carcinomas \cite{Tristability-in-miRNA-TF-Switch, microRNA-based, CombCoop-ov}. Often, a two-component system operates at the core of the underlying bistable decision-making circuits \cite{und-gene_cir_cell-fate_branch_pts, chick, genetic-bistable-switch}. It is imperative to identify the "minimal" (in terms of the number of variables and the regulatory interactions) two-component genetic circuits capable of demonstrating bistability. Two such networks are (i) circuits featuring multimeric regulation and lacking autoloop, and (ii) circuits employing monomeric regulators along with a singular monomeric autoloop \cite{Cherry&Adler2000, why-are-cell-switch-boolean}. Jules et al., \cite{Jules_BMB} and Zhang et al., \cite{Zhang_mono&BiTrans-specific-case} analytically proved that such systems can exhibit at most bistability. This begs a question: are low-dimensional networks such as two-component circuits capable of only bistable behavior or higher-order stability such as tristability (three states) can also be possible in these circuits? It may be noted that compared to bistable dynamics, networks exhibiting tristable dynamics are not much studied due to the complexity of analytical calculations as well as numerical simulations. 

Recent theoretical and experimental studies found that miR200-ZEB positive feedback circuit with one multimeric autoloop, which underlies epithelial-mesenchymal transition, exhibits tristability \cite{CombCoop-ov, microRNA-based, Tristability-in-miRNA-TF-Switch, mir200-Zeb_motor}. It allows reversible transition among epithelial, mesenchymal, and hybrid epithelial-mesenchymal cells and is a crucial driver of invasion and metastasis of carcinomas \cite{mir200-zeb_genetic-dissection}. However, these studies are purely based on a numerical approach and lack mathematical rigor. Also, they don't explore any association of multimerization with tristability. Motivated by this, we asked a general question: what can be the minimal "multimeric" requirements for a two-component feedback network with one autoloop (Fig. \ref{schematic}a) to exhibit mono-and bistability and under what conditions can it exhibit tristability? Further, node $a$ is co-regulated by node $b$ and node $a$ itself through autoloop. When a gene is regulated by more than one transcription factor, the combinatorial rule (Boolean logic) must also be considered. We considered three types of regulatory logic: (a) Competitive OR logic, (b) non-competitive AND logic, and (c) non-competitive OR logic, and explored their effect on the order of stability. For simplicity, throughout the text, we denote these three logics as Type-I OR, AND, and Type-II $\mathcal{OR}$, respectively.

In this work, we thus analyze the dynamics of such a generic, minimal network (depicted in Fig. \ref{schematic}a) by systematically finding the analytical conditions for mon, bi, and tristable dynamical regimes. We model the network using a system of coupled ordinary differential equations for which each stable equilibrium solution corresponds to a possible stable state (or phenotype or attractor). We explore how these solutions are affected by the degree of multimerization as well as the type of regulatory logic. We integrated the autoloop with the core positive feedback network using three regulatory logics: AND gate, Type-I OR gate, and Type-II $\mathcal{OR}$ gate, each having a different mathematical formalism and qualitative behavior (Fig. \ref{schematic}b). Starting with monomeric regulation, we systematically tested higher degrees of multimerization in each model. We report that multistability in this network is logic-dependent. In monomeric regulation with monomeric autoloop, while type-I OR logic exhibits at most bistability, AND and type-II $\mathcal{OR}$ logic can exhibit only monostability. To obtain bistability with AND and Type-II $\mathcal{OR}$ logics, we show that a dimeric autoloop is required. Upon testing different combinations of multimerization, we show that tristability can occur in higher degrees of multimerization, though bistability is persistently prevalent irrespective of the logic of regulation. Our results thus comprehensively provide an association between multistability, multimerization, and logic in this circuit. Since this circuit underlies various biological processes including EMT in cancers and the multistability can be mapped to phenotypic heterogeneity, these results can thus offer crucial inputs to control phenotypic heterogeneity by manipulating dimerization and synthetically (re)engineering the logic of regulation in cells.
\begin{figure}[H]
 \centering
\subfigure[]{\includegraphics[scale=0.7]{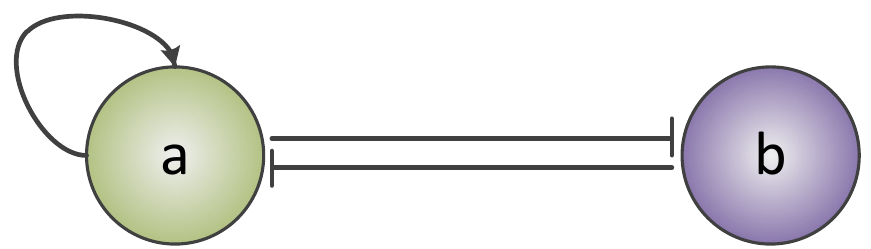}}
\end{figure}
\begin{figure}[H]
 \centering
 \subfigure[]{\includegraphics[scale=1]{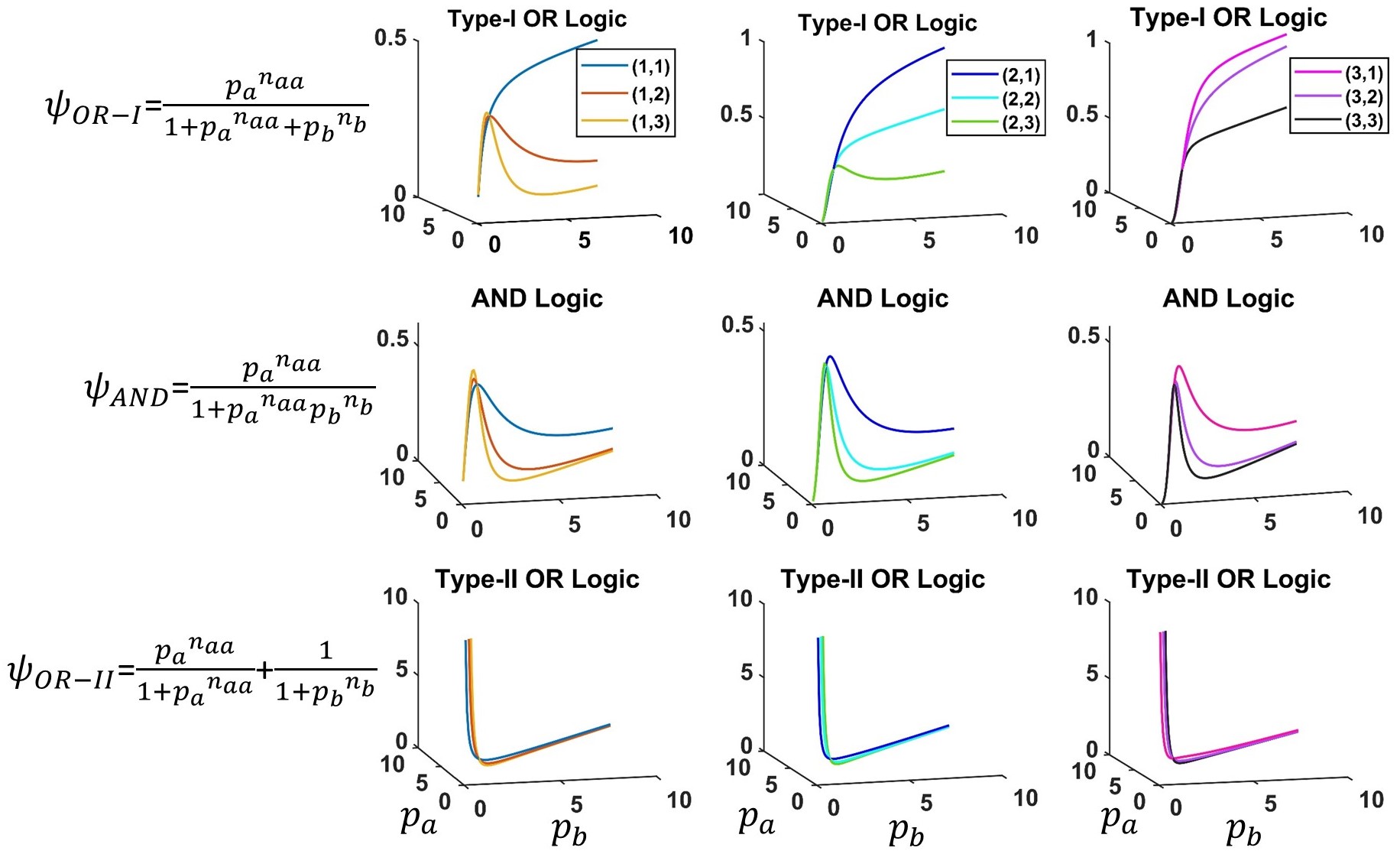}}
  \caption{Illustration of auto-regulated positive feedback network and gate function response curves. (a) Genes $a$ and $b$ mutually inhibit each other to form a positive feedback circuit. Also, gene $a$ can promote its expression (directly or indirectly), which is represented by autoloop. An example of a biological network corresponding to this schematic diagram is the miR200-ZEB mutual inhibition circuit underlying EMT and driving metastasis of carcinomas \cite{CombCoop-ov, microRNA-based}. (b) Qualitative behavior of three types of Boolean logic functions modeling co-regulation of gene $a$ (activation by $a$ and inhibition by $b$) for different values of ($n_{aa}$, $n_b$). Type-I OR logic shows mixed increasing/decreasing responses for different multimerization conditions (i.e., Hill coefficients; $n_{aa}$, $n_b$). The AND and type-II $\mathcal{OR}$ logic functions show a unified response independent of multimerization. Also, it is evident that type-I OR and type-II $\mathcal{OR}$ logic functions differ qualitatively. The former in most cases is increasing and saturating while the latter is always monotonically decreasing.}
  \label{schematic}
 \end{figure}
 
\section{Organization of the paper}
This paper contains three main sections based on the three different types of Boolean logic used to model the co-regulation of gene $a$ by two transcription factors. In section 3, this co-regulation is modeled using competitive OR logic. In section 4, the non-competitive AND logic is employed. In section 5, a non-competitive $\mathcal{OR}$ logic is considered to model the co-regulation. For brevity, throughout the text, competitive OR logic, non-competitive AND logic, and non-competitive OR logic are referred to as Type-I OR logic, AND logic, and Type-II $\mathcal{OR}$ logic, respectively. In each section, we explore the minimal multimerization conditions (manifested through Hill coefficient values) for the occurrence of mono, bi-and multistability. We have applied the nullcline approach to show the existence of steady states. For each model, we analyze the associated characteristic equations and obtain the conditions for the stability of the steady states and the existence of saddle-node bifurcations. Numerical simulations are then performed to illustrate the results. Through bifurcations, we also present possible ways of state transitions such as step-like abrupt state transition (a manifestation of saddle-node bifurcation) and biphasic $smooth \ state \ swaps$ (manifested through sigmoidal curves). In Section 6, we discuss the impact of basal production rate (leakage) on each model's dynamics. Finally, section 7 presents a detailed discussion and concluding remarks based on our findings.

\section{Mathematical Model [Type-I OR Logic]} \label{MM_OR}
The model under investigation in this paper is depicted 
in Fig. \ref{schematic}. The complete nonlinear mathematical model uses four state variables for the concentrations of mRNAs and transcription factor proteins (TF). The concentration of mRNA produced by genes $a$ and $b$ is denoted by $r_a$ and $r_b,$ respectively, while the corresponding TF protein concentrations are denoted by $p_a$ and $p_b.$  The inhibition of gene $a$ and $b$ by TF $p_b$ and $p_a$, respectively, is modelled by the Hill function. Also, the activation of gene $a$ by its own TF, $p_a$, is modeled by the Hill function with a different Hill coefficient, $n_{aa}$. The co-regulation of gene $a$ by TF $p_b$ and also by its own TF $p_a$ is modeled using competitive OR logic \cite{logics}. The transcription of genes $a$ and $b$ into mRNA $r_a$ and $r_b$ is modeled using the non-linear Hill function while the translation of these mRNA's into TF proteins $p_a$ and $p_b$ is modeled using linear mass-action kinetics.

Using the above formalism, the following system of coupled nonlinear ordinary differential equations describe the dynamics of the network in Fig. \ref{schematic}. \\
		\begin{equation}\label{1}
			\begin{aligned}
				&\frac{dr_a}{dt}   =  m_a\frac{(\frac{p_a}{\theta_{aa}})^{n_{aa}}}{1+(\frac{p_a}{\theta_{aa}})^{n_{aa}}+(\frac{p_b}{\theta_b})^{n_{b}}}-\gamma_a r_a+A_1, \\
		&\frac{dr_b}{dt}  =m_b\frac{1}{1+(\frac{p_a}{\theta_a})^{n_{a}}}-\gamma_b r_b+B_1, \\
				&\frac{dp_a}{dt} =k_ar_a-\delta_ap_a,\\
					&\frac{dp_b}{dt} =k_br_b-\delta_bp_b,
			\end{aligned}
		\end{equation}
		with $r_a(0)\geq 0,\,r_b(0)\geq 0,\,p_a(0)\geq 0,\,p_b\geq 0,$
  where $m_i$ are the maximum transcription rates, $\gamma_i$ are the mRNA degradation rates, $k_i$ are the translation rates, and $\delta_i$ are the protein degradation rates for genes $i=a,\,b$. $A_1$ and $B_1$ represent the basal production rates (in the absence of activation and inhibition) of genes $a$ and $b,$ respectively. The threshold parameters $\theta_a, \, \theta_b, \, \theta_{aa}$ denote the thresholds of $p_a$ and $p_b$ to induce a significant response of $r_a$ and $r_b$. The integer parameters $n_a,\,n_b$  and $n_{aa}$ are Hill coefficients that determine the steepness of Hill curves.

\subsection{Mathematical Analysis}\label{2.1}
 This section presents the existence and stability analysis of the steady states for the model system (\ref{1}). We use the Routh-Hurwitz criterion \cite{perko_2013} to prove the local asymptotic stability of the steady states, while Sotomayor's theorem \cite{perko_2013} is used to derive the transversality conditions of saddle-node bifurcation. 
 
 We obtained the steady state $S^*=(r_a^*,r_b^*,p_a^*,p_b^*)$ by setting the right hand side of the model (\ref{1}) to zero.
	From third and fourth equations of model (\ref{1}), we obtained ${r_a}^*=\frac{\delta_a {p_a}^*}{k_a}$ and ${r_b}^*=\frac{\delta_b {p_b}^*}{k_b},$ respectively, as $\frac{dp_a}{dt}=0$ and $\frac{dp_b}{dt}=0.$ Using  $\frac{dr_b}{dt}=0$ and then putting the value of $r_b,$ we get ${p_b}^*=\frac{1}{k_2}\bigg\{B_1+m_b\frac{\theta_a^{n_a}}{\theta_a^{n_a}+{{p_a}^*}^{n_a}}\bigg\}.$ Finally, using the value of $p_b$ and $r_a$ in $\frac{dr_a}{dt}=0,$ we get a polynomial in $p_a$ as
	\begin{equation}\label{2}
   \begin{split}
\bigg(m_a p_a^{n_{aa}}+(A_1-k_1p_a)(\theta_{aa}^{n_{aa}}+p_{a}^{n_{aa}})\bigg)\theta_b^{n_b}k_b^{n_b}(\theta_a^{n_a}+p_a^{n_a})^{n_b}+(A_1-k_1p_a)&\theta_{aa}^{n_{aa}}\bigg(B_1(\theta_a^{n_a}+p_a^{n_a})\\+m_b\theta_a^{n_a}\bigg)^{n_b}=0,
 \end{split}
	\end{equation}
where $k_1=\frac{\gamma_a \delta_a}{k_a}$ and $k_2=\frac{\gamma_b \delta_b}{k_b}.$ Here, $p_a^*$ is a positive real root of (\ref{2}) which we discuss in different cases explicitly. \\
\textbf{Case 1:} When $n_{aa}=n_a=n_b=1,$ then from (\ref{1}), we get a cubic polynomial in $p_a$
	\begin{equation}\label{3}
f(p_a)=Ap_a^3+Bp_a^2+Cp_a+D,
\end{equation}
where $A=k_{1}\,k_{2}\,\theta _{b},\,B=B_{1}\,k_{1}\,\theta _{aa}+k_{1}\,k_{2}\,\theta _{a}\,\theta _{b}-k_{2}\,\theta _{b}\,\left(A_1+\mathrm{m_a}-k_{1}\,\theta _{aa}\right)\\
C=k_{1}\,\theta_{aa}\,\theta_a\,\left(B_{1}+\mathrm{m_b}\right)-k_{2}\,\left(A_1+\mathrm{m_a}-k_{1}\,\theta _{aa}\right)\theta _{a}\theta_b-A_1\,B_{1}\,\theta _{aa}-A_1\,k_{2}\,\theta _{b}\,\theta _{aa}$ and $D=-A_1\,\theta_{aa}\,\theta_a\,\left(B_{1}+\mathrm{m_b}+k_2\,\theta_b\right).$
The nature of the curve $f$ can be observe from (\ref{3}) as 
\begin{enumerate}
	\item[(i)] $f(0)<0$ because $D<0,$
	\item[(ii)] $f\to \infty\, (-\infty)$ as $t\to \infty\,(-\infty)$ because $A>0,$
		
	\item[(iii)] The number of positive real roots of equation (\ref{3}) can be seen using Descarte's rule of sign, mentioned in below Table (\ref{T1}).
	\begin{table}[H]
		\centering
		\begin{tabular}{|cccc|c|}
			\hline
			\multicolumn{4}{|c|}{\textbf{Coefficients}}                                                                       & \multirow{2}{*}{\textbf{\begin{tabular}[c]{@{}c@{}}Number of possible positive\\ real roots of $f(p_a)=0$\end{tabular}}} \\ \cline{1-4}
			\multicolumn{1}{|c|}{\textbf{A}} & \multicolumn{1}{c|}{\textbf{B}} & \multicolumn{1}{c|}{\textbf{C}} & \textbf{D} &                                                                                                                         \\ \hline
			\multicolumn{1}{|c|}{+}          & \multicolumn{1}{c|}{-}          & \multicolumn{1}{c|}{-}          & -          & 1                                                                                                                       \\ \hline
			\multicolumn{1}{|c|}{+}          & \multicolumn{1}{c|}{+}          & \multicolumn{1}{c|}{-}          & -          & 1                                                                                                                       \\ \hline
			\multicolumn{1}{|c|}{+}          & \multicolumn{1}{c|}{+}          & \multicolumn{1}{c|}{+}          & -          & 1                                                                                                                       \\ \hline
			\multicolumn{1}{|c|}{+}          & \multicolumn{1}{c|}{-}          & \multicolumn{1}{c|}{+}          & -          & 3,1                                                                                                                     \\ \hline
		\end{tabular}
	\caption{Number of possible positive real roots of $f(p_a)=0.$}\label{T1}
	\end{table}
\end{enumerate}
Therefore, $f(p_a)=0$ will have either unique or three positive real roots. \\
\textbf{Case 2:} When $n_a=n_b=1,\,n_{aa}=2,$ then from (\ref{1}), we get a quartic polynomial in $p_a$
\begin{equation}\label{4}
	f(p_a)=Ap_a^4+Bp_a^3+Cp_a^2+Dp_a+E,
\end{equation}
where $A=k_{1}\,k_{2}^{n_{b}}\,\theta_{b}^{n_{b}},\,B=-(m_a+A_1-k_1\theta_a^{n_a})k_2^{n_b}\theta_b^{n_b},\,
C=k_1B_1\theta_{aa}^{n_{aa}}-k_2^{n_b}\theta_b^{n_b}(\theta_a^{n_a}(m_a+A_1)-\theta_{aa}^{n_{aa}}k_1),\,D=-k_2^{n_b}\theta_b^{n_b}\theta_{aa}^{n_{aa}}(A_1-k_1\theta_a^{n_a})-\theta_{aa}^{n_{aa}}(A_1B_1-k_1(B_1+m_b)\theta_a^{n_a}),$ and $E=-A_1\theta_{aa}^{n_{aa}}\theta_a^{n_a}(k_2^{n_b}\theta_b^{n_b}+B_1+m_b).$\\
The number of positive real roots of equation (\ref{4}) can be seen using Descarte's rule of sign, mentioned in below Table (\ref{T2}).
\begin{table}[H]
	\centering
	\begin{tabular}{|ccccc|c|}
		\hline
		\multicolumn{5}{|c|}{\textbf{Coefficients}}                                                                                                                              & \multirow{2}{*}{\textbf{\begin{tabular}[c]{@{}c@{}}Number of possible positive\\ real roots of $f(p_a)=0$\end{tabular}}} \\ \cline{1-5}
		\multicolumn{1}{|c|}{\textbf{A}} & \multicolumn{1}{c|}{\textbf{B}} & \multicolumn{1}{c|}{\textbf{C}} & \multicolumn{1}{c|}{\textbf{D}} & \multicolumn{1}{l|}{\textbf{E}} &                                                                                                                          \\ \hline
		\multicolumn{1}{|c|}{+}          & \multicolumn{1}{c|}{-}          & \multicolumn{1}{c|}{+}          & \multicolumn{1}{c|}{- or +}     & -                               & 3, 1                                                                                                                     \\ \hline
		\multicolumn{1}{|c|}{+}          & \multicolumn{1}{c|}{+ or -}     & \multicolumn{1}{c|}{-}          & \multicolumn{1}{c|}{+}          & -                               & 3, 1                                                                                                                     \\ \hline
		\multicolumn{1}{|c|}{+}          & \multicolumn{1}{c|}{+ or -}     & \multicolumn{1}{c|}{-}          & \multicolumn{1}{c|}{-}          & -                               & 1                                                                                                                        \\ \hline
		\multicolumn{1}{|c|}{+}          & \multicolumn{1}{c|}{+}          & \multicolumn{1}{c|}{+}          & \multicolumn{1}{c|}{- or +}     & -                               & 1                                                                                                                        \\ \hline
	\end{tabular}\caption{Number of possible positive real roots of $f(p_a)=0.$}\label{T2}
\end{table}
\textbf{Case 3:} When $n_a=2,\,n_b=1,\,n_{aa}=2$ then from (\ref{1}), we get a quintic polynomial in $p_a$
\begin{equation}\label{5}
	f(p_a)=Ap_a^5+Bp_a^4+Cp_a^3+Dp_a^2+Ep_a+F,
\end{equation}
where $A=k_{1}\,{k_{2}}^{n_{b}}\,{\theta _{b}}^{n_{b}},\,B=-{k_{2}}^{n_{b}}\,{\theta _{b}}^{n_{b}}\,\left(A_1+\mathrm{m}_a\right),\,
C=B_{1}\,k_{1}\,{\theta _{aa}}^{n_{aa}}+k_{1}\,{k_{2}}^{n_{b}}\,{\theta _{b}}^{n_{b}}\,\\ \left({\theta _{a}}^{n_{a}}+{\theta _{aa}}^{n_{aa}}\right),\,D=-A_1\,B_{1}\,{\theta _{aa}}^{n_{aa}}-{k_{2}}^{n_{b}}\,{\theta _{b}}^{n_{b}}\,\left({\theta _{a}}^{n_{a}}\,\left(A_1+\mathrm{m}_a\right)+A_1\,{\theta _{aa}}^{n_{aa}}\right)
,\\E=k_{a}\,{\theta _{a}}^{n_{a}}\,{\theta _{aa}}^{n_{aa}}\,\left(B_{1}+\mathrm{m}_b+{k_{2}}^{n_{b}}\,{\theta _{b}}^{n_{b}}\right)$ and $F=-A_1\,{\theta _{a}}^{n_{a}}\,{\theta _{aa}}^{n_{aa}}\,\left(B_{1}+\mathrm{m}_b+{k_{2}}^{n_{b}}\,{\theta _{b}}^{n_{b}}\right).$\\
The nature of curve $f$ can be observe from (\ref{3}) as 
\begin{enumerate}
	\item[(i)] $f(0)<0$ because $F<0,$
	\item[(ii)] $f\to \infty\, (-\infty)$ as $t\to \infty\,(-\infty)$ because $A>0,$
	\item[(iii)] $f(p_a)=0$ will have either unique or three or five positive real roots.
	\end{enumerate}
	 To deal with the fifth-degree polynomial with parametric coefficients is quite difficult, so we use the nullcline method to further analyze the existence of possible positive real roots. From the system (\ref{1}), we have
\begin{equation}\label{6}
	\begin{aligned}
    		&F_1(p_a,p_b):=\frac{m_{a}\ p_a^{2}\theta_{b}}{\theta_{aa}^{2}\theta_{b}+p_a^{2}\theta_{b}+p_b\theta_{aa}^{2}}+A_1-\frac{\gamma_{a}\delta_{a}p_a}{k_{a}}=0 \\
		&F_2(p_a,p_b):=\frac{m_{b}\theta_{a}^{2}}{\theta_{a}^{2}+p_a^{2}}+B_{1}-\frac{\gamma_{b}\delta_{b}p_b}{k_{b}}=0,
	\end{aligned}
\end{equation}
First to observe the nature of curve $F_2(p_a,p_b)=0,$ we reduce it in the form of
\begin{equation}\label{7}
	p_b=g_1(p_a):=\frac{k_b}{\gamma_b\delta_b}\Big\{\frac{m_{b}\theta_{a}^{2}}{(\theta_{a}^{2}+p_a^{2})}+B_{1}\Big\}\tag{$\mathscr{A}$}
\end{equation}
The following conclusion can be made from (\ref{7})
\begin{enumerate}
	\item[$\mathscr{A}$(i)] $g_1(0)=\frac{k_b}{\gamma_b\delta_b}(m_b+B_{1})>\frac{k_b}{\gamma_b\delta_b}B_1>0$ 
	\item[$\mathscr{A}$(ii)] $g_1(p_a)\to \frac{k_b}{\gamma_b\delta_b}B_1$ as $p_a\to \pm\infty$ 
	\item[$\mathscr{A}$(iii)] $g'(p_a)_1<0$, i.e, $g_1(p_a) $ is a decreasing function of $p_a.$
\end{enumerate}
The graph of $F_2(p_a,p_b)=0$ (or $p_b=g_1(p_a)$) is shown in Fig. \ref{F1} (a).\\
Now, to observe the nature of curve $F_1(p_a,p_b)=0,$ we reduce it in the form of
\begin{equation}\label{8}
 p_b=g_2(p_a):=\frac{\left(-\gamma_{a}\delta_{a}p_a^{3}+\left(m_{a}k_{a}+A_1k_{a}\right)p_a^{2}-\gamma_{a}\delta_{a}\theta_{aa}^{2}p_a+A_1k_{a}\theta_{aa}^{2}\right)\theta_{b}}{\theta_{aa}^{2}\left(p_a-\frac{A_1k_{a}}{\delta_{a}\gamma_{a}}\right)\delta_{a}\gamma_{a}}.\tag{$\mathscr{B}$}
\end{equation}
The following conclusion can be made from (\ref{8})
\begin{enumerate}
	\item[$\mathscr{B}$(i)]  $g_2(p_a)$ is a rational function and it has an asymptote at $(p_a-\frac{A_1'k_{a}}{\delta_{a}\gamma_{a}})$. 
	\item[$\mathscr{B}$(ii)]The positive $p_a-$intercept can be find with $g_2(p_a)=0,$ which implies that the positive real roots of $-\gamma_{a}\delta_{a}p_a^{3}+\left(m_{a}k_{a}+A_1k_{a}\right)p_a^{2}-\gamma_{a}\delta_{a}\theta_{aa}^{2}p_a+A_1k_{a}\theta_{aa}^{2}=0$ will provide the positive $p_a-$intercept of $g_1(p_a).$ 
	\item[$\mathscr{B}$(iii)]The $p_b$-intercept of $g_2(p_a)$ is negative.
	\item[$\mathscr{B}$(iv)]We note that for $p_a\in (0,\frac{A_1k_{a}}{\delta_{a}\gamma_{a}})$, we have $g_2(p_a)<0$ so if positive real roots of $g_2(p_a)=0$ exits then it always lie on the right-hand side of asymptote.\\
	The possibility of unique positive $p_a$-intercept of $g_2(p_a)$ is shown in Fig. \ref{F1} (b) and Fig. \ref{F1} (c). With unique positive $p_a$-intercept, there can be unique and three positive steady states, as shown in Fig. \ref{F2} (a) and Fig. \ref{F2} (b), respectively. In Fig. \ref{F1} (d), we have shown the three positive $p_a$-intercept of $g_2(p_a)$, and for this case, we obtain three and unique positive steady states, as shown in Fig. \ref{F2} (c) and Fig. \ref{F2} (d), respectively. Therefore, from the above discussion, we discard the case of five positive steady states.
\end{enumerate}
\vspace{-0.8em}
\begin{figure}[H]
 \centering
	\subfigure[]{\includegraphics[scale=0.37]{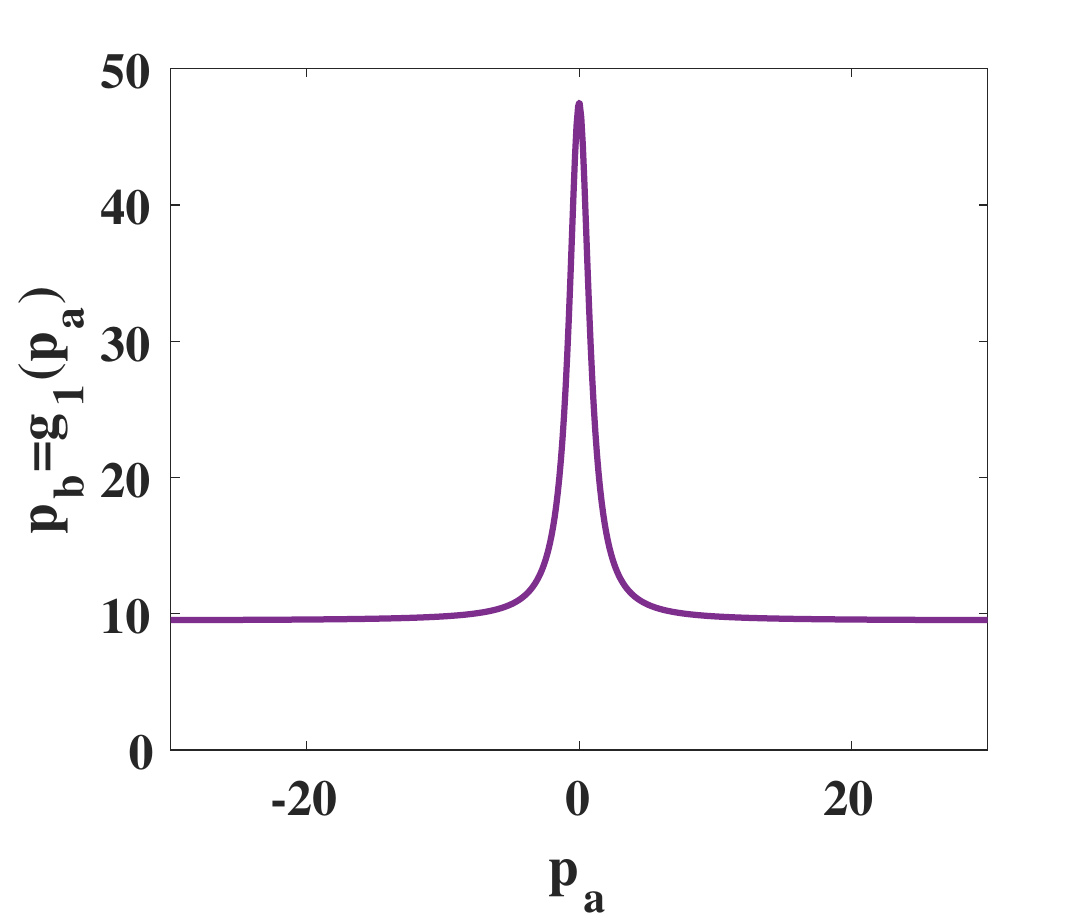}}
	\hfil
	\subfigure[]{\includegraphics[scale=0.37]{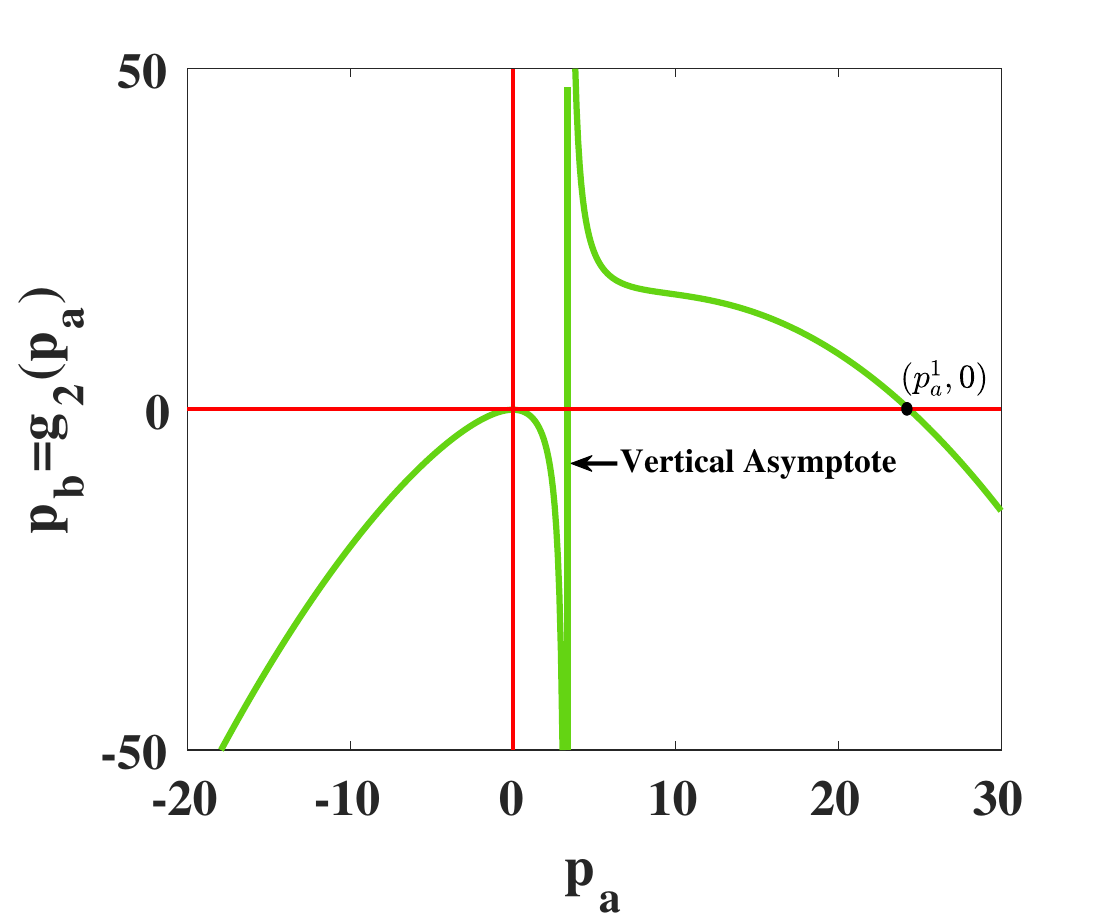}}
	\hfil
\subfigure[]{\includegraphics[scale=0.37]{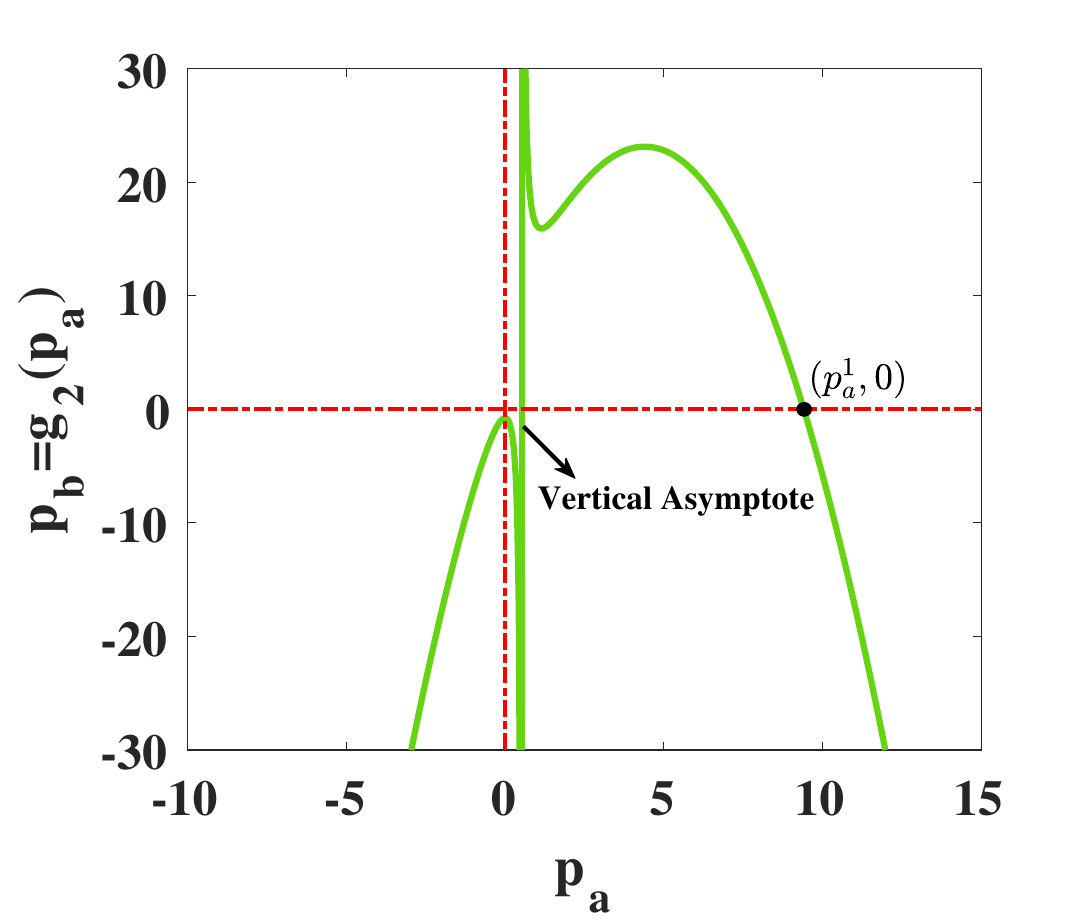}}
	\hfil
			\subfigure[]{\includegraphics[scale=0.37]{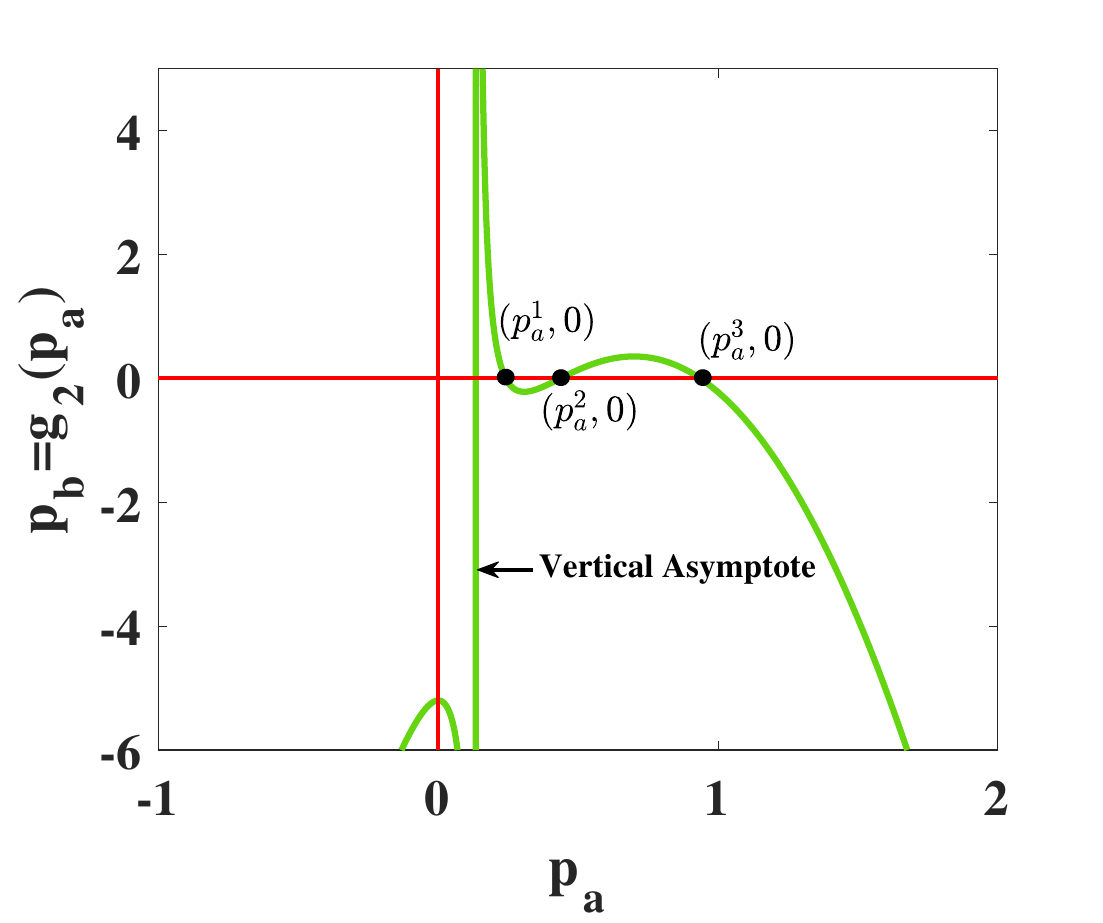}}
		\caption{For $ n_a=n_{aa}=1=2,\,n_b=1;$ (a) Plot of $g_1(p_a)$ as a decreasing function of $p_a$ (b) The graph depicting unique positive $p_a$-intercept of $g_2(p_a)$ (c) The graph depicting another possibility of unique positive $p_a$-intercept of $g_2(p_a)$ (d) The graph depicting three positive $p_a$-intercept of $g_2(p_a).$ }\label{F1}	
		\end{figure}	
\begin{figure}[H]
        \subfigure[]{\includegraphics[scale=0.38]{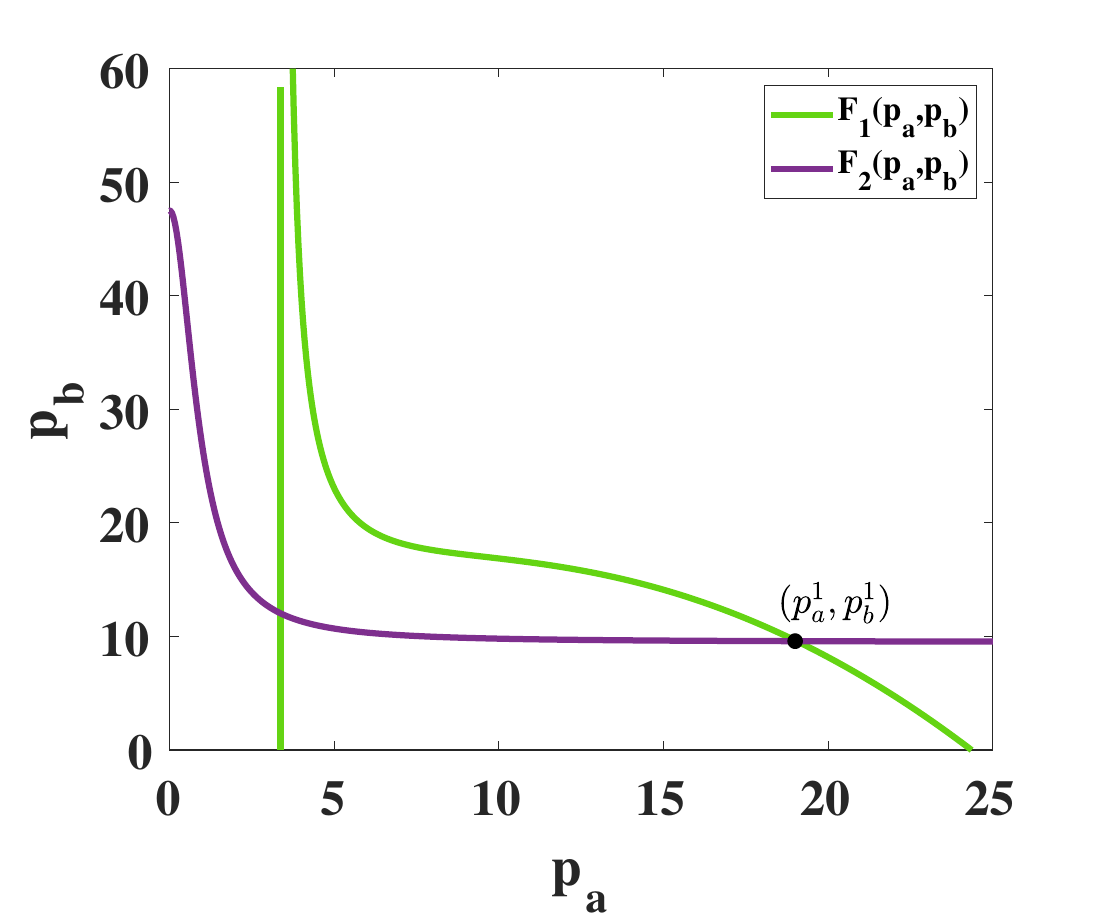}}
		\hfill
		\subfigure[]{\includegraphics[scale=0.38]{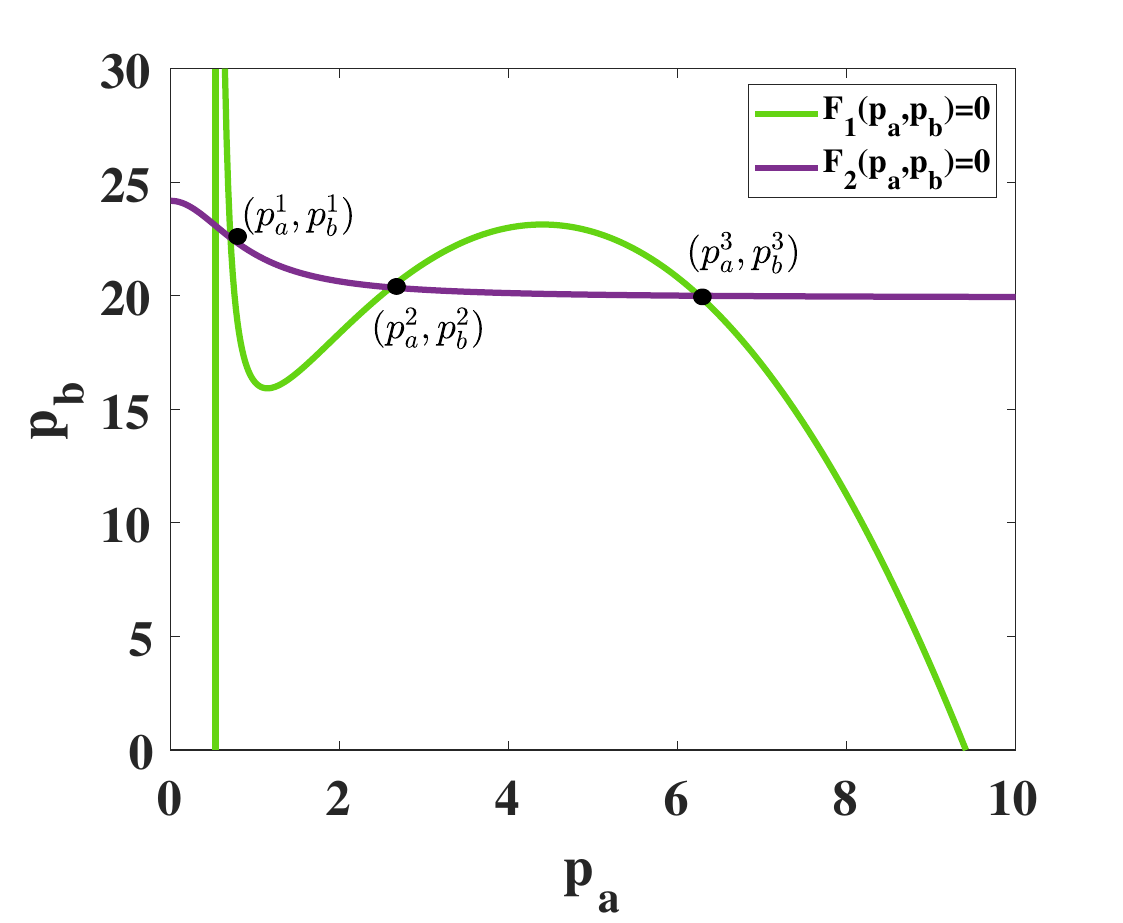}}
		\hfill
		\subfigure[]{\includegraphics[scale=0.38]{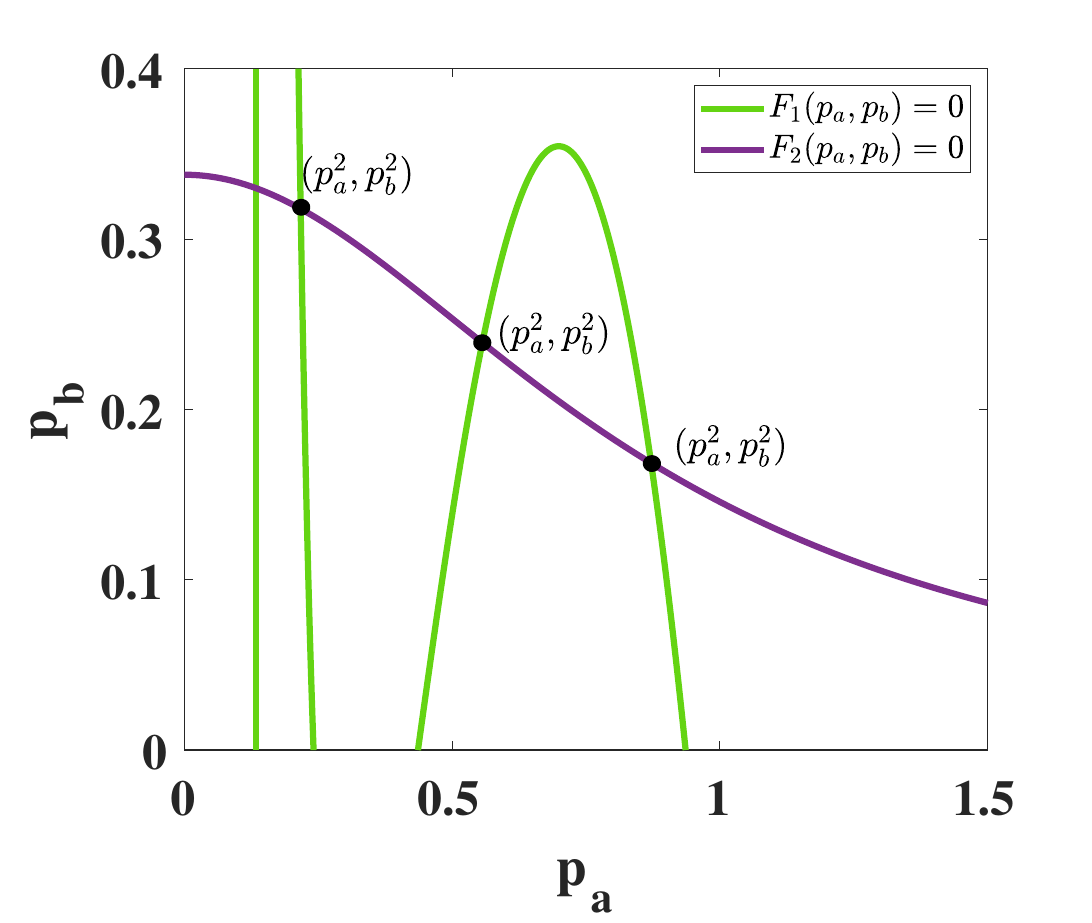}}
		\hfill
		\subfigure[]{\includegraphics[scale=0.38]{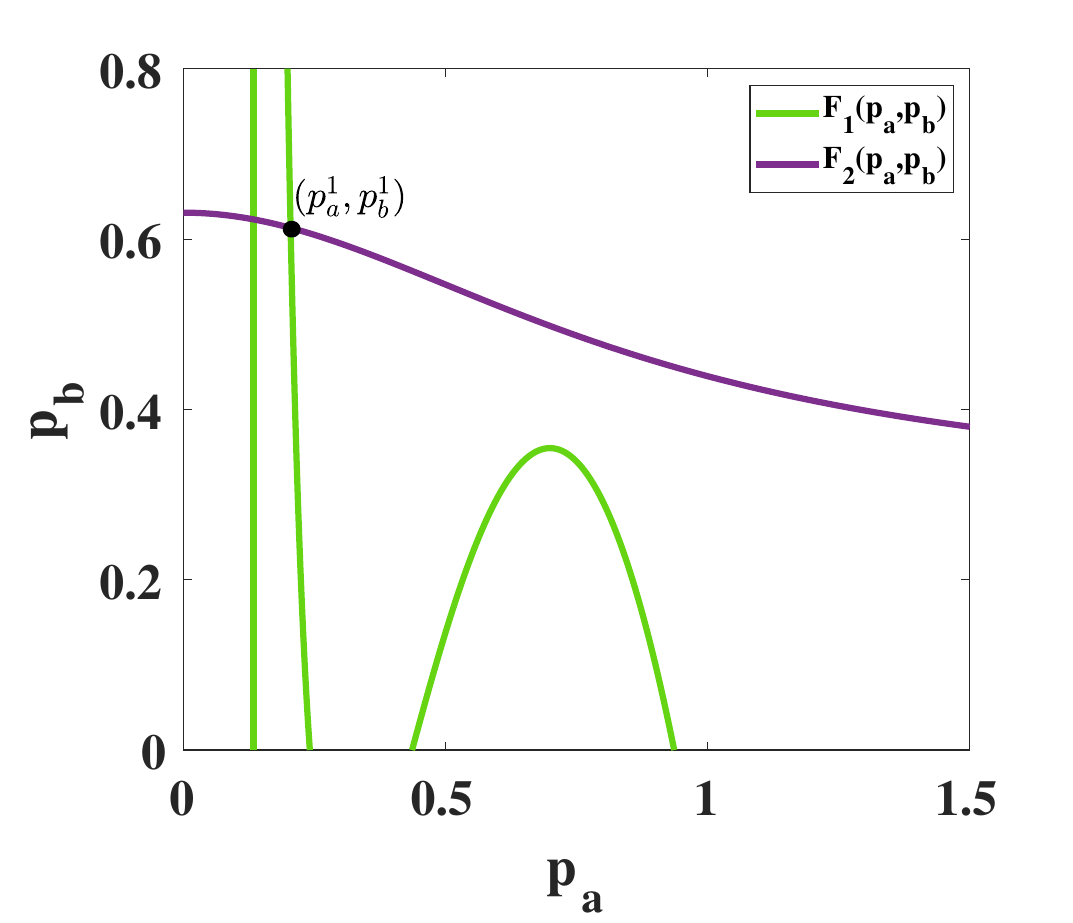}}
		\hfill
	\caption{For $ n_a=n_{aa}=1=2,\,n_b=1;$ (a) Existence of unique positive steady state corresponding to Fig. \ref{F1} (b). (b) Existence of three positive steady states corresponding to Fig. \ref{F1} (c) Existence of three positive steady states corresponding to Fig. \ref{F1} (d)  Existence of unique positive steady state corresponding to Fig. \ref{F1} (d).}
	\label{F2}
\end{figure}

 \subsubsection{Local stability of steady state}
 
\begin{theorem}\label{existence}
	The steady state $S_1^{*}$ is locally asymptotically stable if $\epsilon_1>0,\,\epsilon_3>0,\,\epsilon_4>0,$ and $\epsilon_1\epsilon_2\epsilon_3-\epsilon_1^2\epsilon_4 - \epsilon_3^2 >0,$ where $\epsilon_i,\,i=1,..,4$ are provided in the proof.
\end{theorem}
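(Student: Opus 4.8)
The plan is to linearize the system (\ref{1}) about $S_1^{*}$ and apply the Routh--Hurwitz criterion to the characteristic polynomial of the Jacobian. First I would compute the Jacobian $J$ of the four right-hand sides with respect to $(r_a,r_b,p_a,p_b)$ at $S_1^{*}$. Its structure is sparse: the mRNA equations contribute $-\gamma_a$ and $-\gamma_b$ on the diagonal and couple to the proteins only through the Hill terms, while the protein equations are linear, giving the entries $k_a,k_b$ and $-\delta_a,-\delta_b$. Writing $H_1=\partial_{p_a}(\text{activation term})$, $H_2=\partial_{p_b}(\text{activation term})$ and $H_3=\partial_{p_a}(\text{inhibition term})$, each evaluated at $S_1^{*}$ using the steady-state relations obtained above, the only nonlinear off-diagonal entries are $H_1,H_2$ in the $r_a$-row and $H_3$ in the $r_b$-row.

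Next I would form $p(\lambda)=\det(\lambda I-J)$, a quartic since the system is four-dimensional. The sparsity makes the determinant collapse cleanly: it equals the product of the ``$a$-block'' factor $\bigl((\gamma_a+\lambda)(\delta_a+\lambda)-k_aH_1\bigr)$ and the ``$b$-block'' factor $(\gamma_b+\lambda)(\delta_b+\lambda)$, plus a single coupling term $k_ak_bH_2H_3$ coming from the cross-inhibition. Expanding and collecting powers of $\lambda$ gives $p(\lambda)=\lambda^4+\epsilon_1\lambda^3+\epsilon_2\lambda^2+\epsilon_3\lambda+\epsilon_4$, with $\epsilon_1=\gamma_a+\delta_a+\gamma_b+\delta_b$ and $\epsilon_2,\epsilon_3,\epsilon_4$ the explicit symmetric combinations of the rates and of $H_1,H_2,H_3$; in particular $\epsilon_4=(\gamma_a\delta_a-k_aH_1)\gamma_b\delta_b+k_ak_bH_2H_3$. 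These four coefficients are the $\epsilon_i$ referenced in the statement.

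Finally I would invoke the Routh--Hurwitz criterion for a degree-four polynomial: all roots of $p(\lambda)$ lie in the open left half-plane, so $S_1^{*}$ is locally asymptotically stable, precisely when $\epsilon_1>0$, $\epsilon_3>0$, $\epsilon_4>0$ and $\epsilon_1\epsilon_2\epsilon_3-\epsilon_1^2\epsilon_4-\epsilon_3^2>0$, which are exactly the hypotheses. One checks that these four inequalities force the remaining Hurwitz minor $\epsilon_1\epsilon_2-\epsilon_3>0$ as well, so the reduced set is complete. Observe also that $\epsilon_1>0$ holds automatically since all rate parameters are positive, so the genuine restrictions are the last three.

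The step I expect to be the main obstacle is the sign bookkeeping of the Hill derivatives together with the determinant expansion. By the network topology $H_1>0$ (self-activation of $a$) while $H_2<0$ and $H_3<0$ (the two repressions), so the coupling $k_ak_bH_2H_3>0$; however $\gamma_a\delta_a-k_aH_1$ can take either sign depending on the strength of the autoloop. This is exactly why $\epsilon_4>0$ and the Hurwitz-determinant inequality cannot be guaranteed a priori and must be retained as hypotheses, and it is where the algebra has to be handled carefully.
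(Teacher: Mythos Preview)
Your approach is exactly the paper's: linearize at $S_1^{*}$, write the quartic characteristic polynomial $\lambda^4+\epsilon_1\lambda^3+\epsilon_2\lambda^2+\epsilon_3\lambda+\epsilon_4$, and invoke the Routh--Hurwitz criterion. The paper records the Jacobian explicitly (its $X,Y,Z$ are your $H_1,H_2,H_3$) and the resulting $\epsilon_i$, then cites Routh--Hurwitz; your block-plus-coupling description of the determinant and the observation that $\epsilon_1\epsilon_2-\epsilon_3>0$ is forced by the stated four inequalities are useful additions not spelled out in the paper.

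One sign slip to correct: the coupling term in $\det(\lambda I-J)$ enters with a \emph{minus}, so
\[
\epsilon_4=(\gamma_a\delta_a-k_aH_1)\gamma_b\delta_b-k_ak_bH_2H_3,
\]
matching the paper's $\epsilon_4=\gamma_a\gamma_b\delta_a\delta_b-k_a\gamma_b\delta_bX-k_ak_bYZ$. Since $H_2,H_3<0$, the mutual-repression loop therefore contributes \emph{negatively} to $\epsilon_4$, not positively as your last paragraph states; this is consistent with the loop being an effective positive feedback and hence destabilizing. Your overall conclusion---that $\epsilon_4>0$ and the Hurwitz determinant condition cannot be guaranteed a priori and must be retained as hypotheses---still stands, just with the two effects (autoloop and cross-repression) both pushing $\epsilon_4$ downward rather than in opposite directions.
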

\begin{proof}At $S_1^{*},$ the Jacobian matrix of the model system (\ref{1}) is given by
	\begin{center}
	$J_{S_1^*}=
	\left[\begin{array}{cccc} -\gamma_a & 0 & X&Y\\ 0 & -\gamma_b & Z&0\\ k_a &0  & -\delta_a&0\\0&-k_b&0&-\delta_b \end{array}\right].$
\end{center}
where $X=\frac{\mathrm{m}_a\,n_{aa}\,{\mathrm{p_a}}^{n_{aa}-1}\,{\theta _{b}}^{n_{b}}\,{\theta _{aa}}^{n_{aa}}\,\left({\mathrm{p_b}}^{n_{b}}+{\theta _{b}}^{n_{b}}\right)}{{\left({\mathrm{p_a}}^{n_{aa}}\,{\theta _{b}}^{n_{b}}+{\mathrm{p_b}}^{n_{b}}\,{\theta _{aa}}^{n_{aa}}+{\theta _{b}}^{n_{b}}\,{\theta _{aa}}^{n_{aa}}\right)}^2}
,\,Y=-\frac{\mathrm{m}_a\,n_{b}\,{\mathrm{p_a}}^{n_{aa}}\,{\mathrm{p_b}}^{n_{b}-1}\,{\theta _{b}}^{n_{b}}\,{\theta _{aa}}^{n_{aa}}}{{\left({\mathrm{p_a}}^{n_{aa}}\,{\theta _{b}}^{n_{b}}+{\mathrm{p_b}}^{n_{b}}\,{\theta _{aa}}^{n_{aa}}+{\theta _{b}}^{n_{b}}\,{\theta _{aa}}^{n_{aa}}\right)}^2},$ and $Z=-\frac{\mathrm{m}_b\,n_{a}\,{\mathrm{p_a}}^{n_{a}-1}\,{\theta _{a}}^{n_{a}}}{{\left({\mathrm{p_a}}^{n_{a}}+{\theta _{a}}^{n_{a}}\right)}^2}.$
	The characteristic equation of $J_{S_1^*}$ is given as
	\begin{equation}\label{9}	\lambda^4+\epsilon_1\lambda^3+\epsilon_2\lambda^2+\epsilon_3\lambda+\epsilon_4=0.
	\end{equation}
	Here, $\epsilon_1=\gamma_a+\gamma_b+\delta_a+\delta_b,\epsilon_2=\gamma_a\gamma_b+(\gamma_a+\gamma_b)(\delta_a+\delta_b)+\delta_a\delta_b-k_aX,\,\epsilon_3=\delta_a\delta_b(\gamma_a+\gamma_b)+(\delta_a+\delta_b)(\gamma_a\gamma_b)-k_a(\delta_b+\gamma_b)X,$ and  $\epsilon_4=\gamma_a\gamma_b\delta_a\delta_b-k_ak_bYZ-k_a\gamma_b\delta_bX.$
	By the Routh-Hurwitz criteria, all the roots of (\ref{9}) have negative real parts provided $\epsilon_1>0,\,\epsilon_3>0,\,\epsilon_4>0,$ and $\epsilon_1\epsilon_2\epsilon_3-\epsilon_1^2\epsilon_4 - \epsilon_3^2 >0.$
\end{proof}
\subsubsection{Saddle Node Bifurcation}\label{sn_OR}
We derive the transversality condition for saddle-node bifurcation considering $\theta_b$ as a bifurcation parameter for  system (\ref{1}). Let $\theta_b^{SN}$ is the threshold value of $\theta_b$ for the occurrence of saddle-node bifurcation.The system (\ref{1}) has two steady states ($S_2^*,\,S_3^*$) for $\theta_b>\theta_b^{SN} (\theta_b<\theta_b^{SN}),$ collide ($S_2^*=S_3^*$) at $\theta_b=\theta_b^{SN}$  and then disappear when $\theta_b<\theta_b^{SN} (\theta_b>\theta_b^{SN})$. The point of collision is represented as $S_{SN}^*=(r_a^{{SN}^*},r_b^{{SN}^*},p_a^{{SN}^*},p_b^{{SN}^*}).$ Define $f=(f_1,f_2,f_3,f_4)^T,$ where $f_1,\,f_2 ,\,f_3,$ and $f_4$ are defined as
	\begin{equation*}
	\begin{aligned}
		&f_1  =  m_a\frac{(\frac{p_a}{\theta_{aa}})^{n_{aa}}}{1+(\frac{p_a}{\theta_{aa}})^{n_{aa}}+(\frac{p_b}{\theta_b})^{n_{b}}}-\gamma_a r_a+A_1, \\
		&f_2 =m_b\frac{1}{1+(\frac{p_a}{\theta_a})^{n_{a}}}-\gamma_b r_b+B_1, \\
		&f_3 =k_ar_a-\delta_ap_a,\\
		&f_4 =k_br_b-\delta_bp_b,
	\end{aligned}
\end{equation*}
Let the Jacobian matrix $J=Df(S_{SN}^*,\theta_b^{SN})$ has a simple eigenvalue $\lambda=0$ with
eigenvector $v=(v_1,v_2,v_3,v_4)^T$, and the transpose of the Jacobian matrix $J^T$ has
an eigenvector $w=(w_1,w_2,w_3,w_4)^T$ to the eigenvalue $\lambda=0.$
Then the model system (\ref{1})
experiences a saddle-node bifurcation at $(S_{SN}^*,\theta_b^{SN})$, if the following transversality conditions \cite{perko_2013} are satisfied:
\begin{equation*}
	w^Tf_{\theta_b}(S_{SN}^*,\theta_b^{SN})=w_1\frac{m_a n_b \theta_{aa}^{n_{aa}}\theta_b^{n_b-1}{p_b^{n_b}}^{{SN}^*}{p_a^{n_{aa}}}^{{SN}^*}}{(\mathcal{X})^2}\neq0. \tag{$\mathcal{A}_1$}\end{equation*}
and \begin{align*}
	w^T[D^2f(S_{SN}^*,\theta_b^{SN})(v,v)]=&\frac{-w_1m_a \theta_{aa}^{n_{aa}} \theta_b^{n_b}}{\mathcal{X}^3}\bigg\{2v_3^2 n_{aa}\theta_b^{n_b}{p_a^{n_{aa}-1}}^{{SN}^*}(\theta_b^{n_b}+{p_b^{n_b}}^{{SN}^*})+v_4^2n_b{p_a^{n_{aa}}}^{{SN}^*}\\&\Big(\theta_b^{n_b}(\theta_{aa}^{n_{aa}}+{p_{a}^{n_{aa}}}^{{SN}^*}){p_b^{n_b-2}}^{{SN}^*}(n_b-1)-\theta_{aa}^{n_{aa}}{p_b^{2n_b-2}}^{{SN}^*}(n_b+1)\Big)\\&-2v_3v_4n_b{p_b^{n_b-1}}^{{SN}^*}\Big(\theta_b^{n_b}{p_a^{n_{aa}}}^{{SN}^*}-\big(\theta_b^{n_b}+{p_b^{n_b}}^{{SN}^*}\big)\theta_{aa}^{n_{aa}}\Big)\bigg\}
\\&	-\frac{w_2v_3^2m_b n_a \theta_a^{n_a}}{(\theta_a^{n_a}+{p_a^{n_a}}^{{SN}^*})^3} \tag{$\mathcal{A}_2$}\bigg\{{p_a^{n_a-2}}^{{SN}^*}\theta_a^{n_a}(n_a-1)-(n_a+1){p_a^{2n_a-2}}^{{SN}^*}\bigg\}	\\&\neq 0,
\end{align*} 
 where $\mathcal{X}=\theta_{aa}^{n_{aa}}\theta_b^{n_b}+{p_a^{n_{aa}}}^{{SN}^*}\theta_b^{n_b}+\theta_{aa}^{n_{aa}}{p_b^{n_b}}^{{SN}^*}.$
\subsubsection{Numerical simulation and discussion corresponding to model (\ref{1})}
The purpose of this section is to provide validation for the theoretical results that were covered in Subsection \ref{2.1}. We have talked about the system's behavior in terms of monostability and bi-stability with fictitious parameters. 
\begin{example} $n_{aa}=n_a=n_b=1$ (case 1) and other parameters are chosen as 
$\delta_{a} = 0.9,\, \delta_{b} = 0.001,\,  \gamma_{a} = 0.99,\,  \gamma_{b} = 0.046,\,  k_{a} = 4.74,\,  k_{b} = 0.017,\,  m_b = 4.6,\,  m_a = 18.6,\,  \theta_{a} = 1.8,\,  A_1 = 1.1,\,  B_1 = 0.01,\, \theta_{aa} = 2.6,\,  
\theta_b = (0.1,6.14).$ In this range of $\theta_b,$ there is unique ($S_1^*$) steady state in $\theta_b^1$, three ($S_1^*,\,S_2^*,\,S_3^*$) in $\theta_b^2$ and unique ($S_3^*$) in $\theta_b^3,$ such that $\theta_b=\theta_b^1\cup\theta_b^2\cup\theta_b^2$ where $\theta_b^1=(0.1,{\theta_b^{SN}}^1),\,\theta_b^2=[{\theta_b^{SN}}^1,{\theta_b^{SN}}^2]$ and $\theta_b^2=({\theta_b^{SN}}^2,6.14).$ The steady states $S_1^*,\,S_3^*$ are stable and ($S_2^*$) is unstable. The corresponding bifurcation diagram is shown in Fig. \ref{case1traj_new} (a).\\	
To ensure the stability of $S_1^*,$ we chosen $\theta_b=1.364\in\theta_b^1$ and the coefficients of characteristic equation at $S_1^*=(1.2963,21.0814,6.8273,358.39)$ are  calculated  as $\epsilon_1=1.937>0,\,\epsilon_2=0.854>0,\,\epsilon_3=0.0360>0,\,\epsilon_4= 0.00003>0,$ and $\epsilon_1\epsilon_2\epsilon_3-\epsilon_1^2\epsilon_4 - \epsilon_3^2=0.0582>0.$ Therefore, from Theorem \ref{existence}, we conclude that $S_1^*$ is locally asymptotically stable. We draw the phase portrait for the considered parameters, shown in Fig. \ref{case1traj_new} (b). We can see that all trajectories starting from any initial points are converging to $S_1^*.$\\
Further, we chosen $\theta_b=3.051\in\theta_b^2$ and the coefficients of characteristic equation at $S_1^*=(1.9213,15.3197,10.1187,260.4357)$ are  calculated  as $\epsilon_1=1.9370>0,\,\epsilon_2=0.6204>0,\,\epsilon_3=0.0251>0,\,\epsilon_4=0.0000108>0,$ and $\epsilon_1\epsilon_2\epsilon_3-\epsilon_1^2\epsilon_4 - \epsilon_3^2=0.0295>0.$ The coefficients of characteristic equation at $S_2^*=(4.5514,7.2020,23.9709,122.4344)$ are  calculated  as $\epsilon_1=1.937>0,\,\epsilon_2=0.4297>0,\,\epsilon_3=0.0161>0,\,\epsilon_4=-0.0000066<0,$ and $\epsilon_1\epsilon_2\epsilon_3-\epsilon_1^2\epsilon_4 - \epsilon_3^2=0.0132>0.$ The coefficients of characteristic equation at $S_3^*=(11.9929,2.9882,63.1625,50.7998)$ are  calculated  as $\epsilon_1=1.937>0,\,\epsilon_2=0.6397>0,\,\epsilon_3= 0.0260>0,\,\epsilon_4=0.000012>0,$ and $\epsilon_1\epsilon_2\epsilon_3-\epsilon_1^2\epsilon_4 - \epsilon_3^2=0.0315>0.$  Therefore, from Theorem \ref{existence}, we conclude that $S_1^*,\,S_3^*$ is locally asymptotically stable and $S_2^*$ is unstable.  We draw the phase portrait for the considered parameters, shown in Fig. \ref{case1traj_new} (c). We can see that all trajectories are converging to $S_1^*,\,S_3^*,$ and nearby trajectories of $S_2^*$ are moving away from $S_2^*.$ This shows the bistable behaviour of the system.\\
We chosen $\theta_b=5.795\in\theta_b^3,$
the coefficients of characteristic equation at $S_3^*=(16.3872,$\\$2.2604,86.3061, 38.4265)$ are  calculated  as $\epsilon_1=1.937>0,\,\epsilon_2=0.8246>0,\,\epsilon_3= 0.0347>0,\,\epsilon_4=0.00002835>0,$ and $\epsilon_1\epsilon_2\epsilon_3-\epsilon_1^2\epsilon_4 - \epsilon_3^2=0.0541>0.$ Therefore, from Theorem \ref{existence}, we conclude that $S_3^*$ is locally asymptotically stable. We draw the phase portrait for the chosen parameters, shown in Fig. \ref{case1traj_new} (d). We can see that all trajectories starting from any initial points are converging to $S_3^*.$\\
For this case, again we choose the parameters as $\delta_a = 0.7,\,\delta_b = 0.1,\, \gamma_a = 0.67,\, \gamma_b = 0.265,\, k_a = 0.9,\, k_b = 0.06,\, m_b = 0.9,\, m_a = 9,\, \theta_a = 0.86,\, A_1 = 0.8,\, B_1 = 0.9,\, \theta_{aa} = 0.86,\,\theta_b = (0.00000001,0.4).$ For this set of parameters, we obtain a unique stable steady state as discussed in Case 1. The corresponding bifurcation plot is shown in Fig. \ref{case1traj_new} (e). To ensure the stability of $S_1^*,$  we choose $\theta_b=0.1$ from the range of $\theta_b.$  The coefficients of characteristic equation at $S_1^*$ are  calculated as $\epsilon_1=1.735>0,\,\epsilon_2=0.738>0,\,\epsilon_3=0.11349>0,\,\epsilon_4=0.0049>0,$ and $\epsilon_1\epsilon_2\epsilon_3-\epsilon_1^2\epsilon_4 - \epsilon_3^2=0.1178>0.$ Therefore, from Theorem \ref{existence}, we conclude that $S_1^*$ is locally asymptotically stable. We draw the phase portrait for the considered $\theta_b$, shown in Fig. \ref{case1traj_new} (f). We can see that all trajectories starting from any initial points are converging to $S_1^*=(3.61984,3.92592, 4.65408,  2.35555).$\\
In this example, we also discuss the case of saddle-node bifurcation. We have seen that the system (\ref{1}) has two steady states $S_2^*,\,S_3^*$ in $\theta_b^2$ such that as the value of $\theta_b$ decreases, the two steady states collide at $\theta_b^{{{SN}^1}}=2.714726411506651$ and denoted as $S_{{SN}^1}^*=(8.3509,4.1491,43.9812,70.5353).$ The Jacobian matrix $J=Df(S_{{SN}^1}^*,\theta_b^{{{SN}^1}})$ has a simple eigenvalue $\lambda=0$ and  $v=(-0.1066,0.0482,-0.5612,0.8194)^T,$\\$w=(-0.0156,0.3466,-0.0033,0.9379)^T$, are the eigenvectors of $J$ and $J^T,$ respectively. Both the tranversality conditions $w^Tf_{\theta_b}(S_{{SN}^1}^*,\theta_b^{{{SN}^1}})=-0.0244\neq 0$ and \\ $	w^T[D^2f(S_{{SN}^1}^*,\theta_b^{{{SN}^1}})(v,v)]=0.0000124\neq 0$ are satisfied. Hence, the system (1) experiences saddle-node bifurcation at $\theta_b=\theta_b^{{{SN}^1}}.$ Also, the system (\ref{1}) has two steady states $S_1^*,\,S_2^*$ in $\theta_b^2$ such that as the value of $\theta_b$ increases, the two steady states collide at $\theta_b^{{{SN}^2}}=3.345369897729079$ and denoted as $S_{{SN}^2}^*=(2.7005,11.4515,14.2227,194.6749).$ The Jacobian matrix $J=Df(S_{{SN}^2}^*,\theta_b^{{{SN}^2}})$ has a simple eigenvalue $\lambda=0$ and  $v=(-0.0158,0.0585,-0.0835,0.9947)^T,\,w=(-0.1278,0.3437,-0.0267,0.93)^T$, are the eigenvectors of $J$ and $J^T,$ respectively. Both the tranversality conditions $w^Tf_{\theta_b}(S_{{SN}^2}^*,\theta_b^{{{SN}^2}})=-0.0541\neq 0$ and  $	w^T[D^2f(S_{{SN}^2}^*,\theta_b^{{{SN}^2}})(v,v)]=-0.000006817\neq 0$ are satisfied. Hence, the system (\ref{1}) experiences saddle-node bifurcation at $\theta_b=\theta_b^{{{SN}^2}}.$
\begin{figure}[H]
	\subfigure[]{\includegraphics[scale=0.27]{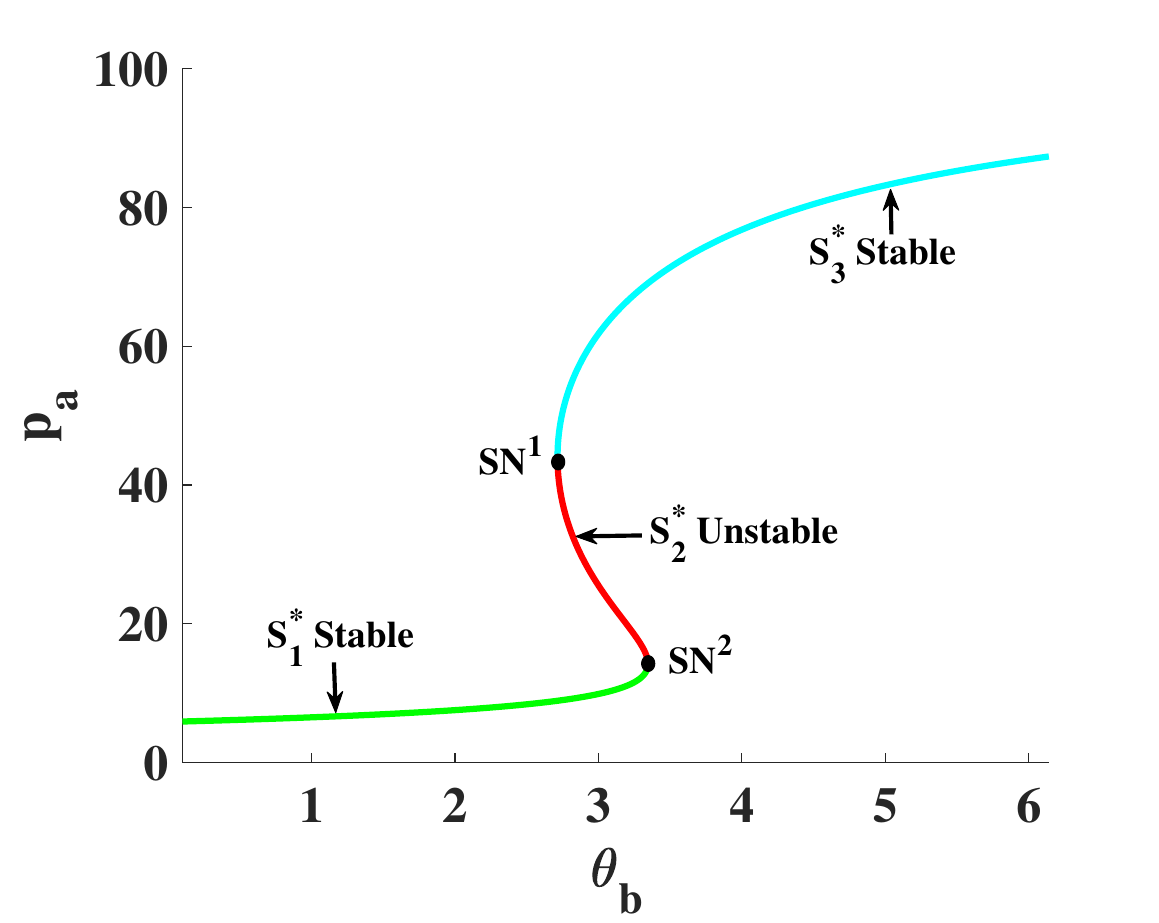}}
\hfill
\subfigure[]{\includegraphics[scale=0.27]{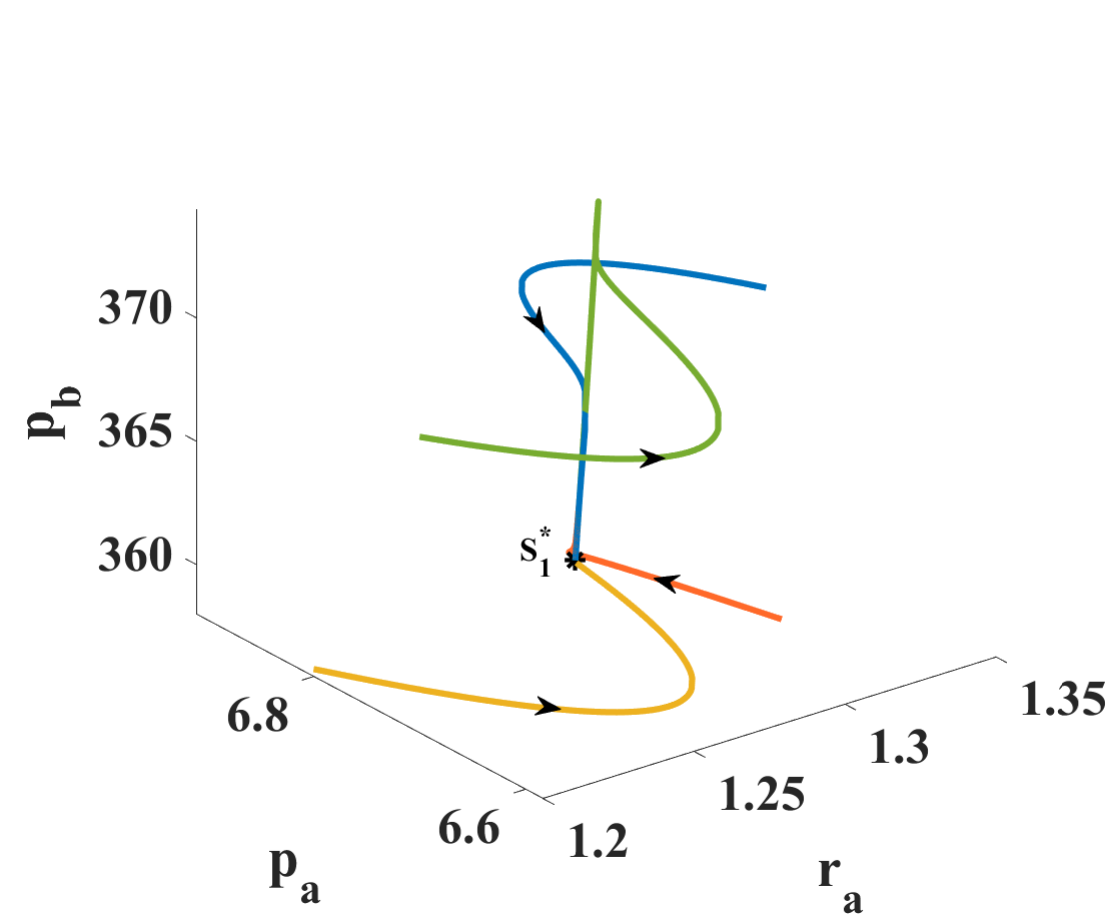}}
	\hfill
	\subfigure[]{\includegraphics[scale=0.27]{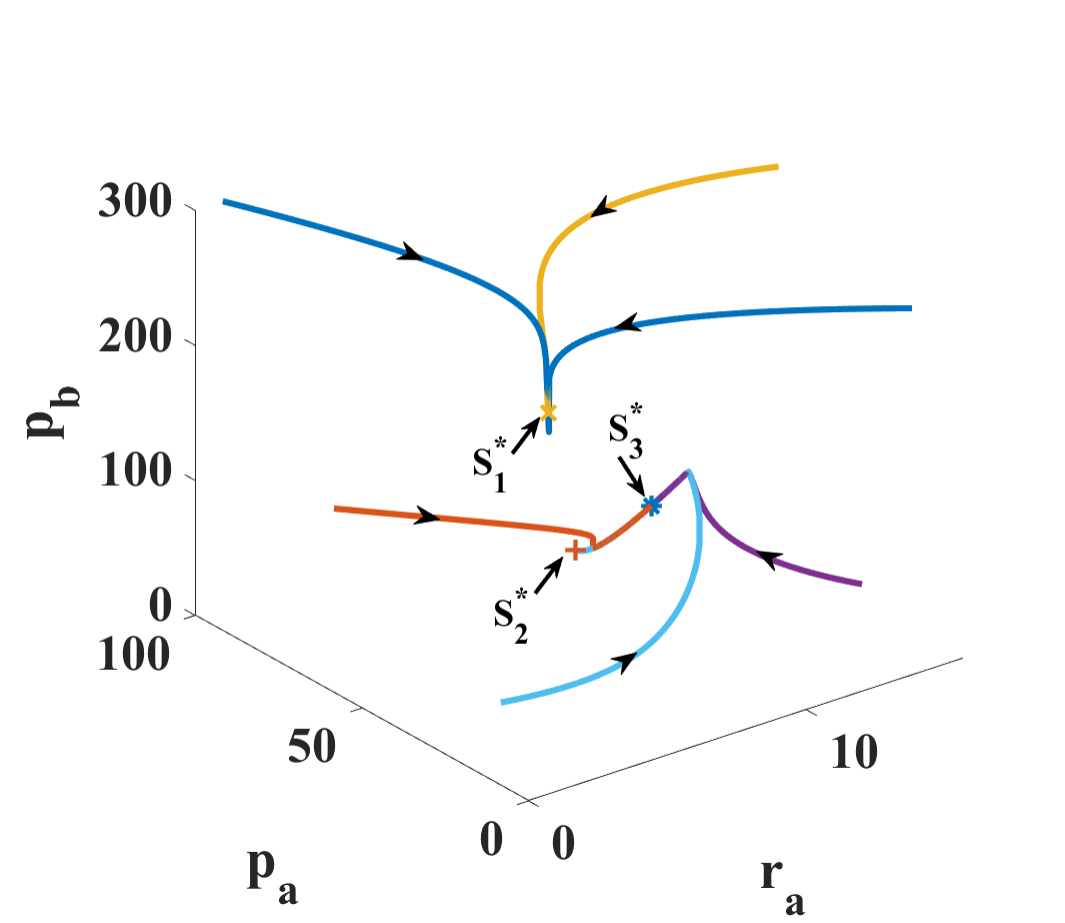}}
	\hfill
	\subfigure[]{\includegraphics[scale=0.27]{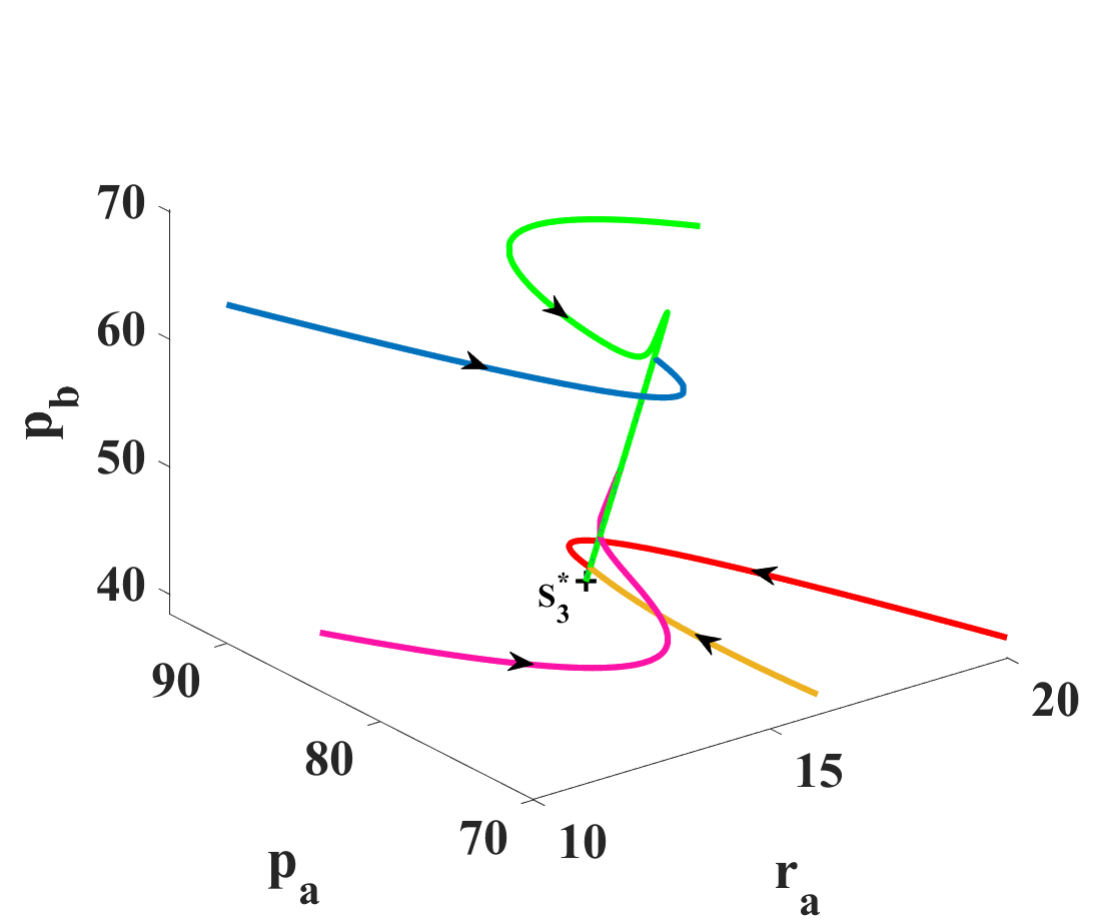}}
	\hfill
 \subfigure[]{\includegraphics[scale=0.27]{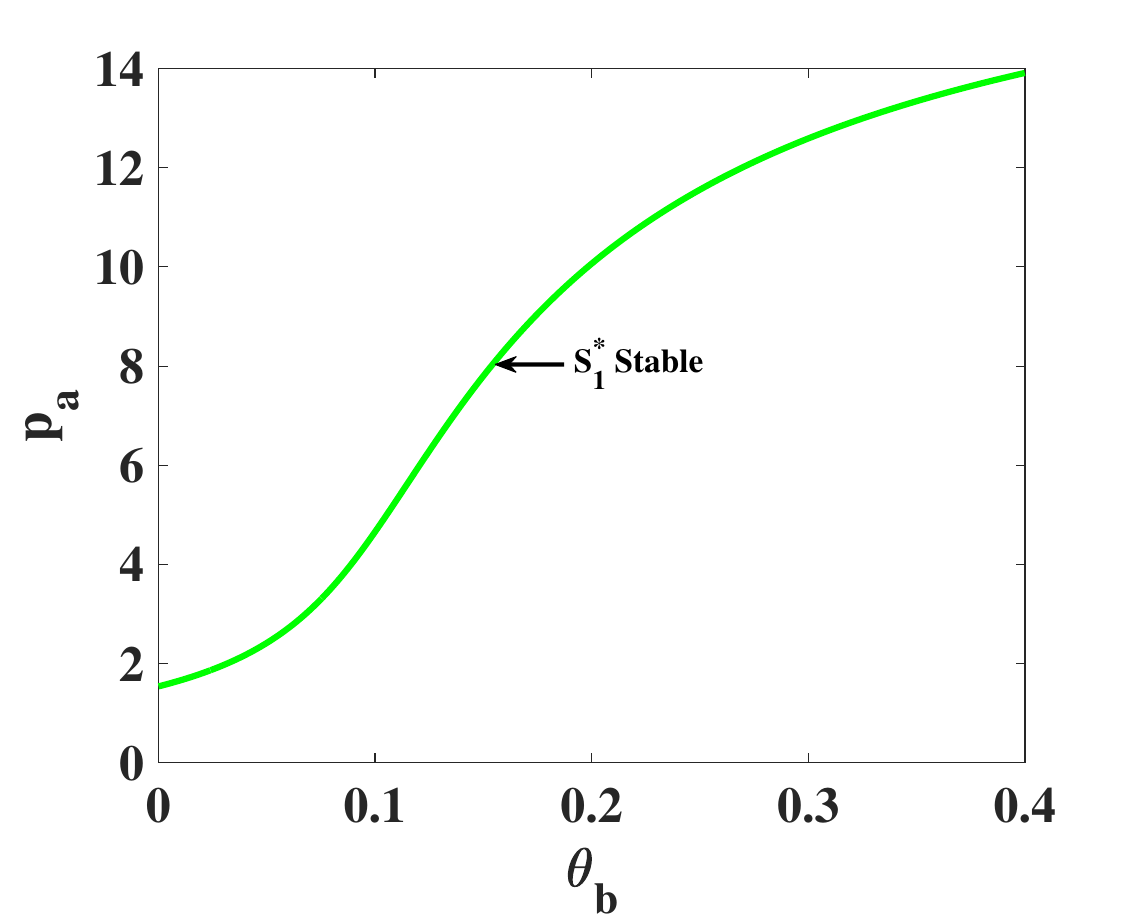}}
		\hfill
		\subfigure[]{\includegraphics[scale=0.27]{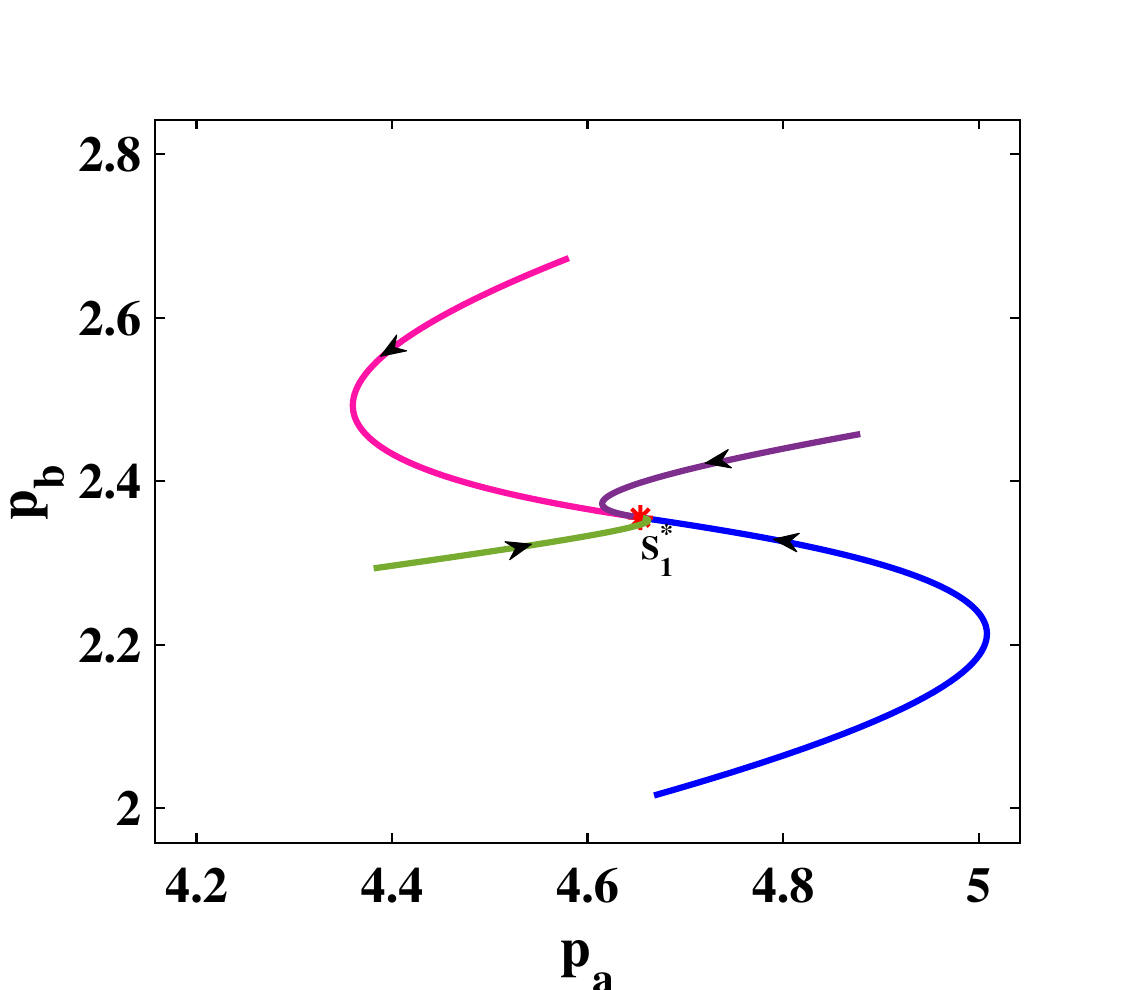}}
		\hfill
	\caption{For $n_a=1,\,n_b=1,\,n_{aa}=1$; (a) Bifurcation plot exhibiting hysteresis effect with three steady states (b) Phase portrait at $\theta_b=1.364$ (c) Phase portrait at $\theta_b=3.051$ depicting bistability (d)  Phase portrait at $\theta_b=5.795$ (e) Bifurcation plot with unique stable steady state (f) Phase portrait at $\theta_b=0.1.$}
	\label{case1traj_new}
\end{figure}
	\end{example}
\begin{example}$n_a=n_b=1,\,n_{aa}=2$ (case 2), and the other parameters are chosen as: 
	$\delta_a = 0.9,\,  \delta_b = 0.04,\,  \gamma_a = 0.604,\,  \gamma_b = 0.3,\,  k_a = 0.525,\,  k_b = 0.9,\,  m_b = 0.01,\,   m_a = 15,\,  \theta_a = 0.4,\,  A_1 =0.65,\,  B_1 =0.785,\,  \theta_{aa} =0.7,\,  
	\theta_b =(0.1,0.99).$ In this range of $\theta_b,$ there is unique ($S_1^*$) steady state in $\theta_b^1$, three ($S_1^*,\,S_2^*,\,S_3^*$) in $\theta_b^2$ and unique ($S_3^*$) in $\theta_b^3,$ such that $\theta_b=\theta_b^1\cup\theta_b^2\cup\theta_b^2$ where $\theta_b^1=(0.1,{\theta_b^{SN}}^1),\,\theta_b^2=[{\theta_b^{SN}}^1,{\theta_b^{SN}}^2]$ and $\theta_b^2=({\theta_b^{SN}}^2,0.99).$ The steady states $S_1^*,\,S_3^*$ are stable and ($S_2^*$) is unstable. The corresponding bifurcation diagram is shown in Fig. \ref{F3} (a).\\
To ensure the stability of $S_1^*,$ we chosen $\theta_b=0.3\in\theta_b^1$ and the coefficients of characteristic equation at $S_1^*=(1.2011,2.6288,0.7006,59.1476)$ are  calculated  as $\epsilon_1=1.8440>0,\,\epsilon_2=0.9545>0,\,\epsilon_3=0.1646>0,\,\epsilon_4= 0.0052>0,$ and $\epsilon_1\epsilon_2\epsilon_3-\epsilon_1^2\epsilon_4 - \epsilon_3^2=0.2450>0.$ Therefore, from Theorem \ref{existence}, we conclude that $S_1^*$ is locally asymptotically stable. We draw the phase portrait for the considered parameters, shown in Fig. \ref{F3} (b). We can see that all trajectories starting from any initial points are converging to $S_1^*.$ \\
Further, we chosen $\theta_b=0.7\in\theta_b^2$ and the coefficients of characteristic equation at $S_1^*=(1.5546,2.6269,0.9069,59.1046)$ are  calculated  as $\epsilon_1=1.8440>0,\,\epsilon_2=0.7388>0,\,\epsilon_3=0.0913>0,\,\epsilon_4=0.0026>0,$ and $\epsilon_1\epsilon_2\epsilon_3-\epsilon_1^2\epsilon_4 - \epsilon_3^2=0.1073
>0.$ The coefficients of characteristic equation at $S_2^*=(4.2128,2.6213,2.4575,58.98)$ are  calculated  as $\epsilon_1=1.8440>0,\,\epsilon_2=0.3597>0,\,\epsilon_3=-0.0376<0,\,\epsilon_4=-0.0020<0,$ and $\epsilon_1\epsilon_2\epsilon_3-\epsilon_1^2\epsilon_4 - \epsilon_3^2=-0.0196<0.$ The coefficients of characteristic equation at $S_3^*=(20.1513,2.6178,11.7549,58.8997)$ are  calculated  as $\epsilon_1=1.844>0,\,\epsilon_2=0.8283>0,\,\epsilon_3= 0.1217>0,\,\epsilon_4=0.0037>0,$ and $\epsilon_1\epsilon_2\epsilon_3-\epsilon_1^2\epsilon_4 - \epsilon_3^2=0.1587>0.$  Therefore, from Theorem \ref{existence}, we conclude that $S_1^*,\,S_3^*$ is locally asymptotically stable and $S_2^*$ is unstable.  We draw the phase portrait for the considered parameters, shown in Fig. \ref{F3} (c). We can see that all trajectories are converging to $S_1^*,\,S_3^*,$ and nearby trajectories of $S_2^*$ are moving away from $S_2^*.$ This shows the bistable behavior of the system.\\
We chosen $\theta_b=0.9\in\theta_b^3,$
the coefficients of characteristic equation at $S_3^*=(21.7247,2.6177,$\\$12.6727, 58.8979)$ are  calculated  as $\epsilon_1=1.8440>0,\,\epsilon_2=0.8928>0,\,\epsilon_3= 0.1437>0,\,\epsilon_4=0.0044>0,$ and $\epsilon_1\epsilon_2\epsilon_3-\epsilon_1^2\epsilon_4 - \epsilon_3^2=0.2008>0.$ Therefore, from Theorem \ref{existence}, we conclude that $S_3^*$ is locally asymptotically stable. We draw the phase portrait for the chosen parameters, shown in Fig. \ref{F3} (d). We can see that all trajectories starting from any initial points are converging to $S_3^*.$ \\
In this example, we also discuss the case of saddle-node bifurcation. We have seen that the system (\ref{1}) has two steady states $S_2^*,\,S_3^*$ in $\theta_b^2$ such that as the value of $\theta_b$ decreases, the two steady states collide at $\theta_b^{{{SN}^1}}=0.466081314925648$ and denoted as $S_{{SN}^1}^*=(12.3079,2.6184,7.1796,58.9146).$ The Jacobian matrix $J=Df(S_{{SN}^1}^*,\theta_b^{{{SN}^1}})$ has a simple eigenvalue $\lambda=0$ and  $v=(0.8638,-1.1694, 0.5039,-0.0026)^T,$\\$w=(0.1932,-0.9066,0.2222,-0.3022)^T$, are the eigenvectors of $J$ and $J^T,$ respectively. Both the tranversality conditions $w^Tf_{\theta_b}(S_{{SN}^1}^*,\theta_b^{{{SN}^1}})=1.528\neq 0$ and \\ $	w^T[D^2f(S_{{SN}^1}^*,\theta_b^{{{SN}^1}})(v,v)]=-0.00089\neq 0$ are satisfied. Hence, the system (1) experiences saddle-node bifurcation at $\theta_b=\theta_b^{{{SN}^1}}.$ Also, the system (\ref{1}) has two steady states $S_1^*,\,S_2^*$ in $\theta_b^2$ such that as the value of $\theta_b$ increases, the two steady states collide at $\theta_b^{{{SN}^2}}=0.845253711971854$ and denoted as $S_{{SN}^2}^*=(2.2633,2.6244,1.3202,59.0494).$ The Jacobian matrix $J=Df(S_{{SN}^2}^*,\theta_b^{{{SN}^2}})$ has a simple eigenvalue $\lambda=0$ and  $v=(0.8627,-0.0023,0.5032,-0.0510)^T,\,w=(0.5647,-0.4828,0.6497,-0.1609)^T$, are the eigenvectors of $J$ and $J^T,$ respectively. Both the tranversality conditions $w^Tf_{\theta_b}(S_{{SN}^2}^*,\theta_b^{{{SN}^2}})=0.4497\neq 0$ and  $	w^T[D^2f(S_{{SN}^2}^*,\theta_b^{{{SN}^2}})(v,v)]=-0.0081\neq 0$ are satisfied. Hence, the system (\ref{1}) experiences saddle-node bifurcation at $\theta_b=\theta_b^{{{SN}^2}}.$\\
The system (\ref{1}) exhibits a hysteresis effect in $\theta_b^2$ where multiple steady states coexist, as
shown in Fig. \ref{case1traj_new} (a) and Fig \ref{F3} (a). The two outer steady states are stable, while the interior steady state (red) is
unstable. \\
For this case, again we choose   $\gamma_a = 0.14,\, m_a = 2,\,    A_1 =0.35,\, 
\theta_b\in(0.1,1.8)$ and other parameters are the same as above. We obtained a unique steady state that is stable throughout the range of considered $\theta_b.$ The corresponding bifurcation diagram is shown in Fig. \ref{F3} (e). The phase trajectory for $\theta_b=0.3,$ is shown in Fig. \ref{F3} (f). All trajectories are converging to $S_1^*=(2.91359, 2.62302,  1.6996,59.0179).$
	\begin{figure}[H]
		\subfigure[]{\includegraphics[scale=0.25]{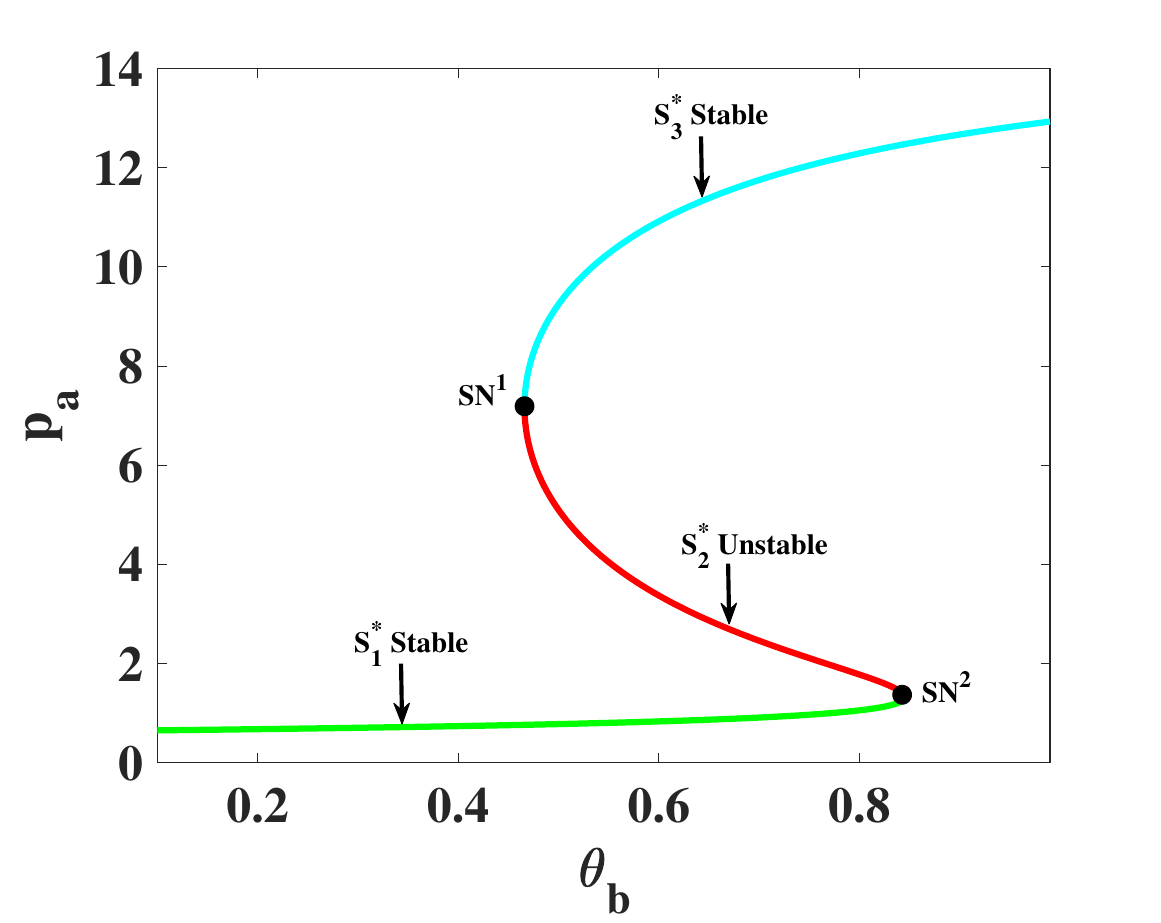}}
		\hfill
		\subfigure[]{\includegraphics[scale=0.25]{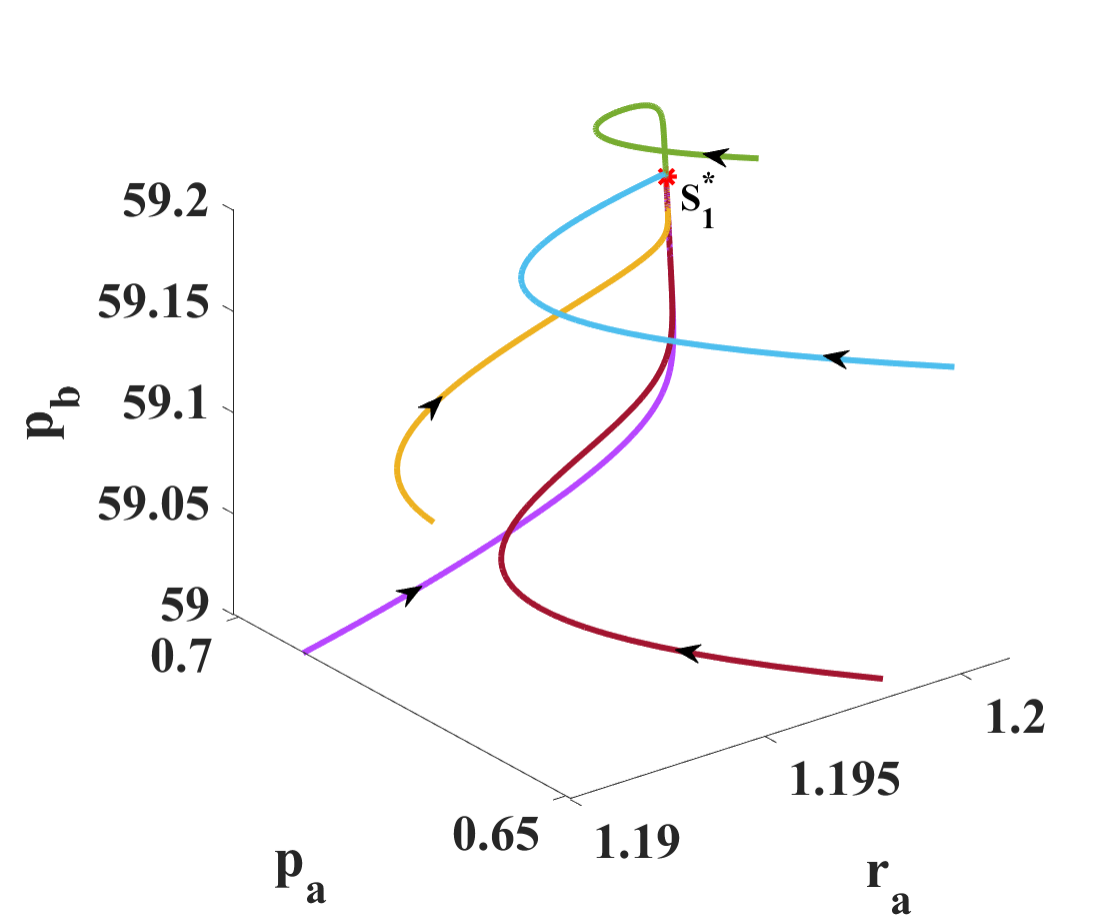}}
		\hfill
		\subfigure[]{\includegraphics[scale=0.25]{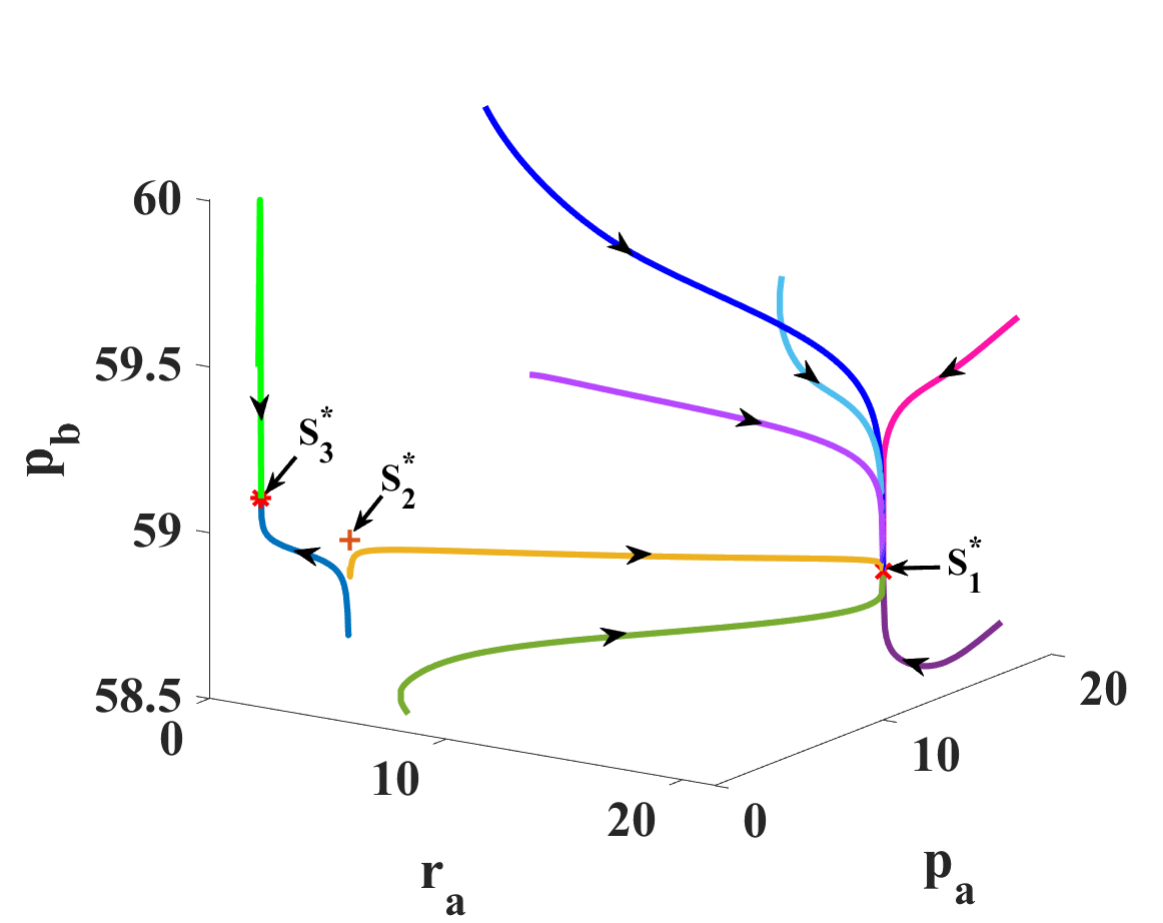}}
		\hfill
		\subfigure[]{\includegraphics[scale=0.25]{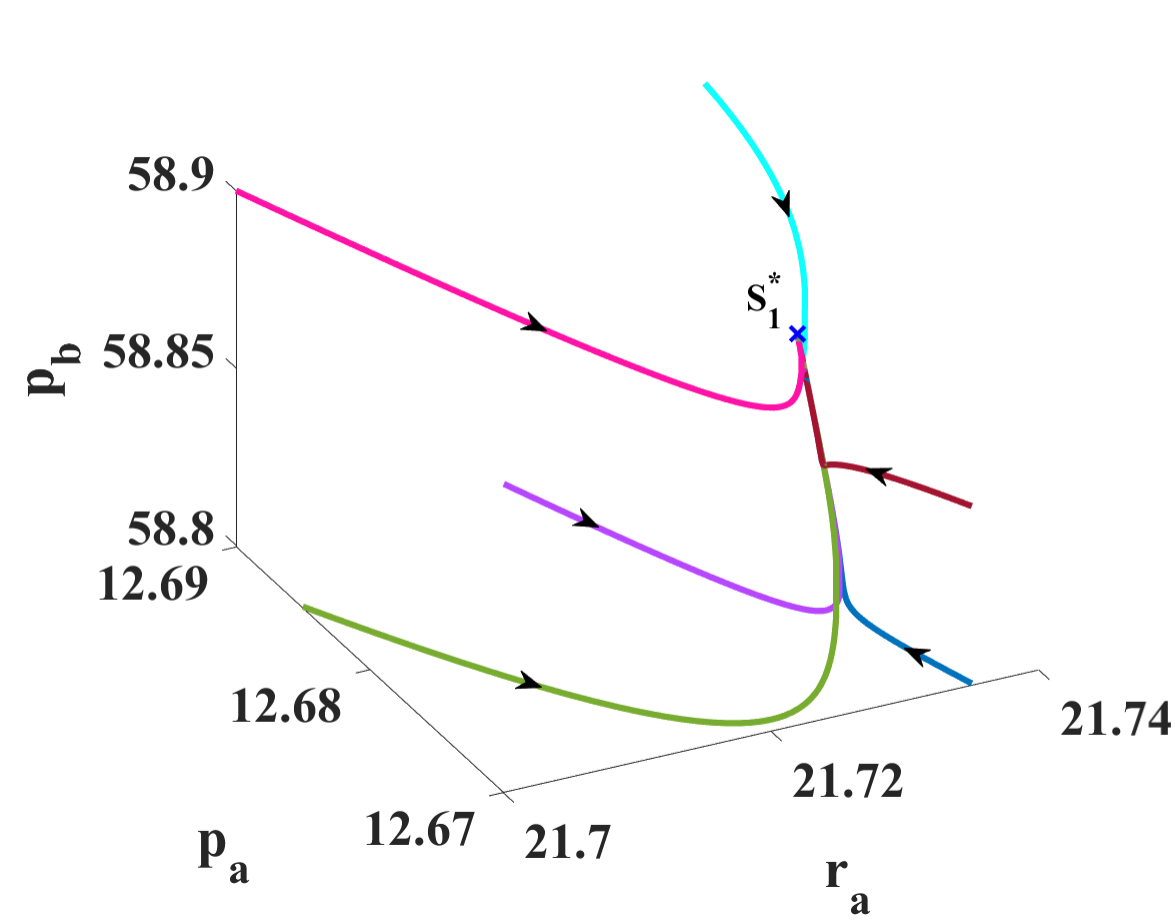}}
		\hfill
  \subfigure[]{\includegraphics[scale=0.25]{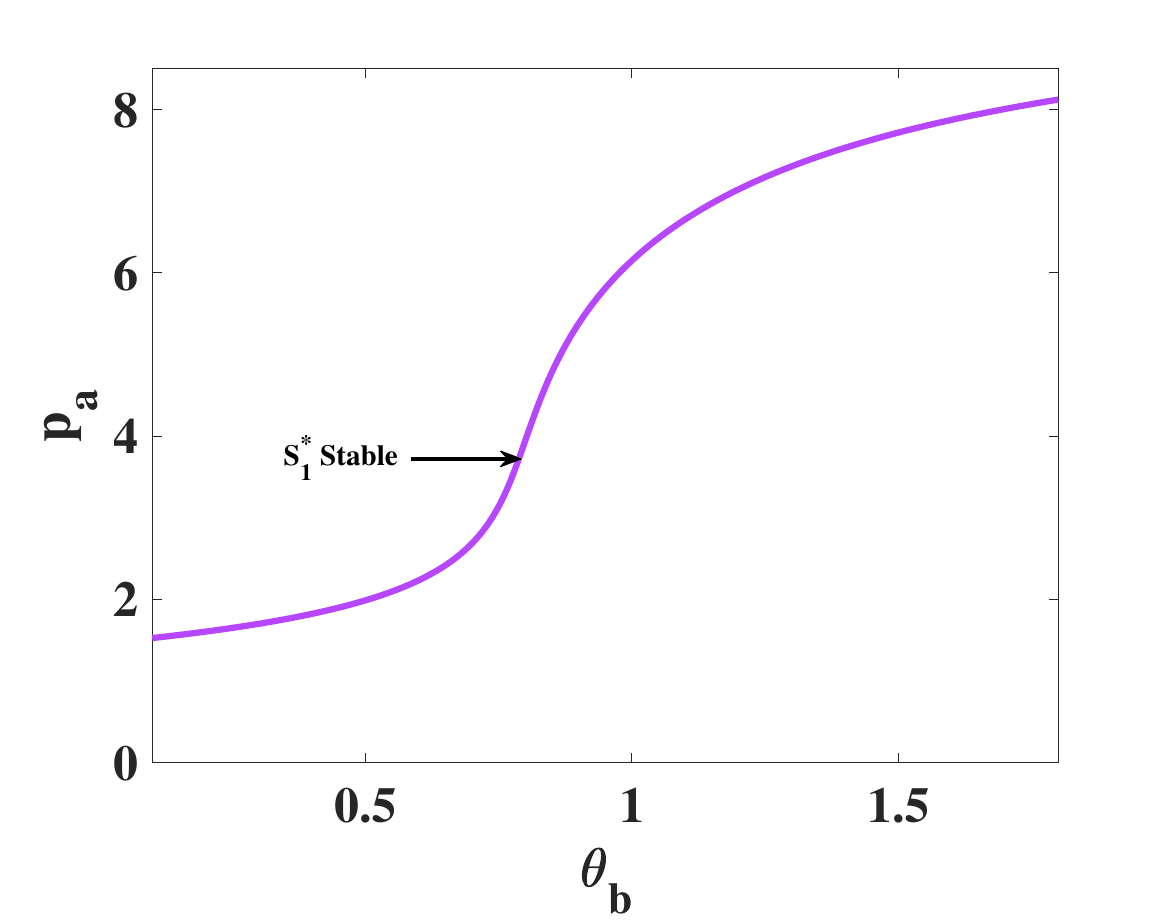}}
	\hfill
	\subfigure[]{\includegraphics[scale=0.25]{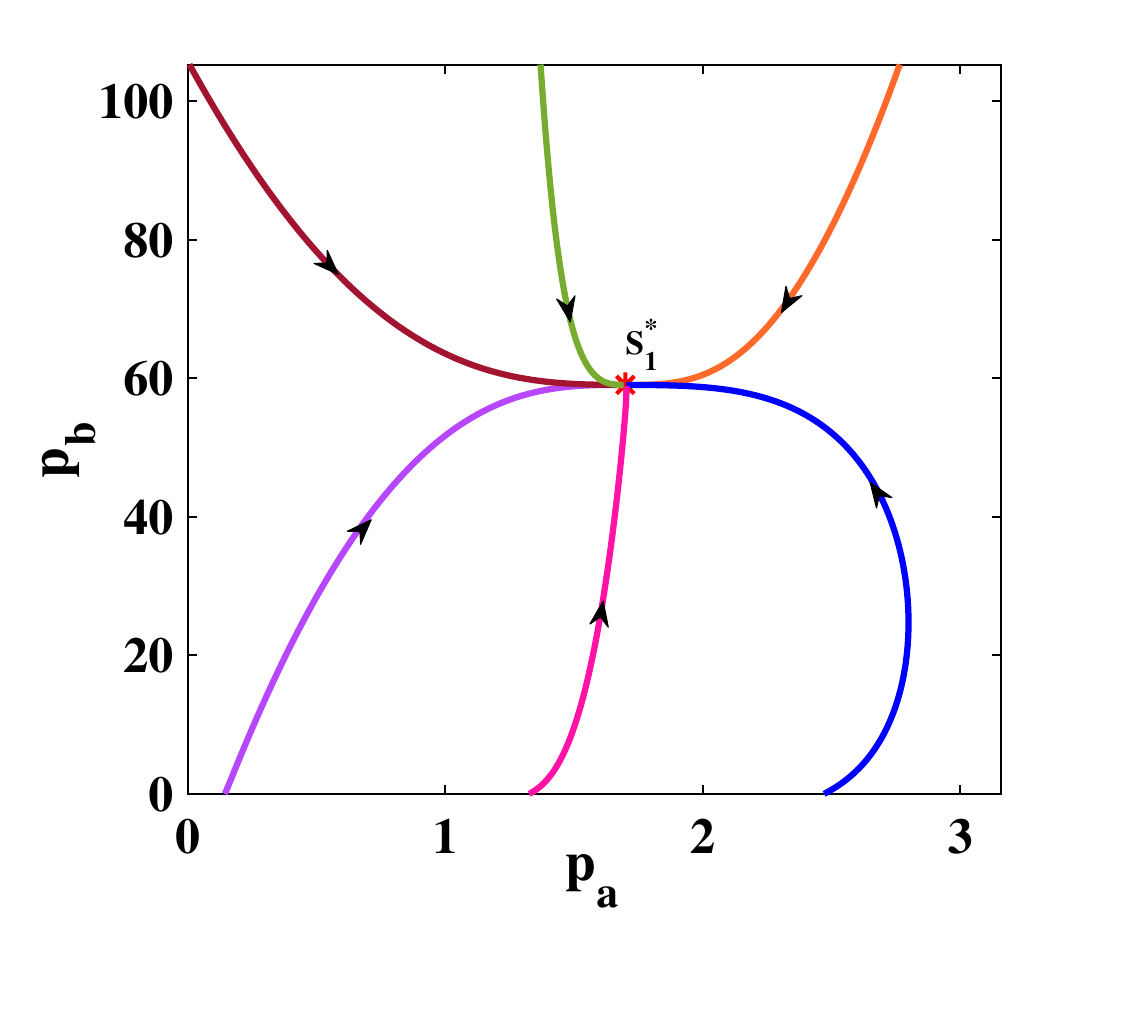}}
	\hfill
		\caption{ For $n_a=1,\,n_b=1,\,n_{aa}=2$; (a) Bifurcation plot exhibiting hysteresis effect with three steady states (b) Phase portrait at $\theta_b=0.3$ (c) Phase portrait at $\theta_b=0.7,$ depicting bistability (d)  Phase portrait at $\theta_b=0.9$  (e) Bifurcation plot with unique stable steady state (f) Phase portrait at $\theta_b=0.3.$}
		\label{F3}
	\end{figure}
\end{example}
\begin{example}
$n_a = 2,\, n_b =1,\, n_{aa} = 2$ (case 3), and the other parameters are chosen as:
$\delta_a = 0.4,\,  \delta_b = 0.001,\, \gamma_a = 0.67,\, \gamma_b = 0.046,\, k_a = 0.7,\,k_b = 0.017,\, m_b = 8.375,\,  m_a =9,\, \theta_a = 0.6,\, A_1 =1,\, B_1 =0.01,\,\theta_{aa} =2.8,\,\theta_b\in(0.1,2.9). $
 In this range of $\theta_b,$ there is unique ($S_3^*$) steady state in $\theta_b^1$, three ($S_1^*,\,S_2^*,\,S_3^*$) in $\theta_b^2$ and unique in $\theta_b^3,$ such that $\theta_b=\theta_b^1\cup\theta_b^2\cup\theta_b^2$ where $\theta_b^1=(0.1,{\theta_b^{SN}}^1),\,\theta_b^2=[{\theta_b^{SN}}^1,{\theta_b^{SN}}^2]$ and $\theta_b^2=({\theta_b^{SN}}^2,0.99).$ The steady states $S_1^*,\,S_3^*$ are stable and ($S_2^*$) is unstable. The corresponding bifurcation diagram is shown in Fig. \ref{F4} (a). All other descriptions are the same as those discussed in Example 2.
 \end{example}
 \begin{example}
 $n_a = 2,\, n_b =2,\, n_{aa} = 1$ (case 4), and the other parameters are chosen as:
 $\delta_a = 0.9,\,  \delta_b = 0.001,\, \gamma_a = 1.23,\, \gamma_b = 0.046,\, k_a = 5.1,\,k_b = 0.017,\, m_b = 10,\,  m_a =14,\, \theta_a = 0.6,\, A_1 =1,\, B_1 =0.01,\,\theta_{aa} =2.7,\,\theta_b\in(0.8,5.8). $
 In this range of $\theta_b,$ there is unique ($S_3^*$) steady state in $\theta_b^1$, three ($S_1^*,\,S_2^*,\,S_3^*$) in $\theta_b^2$ and unique in $\theta_b^3,$ such that $\theta_b=\theta_b^1\cup\theta_b^2\cup\theta_b^2$ where $\theta_b^1=(0.1,{\theta_b^{SN}}^1),\,\theta_b^2=[{\theta_b^{SN}}^1,{\theta_b^{SN}}^2]$ and $\theta_b^2=({\theta_b^{SN}}^2,0.99).$ The steady states $S_1^*,\,S_3^*$ are stable and ($S_2^*$) is unstable. The corresponding bifurcation diagram is shown in Fig. \ref{F4} (b). All other descriptions are the same as discussed in Example 2.\\
 \begin{figure}[H]
 \centering
\subfigure[]{\includegraphics[scale=0.35]{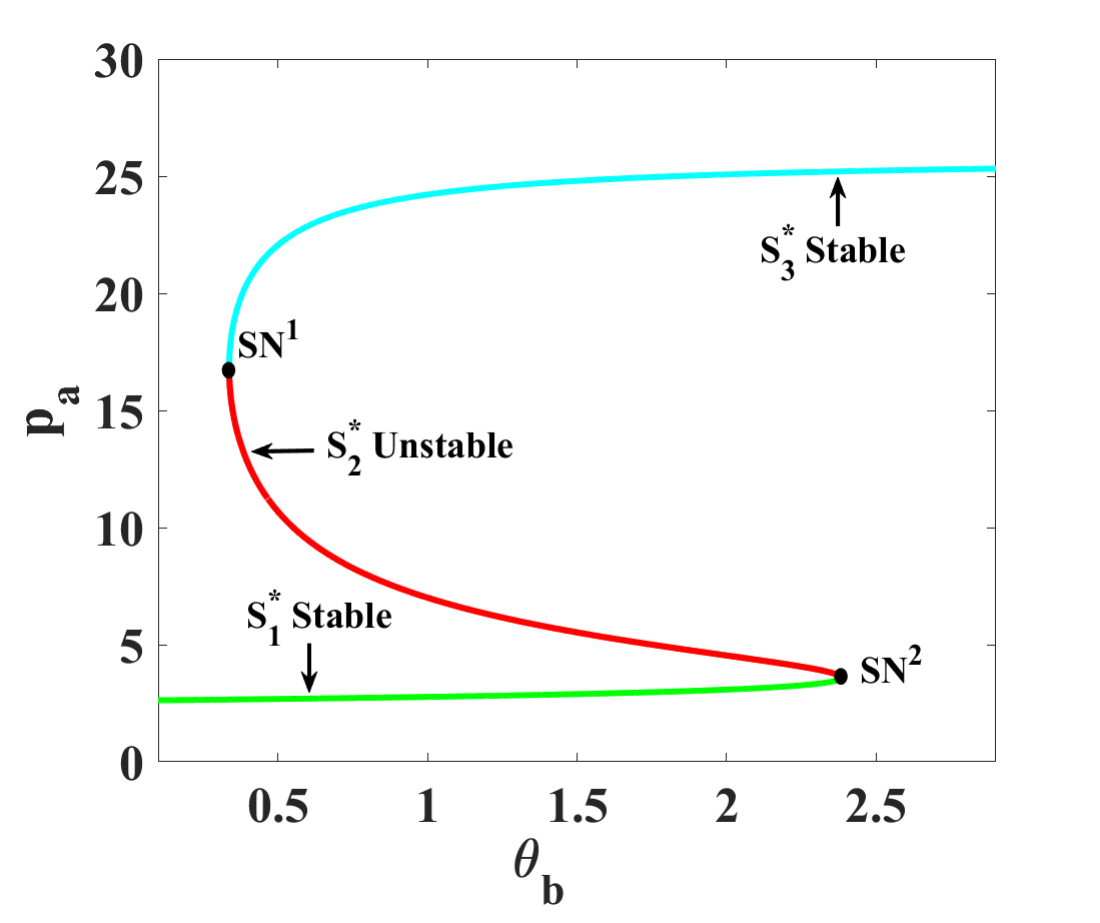}}
 	\hfil
 	\subfigure[]{\includegraphics[scale=0.35]{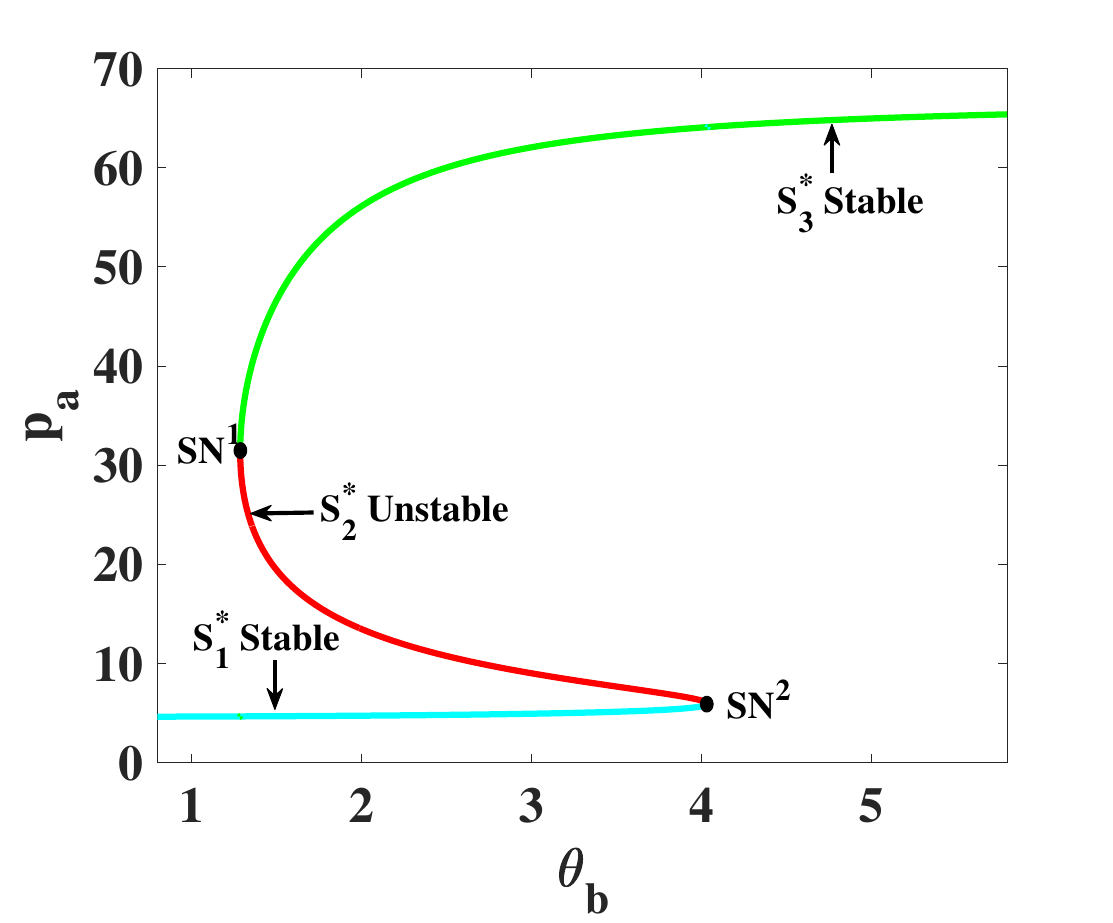}}
 \caption{(a) For $n_a=2,\,n_b=1,\,n_{aa}=2;$ Bifurcation plot exhibiting hysteresis effect with three steady states (b)   For $n_a=2,\,n_b=2,\,n_{aa}=1;$ Bifurcation plot exhibiting hysteresis effect with three steady states.}
 	\label{F4}
 \end{figure}
  \end{example}
 Now, we discuss some other cases in which we observe the monostable and bistable behavior of the system. Instead of showing bifurcation diagrams, here we show the phase portraits with null clines.\\
If $n_a = 2,\, n_b =2,\, n_{aa} = 2$, and the other parameters are chosen as:
 $\delta_a = 0.2,\,  \delta_b = 0.001,\, \gamma_a = 0.67,\, \gamma_b = .046,\, k_a = 0.7,\,k_b = 0.017,\, m_b = 8.375,\,  m_a =9,\, \theta_a = 0.6,\, A_1 =1,\, B_1 =0.01,\,\theta_{aa} =2.8,\,\theta_b=0.7. $ For this parameter, we obtained three steady states $S_1^*=(1.50492, 2.5496,5.2672, 43.3431),\, S_2^*=(5.00231, 0.430963,17.5081, 7.3264)$ and \\$S_3^*=(13.3447,  0.247431,46.7065, 4.2063)$	which are nodal sink (stable), saddle point (unstable) and nodal sink, respectively. The corresponding phase portrait diagram is shown in Fig. \ref{F5} (a).  Further, for this case, we choose $\theta_b=0.3,$ we obtained unique steady state  $S_1^*=(1.49473,2.5811,5.23155, 43.8786),$ which is locally asymptotically stable (nodal sink), shown in Fig. \ref{F5} (b).
 \begin{figure}[H]
 \centering
\subfigure[]{\includegraphics[scale=0.38]{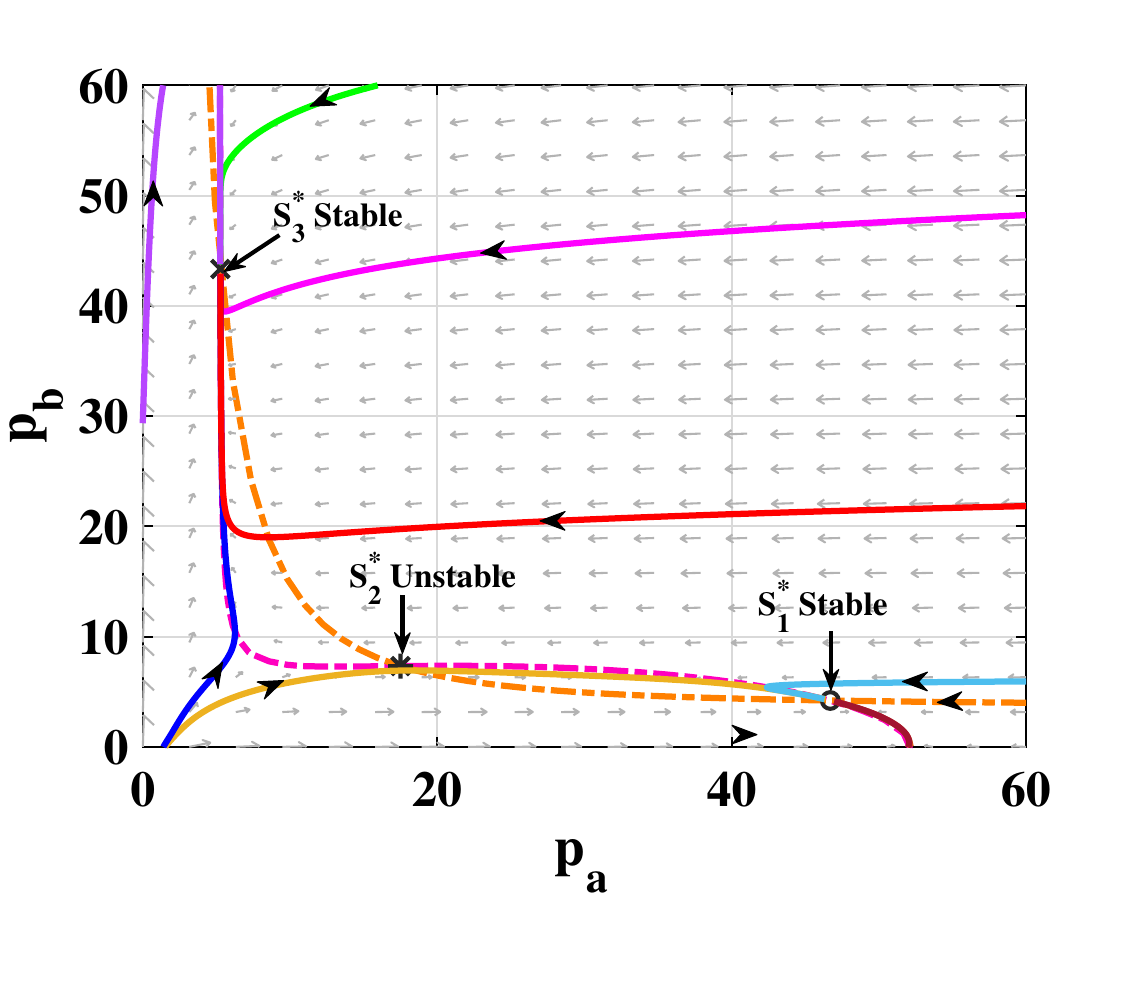}}
	\hfil
	\subfigure[]{\includegraphics[scale=0.38]{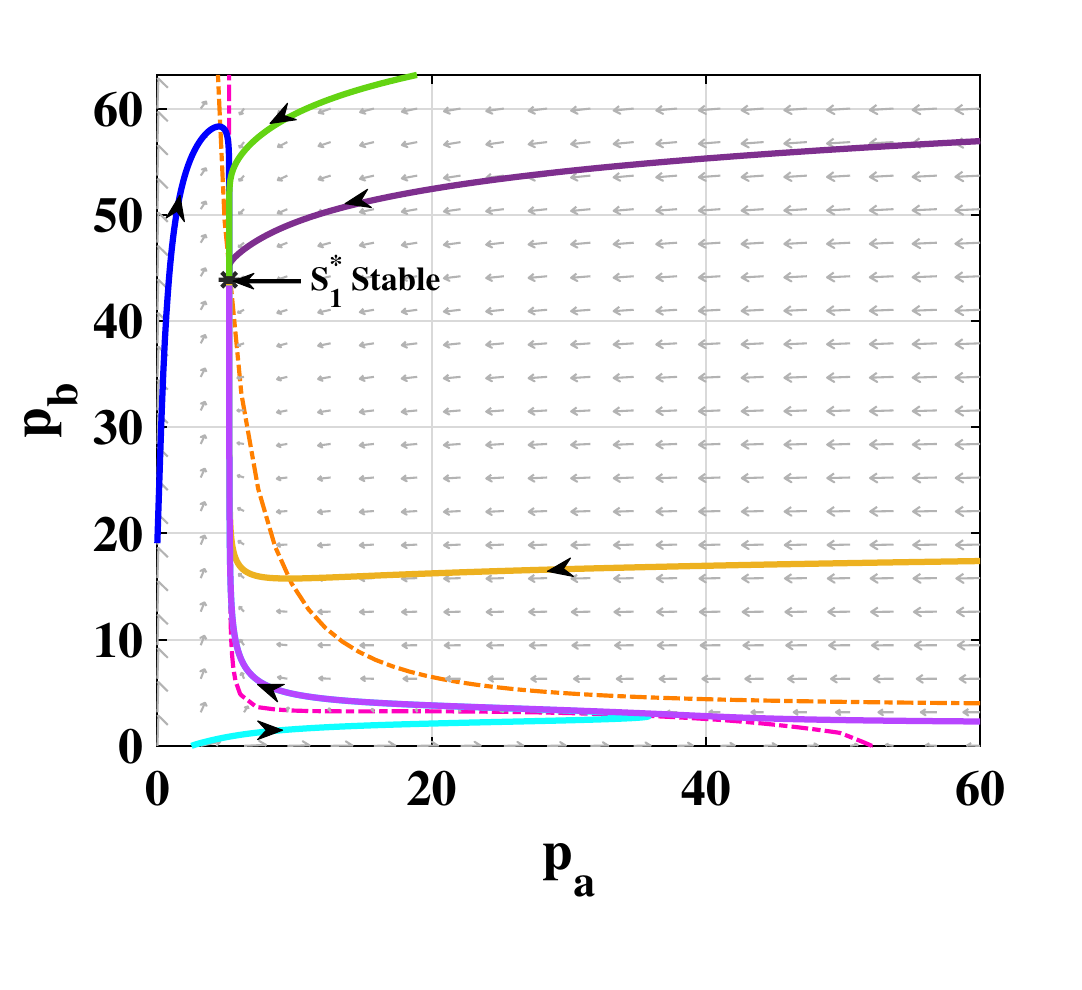}}
	\caption{For $n_a =n_b =n_{aa} = 2;$ (a) Phase portrait at $\theta_b=0.7,$ depicting bistability (b) Phase portrait at $\theta_b=0.3$.}
	\label{F5}
\end{figure}
If $n_a = 3,\, n_b =3,\, n_{aa} = 3$, and the other parameters are chosen as same as the case of $n_a = 2,\, n_b =2,\, n_{aa} = 2.$ For this parameter, we obtained three steady states $S_1^*=(1.56647, 0.455699,  5.48266, 7.74688),\,S_2^*=(2.2693,  0.295845,  7.94254,  5.02937)$ and $S_3^*=(14.6048,  0.217686, 51.1168, 3.70066)$	which are nodal sink (stable), saddle point (unstable) and nodal sink, respectively. Fig. \ref{Fn} (a) shows the corresponding phase portrait diagram.  Further, for this case, we choose $\theta_b=0.3,$ we obtained unique steady state  $S_1^*=(1.49663, 0.490592,  5.2382,  8.34006),$ which is locally asymptotically stable (nodal sink), shown in Fig. \ref{Fn} (b).
\begin{figure}[H]
\hfill
	\subfigure[]{\includegraphics[scale=0.38]{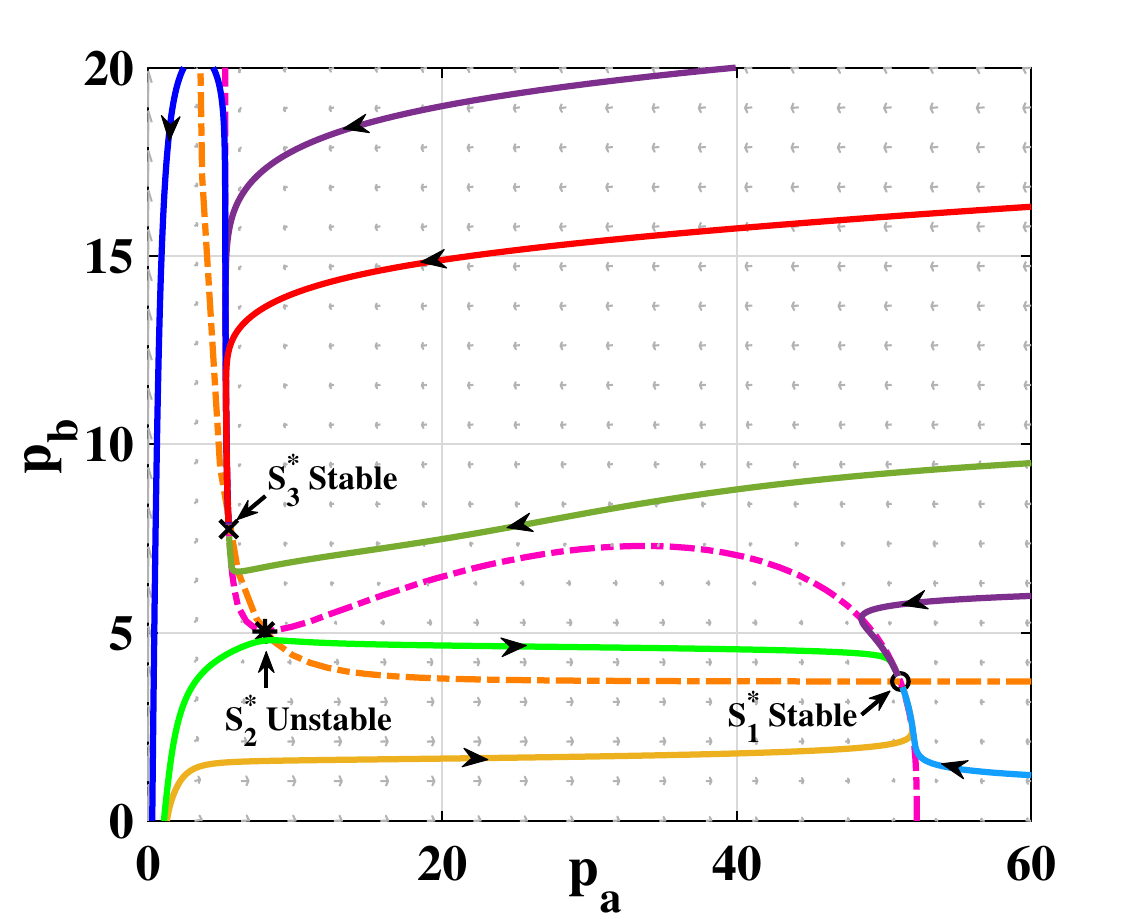}}
	\hfill
	\subfigure[]{\includegraphics[scale=0.38]{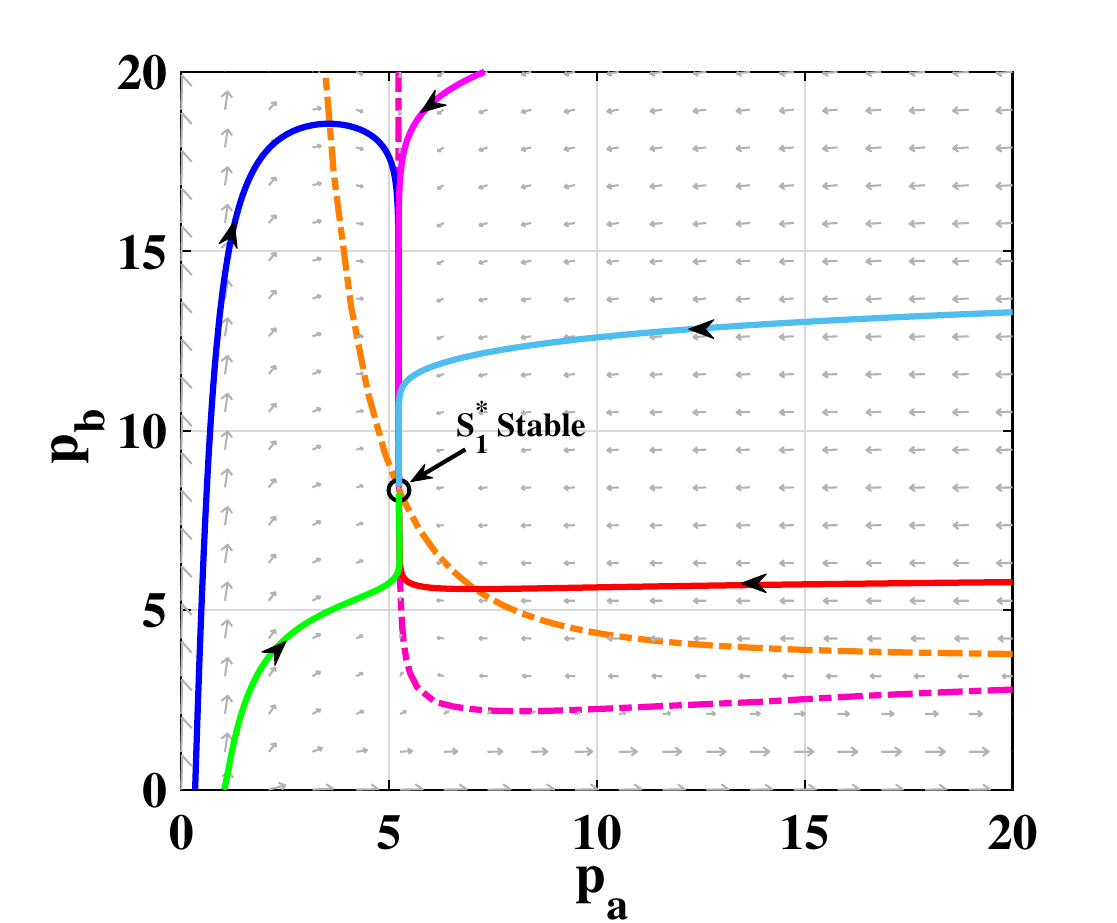}}
	\hfill
	\caption{For $n_a =n_b =n_{aa} =3;$ (a) Phase portrait at $\theta_b=0.7,$ depicting bistability (b) Phase portrait at $\theta_b=0.3$.}
	\label{Fn}
\end{figure}
\subsection{Conclusion and biological interpretation for this section}
In this section, we assume that the two genes $a$ and $b$ compete for binding to the promoter regions and any of the binding events can elicit gene expression. This type of biological regulation is modeled using a competitive OR logic function. Our results show that with any combination and degree of multimerization (Hill coefficient values), the network will exhibit at most bistability. We also found that cell state transition can occur through two mechanisms: (i) hysteresis; in which the cell remembers the past state and abruptly switches back and forth to another state as the input signal crosses a certain threshold (Fig.'s \ref{case1traj_new}a, \ref{F3}a, \ref{F4}). (ii) biphasic transition; in which we observe that cells can transit to the next state without a step-like switch (\ref{case1traj_new}c, \ref{F3}e). In this situation, there is no bifurcating attractor and hence no cells are "forced" to transition, instead, gene expression changes in a smooth continuous way. \\

\section{Mathematical Model [AND Logic]}
In this section, we model the co-regulation of node $a$ using non-competitive AND logic \cite{logics}. With this formalism, the model system takes the following form.\\
\begin{equation}\label{and}
	\begin{aligned}
		&\frac{dr_a}{dt}   =  \frac{m_a(\frac{p_a}{\theta_{aa}})^{n_{aa}}}{1+(\frac{p_a}{\theta_{aa}})^{n_{aa}}(\frac{p_b}{\theta_b})^{n_{b}}}-\gamma_a r_a+A_1, \\
		&\frac{dr_b}{dt}  =\frac{m_b}{1+(\frac{p_a}{\theta_a})^{n_{a}}}-\gamma_b r_b+B_1, \\
		&\frac{dp_a}{dt} =k_ar_a-\delta_ap_a,\\
		&\frac{dp_b}{dt} =k_br_b-\delta_bp_b,
	\end{aligned}
\end{equation}
with $r_a(0)\geq 0,\,r_b(0)\geq 0,\,p_a(0)\geq 0,\,p_b\geq 0.$ The parameter description is same as Section \ref{MM_OR}. 
\subsection{Mathematical Analysis}\label{3.1}
This subsection again presents the existence and stability analysis of the steady states for the model system (\ref{and}). We use the Routh-Hurwitz criterion to prove the local asymptotic stability of the steady states, while Sotomayor's theorem is used to derive the transversality conditions of saddle-node bifurcation. 

We obtained the steady state $S^*=(r_a^*,r_b^*,p_a^*,p_b^*)$ by setting the right hand side of the model (\ref{1}) to zero.
From third and fourth equations of model (\ref{1}), we obtained ${r_a}^*=\frac{\delta_a {p_a}^*}{k_a}$ and ${r_b}^*=\frac{\delta_b {p_b}^*}{k_b},$ respectively, as $\frac{dp_a}{dt}=0$ and $\frac{dp_b}{dt}=0.$ Using  $\frac{dr_b}{dt}=0$ and then putting the value of $r_b,$ we get ${p_b}^*=\frac{1}{\gamma_b}\bigg\{B_1'+\frac{m_2}{\theta_a^{n_a}+{{p_a}^*}^{n_a}}\bigg\}.$ Finally, using the value of $p_b$ and $r_a$ in $\frac{dr_a}{dt}=0,$ we get a polynomial in $p_a$ as
\begin{equation}\label{2and}
	\gamma_b^{n_b}(\theta_a^{n_a}+p_a^{n_a})^{n_b}\bigg(m_1p_a^{n_{aa}}+\theta_b^{n_b}\theta_{aa}^{n_{aa}}(A_1'-\gamma_ap_a)\bigg)+(A_1'-\gamma_ap_a)p_a^{n_{aa}}\bigg(m_2+B_1'(\theta_a^{n_a}+p_a^{n_a})\bigg)^{n_b}=0,
\end{equation}
where $A_1'=\frac{A_1 k_a}{\delta_a},\,B_1'=\frac{B_1 k_b}{\delta_b},\,m_1=\frac{m_a k_a \theta_b^{n_b}}{\delta_a},$ and $m_2=\frac{m_b k_b \theta_a^{n_a}}{\delta_b}.$ Here, $p_a^*$ is a positive real root of (\ref{2and}) which we discuss in different cases explicitly. \\
\textbf{Case 1:} When $n_{aa}=n_a=n_b=1,$ then from (\ref{2and}), we get a cubic polynomial in $p_a$
\begin{equation}\label{3and}
	f(p_a)=Ap_a^3+Bp_a^2+Cp_a+D,
\end{equation}
where $A=-\gamma_a B_1',\,B=\gamma_b (m_1-\gamma_a \theta_{aa}\theta_b)-\gamma_a(m_2+B_1'\theta_a)+A_1'B_1'\\
C=\gamma_bA_1'\theta_b\theta_{aa}+\gamma_b\theta_a (m_1-\gamma_a \theta_{aa}\theta_b)+A_1'(m_2+B_1'\theta_a)$ and $D=\gamma_b\theta_aA_1'\theta_b\theta_{aa}.$
The nature of the curve $f$ can be observe from (\ref{3and}) as 
\begin{enumerate}
	\item[(i)] $f(0)>0$ because $D>0,$
	\item[(ii)] $f\to -\infty\, (\infty)$ as $p_a\to \infty\,(-\infty)$ because $A<0,$
\item[(iii)] The number of positive real roots of equation (\ref{3and}) can be seen using Descarte's rule of sign, mentioned in below Table (\ref{T1and}).
	\begin{table}[H]
		\centering
		\begin{tabular}{|cccc|c|}
			\hline
			\multicolumn{4}{|c|}{\textbf{Coefficients}}                                                                       & \multirow{2}{*}{\textbf{\begin{tabular}[c]{@{}c@{}}Number of possible positive\\ real roots of $f(p_a)=0$\end{tabular}}} \\ \cline{1-4}
			\multicolumn{1}{|c|}{\textbf{A}} & \multicolumn{1}{c|}{\textbf{B}} & \multicolumn{1}{c|}{\textbf{C}} & \textbf{D} &                                                                                                                         \\ \hline
			\multicolumn{1}{|c|}{-}          & \multicolumn{1}{c|}{-}          & \multicolumn{1}{c|}{-}          & +          & 1                                                                                                                       \\ \hline
			\multicolumn{1}{|c|}{-}          & \multicolumn{1}{c|}{+}          & \multicolumn{1}{c|}{-}          & +         & 3,1                                                                                                                       \\ \hline
			\multicolumn{1}{|c|}{-}          & \multicolumn{1}{c|}{+}          & \multicolumn{1}{c|}{+}          & +          & 1                                                                                                                       \\ \hline
			\multicolumn{1}{|c|}{-}          & \multicolumn{1}{c|}{-}          & \multicolumn{1}{c|}{+}          & +          & 1                                                                                                                     \\ \hline
		\end{tabular}
		\caption{Number of possible positive real roots of $f(p_a)=0.$}
		\label{T1and}
	\end{table}
\end{enumerate}
Therefore, $f(p_a)=0$ will have either unique or three positive real roots. Now, to discard the existence of three positive real roots, we use the nullcline method. From the system (\ref{and}), we have
\begin{equation}\label{6_AND}
	\begin{aligned}
		&F_1(p_a,p_b):=\frac{m_{a} k_a\theta_bp_a}{\delta_a(\theta_{aa}\theta_{b}+p_ap_b)}+A_1'-\gamma_{a}p_a \\
		&F_2(p_a,p_b):=\frac{m_{b}k_b\theta_a}{\delta_b(\theta_{a}+p_a)}+B_{1}'-\gamma_{b}p_b,
	\end{aligned}
\end{equation}
First to observe the nature of curve $F_2(p_a,p_b)=0,$ we reduce it in the form of
\begin{equation}\label{7and}
	p_b=g_1(p_a):=\frac{1}{k_2}\Big\{\frac{m_{b}\theta_a}{(\theta_{a}+p_a)}+B_{1}\Big\}\tag{$\mathscr{A}_1$}.
\end{equation}
The following conclusion can be made from (\ref{7and})
\begin{enumerate}
	\item[$\mathscr{A}_1$(i)] $g_1(0)=\frac{1}{k_2}(m_b +B_{1})>\frac{B_1}{k_2}>0$ 
	\item[$\mathscr{A}_1$(ii)] $g_1(p_a)\to \frac{B_1}{k_2}$ as $p_a\to \pm\infty$ 
	\item[$\mathscr{A}_1$(iii)] $g'_1(p_a)<0$, i.e, $g_1(p_a) $ is a decreasing function of $p_a.$
\end{enumerate}
The graph of $F_2(p_a,p_b)=0$ (or $p_b=g_1(p_a)$) is shown in Fig. \ref{fig:g1and}.\\
Now, to observe the nature of curve $F_1(p_a,p_b)=0,$ we reduce it in the form of
\begin{equation}\label{8_AND}
	p_b=g_2(p_a):=\frac{\theta_b(m_ap_a+(A_1-k_1p_a)\theta_{aa})}{p_a(k_1p_a-A_1)}.\tag{$\mathscr{B}_1$}
\end{equation}
Equivalently
\begin{equation}\label{9_AND}
	p_b=g_2(p_a):=\frac{p_a\big(m_a-k_1\theta_{aa}\big)+A_1k_1}{p_a(k_1p_a-A_1)}.\tag{$\mathscr{B}_2$}
\end{equation}
Also, we find
\begin{equation}\label{10_AND}
	g_2'(p_a)=\frac{-\theta_b\Big(k_1p_a^2\big(m_a-A_1\theta_{aa}\big)+k_1p_aA_1\theta_{aa}+A_1\theta_{aa}\big(k_1p_a-A_1\big)\Big)}{p_a^2(k_1p_a-A_1)^2}.\tag{$\mathscr{C}_1$}
\end{equation}
Equivalently
\begin{equation}\label{11_AND}
	g_2'(p_a)=\frac{-\theta_b\bigg(k_1p_a\Big(p_a(m_a-k_1\theta_{aa})+A_1\theta_{aa}\Big)+A_1\theta_{aa}\big(k_1p_a-A_1\big)\bigg)}{p_a^2(k_1p_a-A_1)^2}.\tag{$\mathscr{C}_2$}
\end{equation}
The following conclusion can be made from (\ref{8_AND})
\begin{enumerate}
	\item[$\mathscr{B}$(i)]  $g_2(p_a)$ is a rational function, and it has two vertical asymptotes at $p_a=0$ and $p_a=\frac{A_1}{k_{1}}$ and $p_b=0$ (i.e., $p_a$-axis) is a horizontal asymptote. 
	\item[$\mathscr{B}$(ii)]We note that for $p_a\in \big(0,\frac{A_1}{k_{1}}\big)$, we have $g_2(p_a)<0$ so if positive real roots of $g_2(p_a)=0$ exits then it always lie on the right-hand side of vertical asymptote $p_a=\frac{A_1}{k_{1}}$.
	\item[$\mathscr{B}$(iii)]The $p_a-$intercept can be find with $g_2(p_a)=0,$ which implies that $p_a=-\frac{A_1\theta_{aa}}{(m_a-k_1\theta_{aa})}.$ The sign of $(m_a-k_1\theta_{aa})$ will provide the unique $p_a-$intercept of $g_2(p_a).$ Now, we will discuss three cases to understand the nature of curve $p_b=g_2(p_a).$
	\begin{itemize}	
		\item[(a)]When $m_a-k_1\theta_{aa}>0$ and $p_a>\frac{A_1}{k_{1}}$   then there will no positive intercept; also $g_2(p_a)>0$ and $g_2'(p_a)<0$ which implies that $g_2(p_a)$ is strictly decreasing. The corresponding diagram is shown in Fig. \ref{AND_NULL} (a).
		\item[(b)] When $m_a-k_1\theta_{aa}<0$ then there will unique positive intercept $p_a=-\frac{A_1\theta_{aa}}{(m_a-k_1\theta_{aa})}.$  \ref{AND_NULL} (c). When $\frac{A_1}{k_{1}}<p_a<-\frac{A_1\theta_{aa}}{(m_a-k_1\theta_{aa})}$ then $g_2(p_a)>0$ and $g_2'(p_a)<0$ which implies that $g_2(p_a)$ is strictly decreasing. The corresponding diagram is shown in Fig. \ref{AND_NULL} (c).
	 When $p_a>-\frac{A_1\theta_{aa}}{(m_a-k_1\theta_{aa})}$ then $g_2(p_a)<0,$ and never cross the $p_a$-axis. 
	\end{itemize}	
\end{enumerate}	
All possibilities of the existence of a steady state are shown in Fig. \ref{AND_NULL} (b) and Fig. \ref{AND_NULL} (d) corresponding to \ref{AND_NULL} (a) and Fig. \ref{AND_NULL} (c), respectively. Therefore, from the above discussion, we discard the existence of three positive steady states.
\begin{figure}[H]
	\centering
	\includegraphics[scale=0.4]{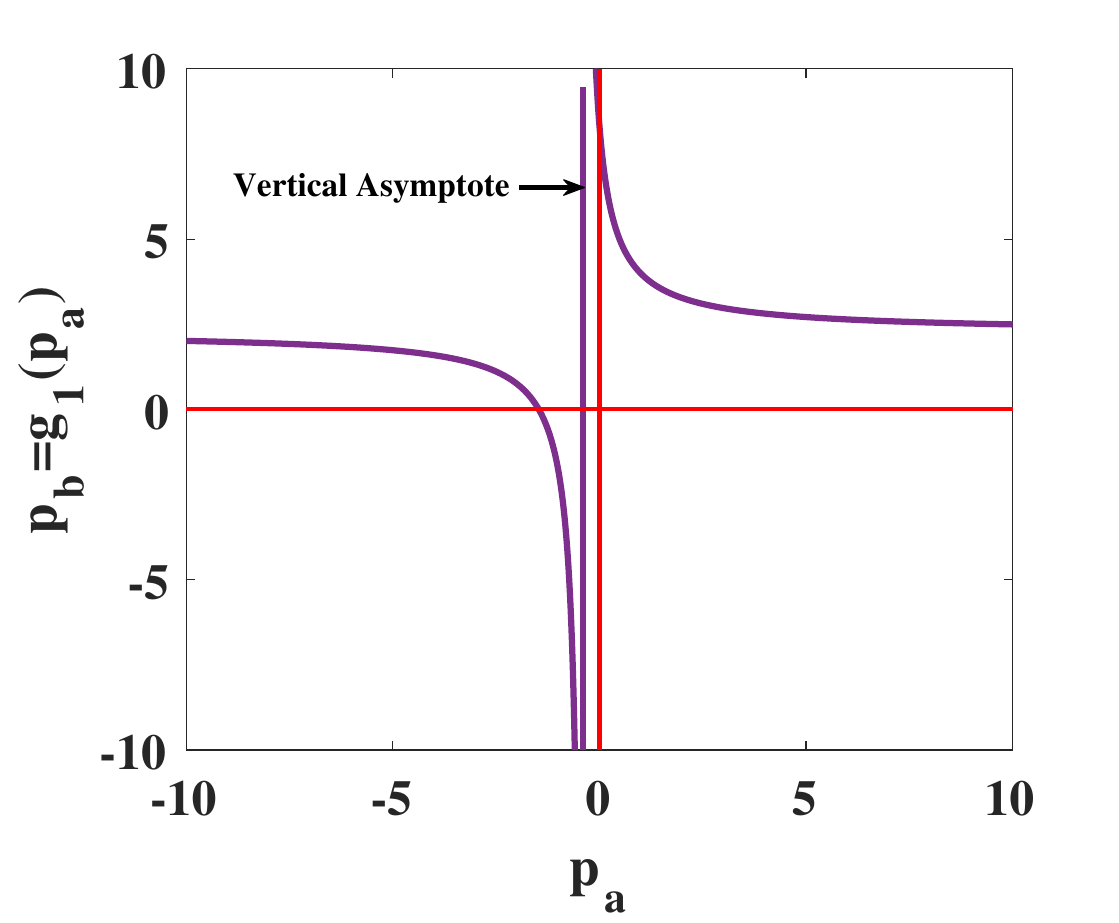}
	\caption{For $ n_a=n_b=n_{aa}=1;$ Plot of $g_1(p_a)$ as a decreasing function of $p_a.$ }
	\label{fig:g1and}
\end{figure}
\begin{figure}[H]
\centering
	\subfigure[]{\includegraphics[scale=0.4]{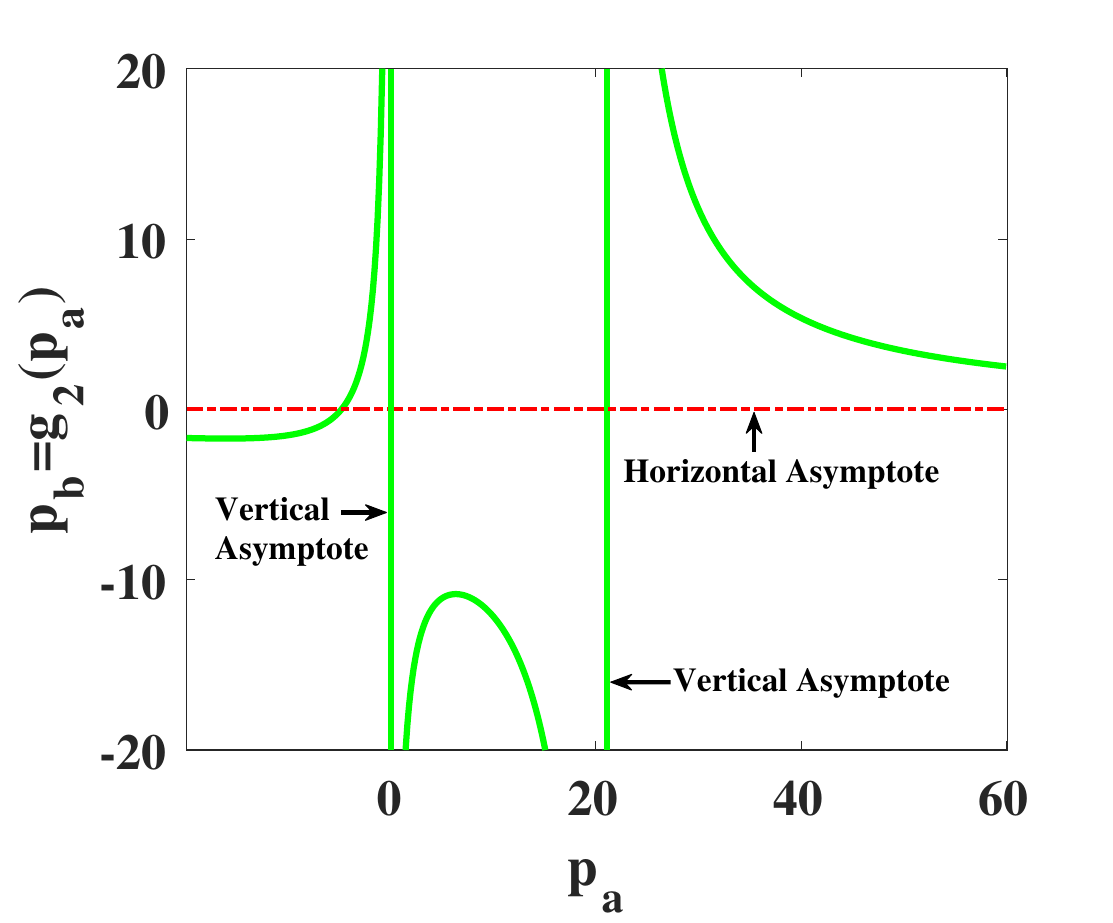}}
	\hfil
	\subfigure[]{\includegraphics[scale=0.4]{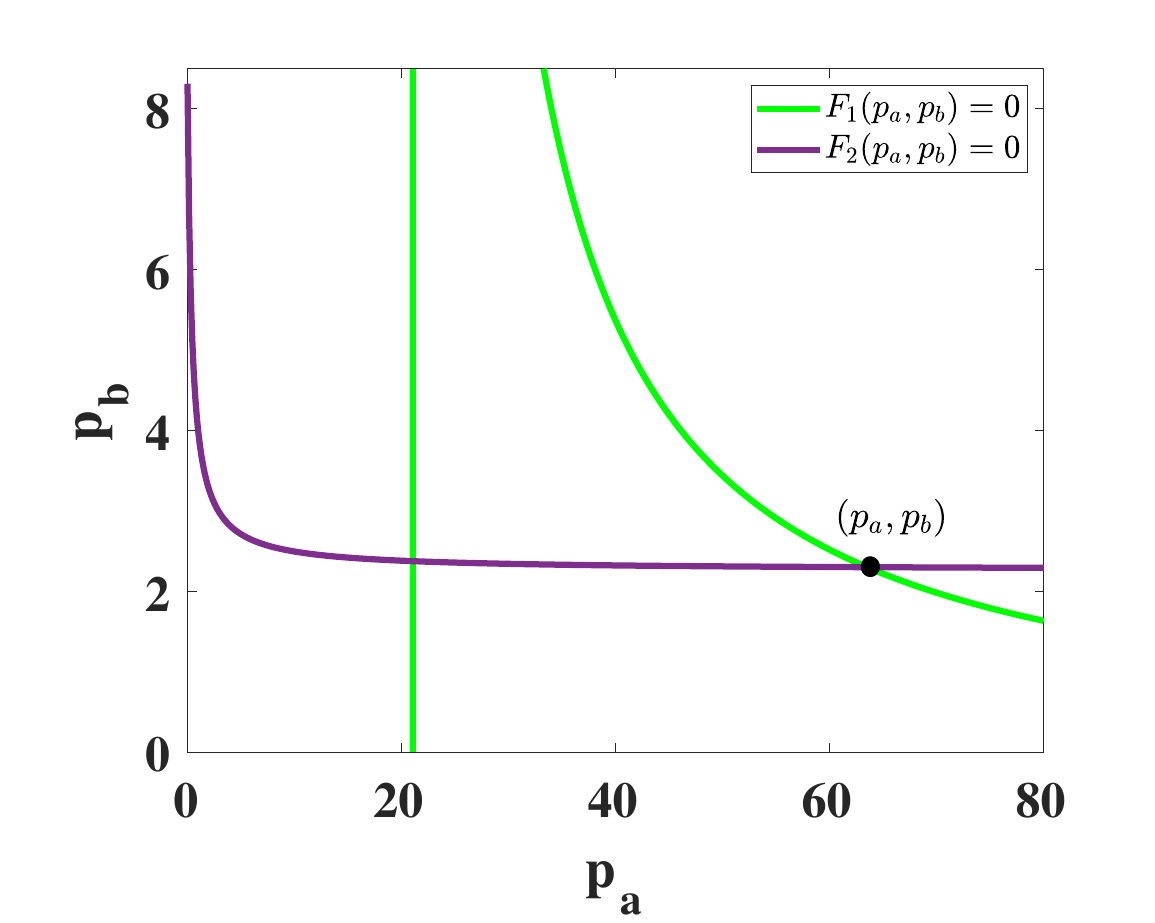}}
	\hfil
 \end{figure}
 \begin{figure}[H]
 \centering
	\subfigure[]{\includegraphics[scale=0.4]{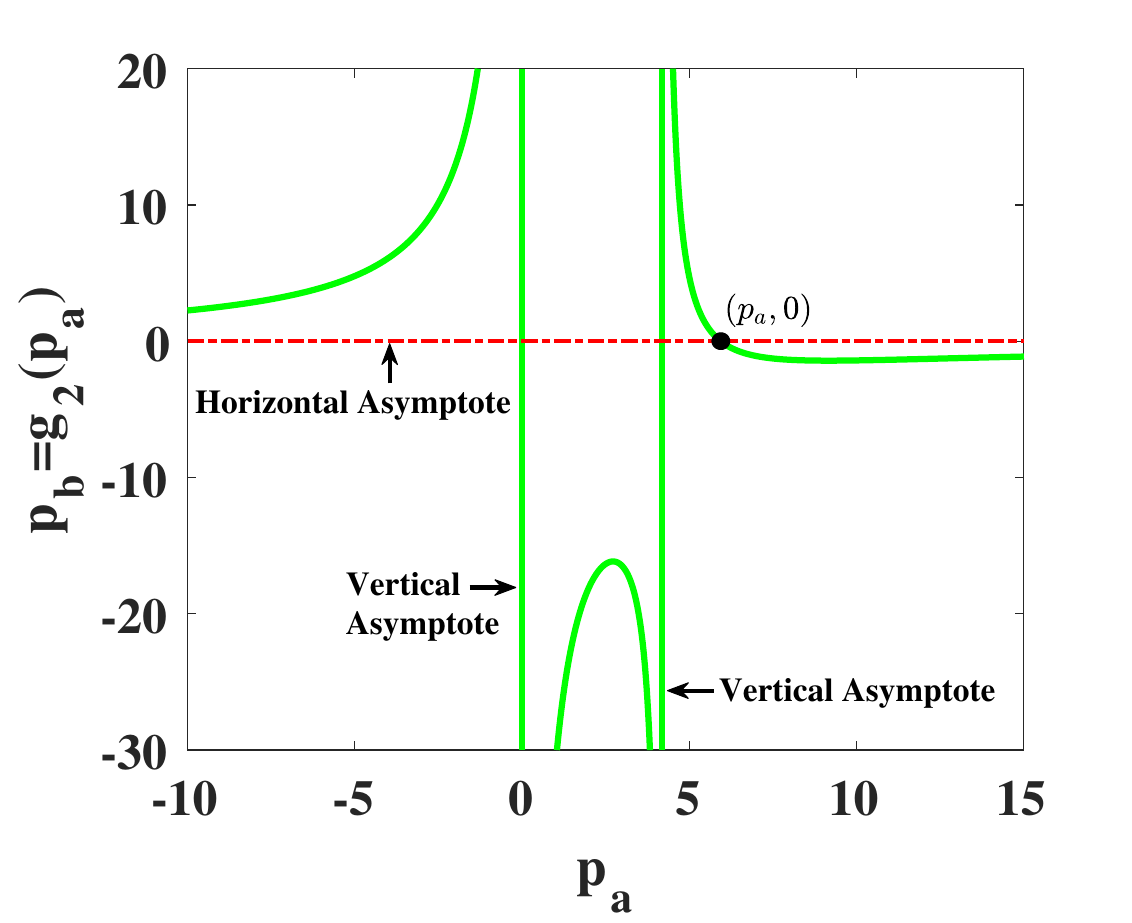}}
	\hfil
	\subfigure[]{\includegraphics[scale=0.4]{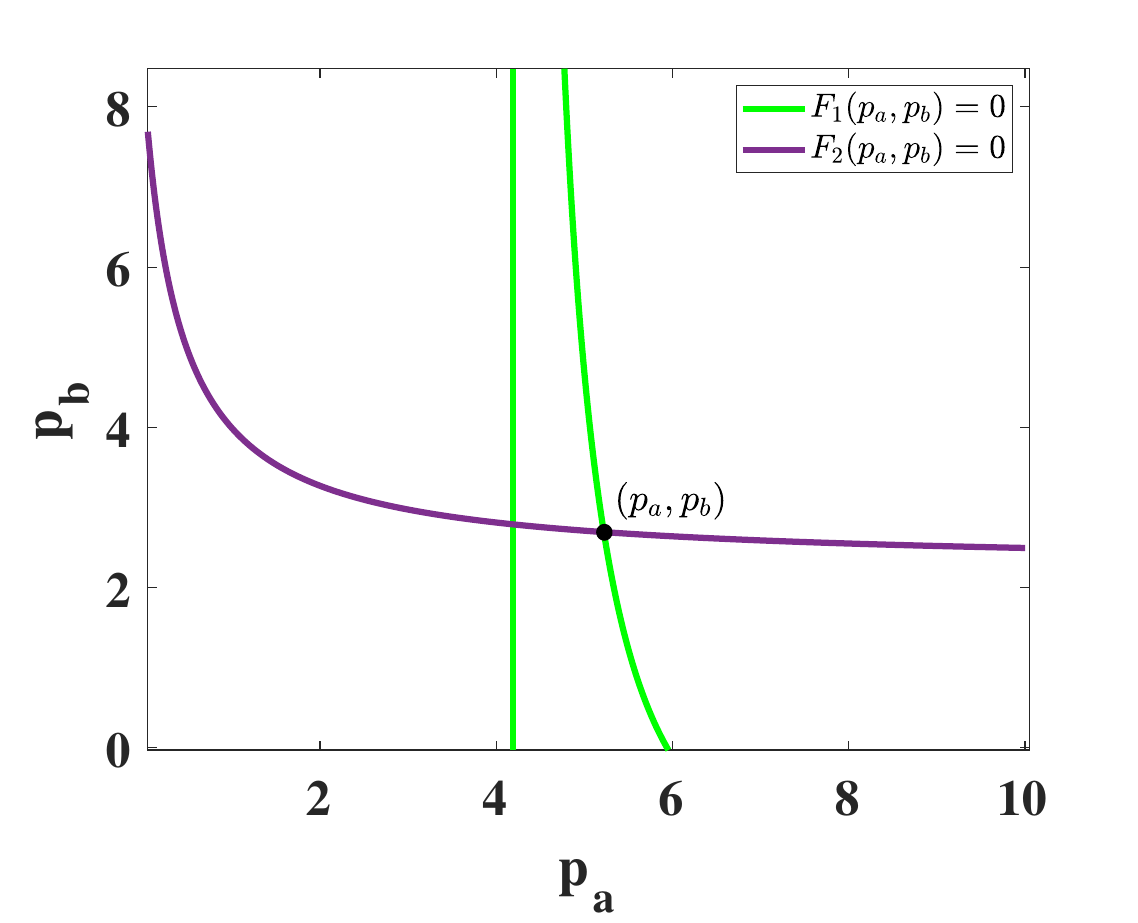}}
	\caption{For $ n_a=n_b=n_{aa}=1;$ (a) The graph depicting no positive $p_a$-intercept of $g_2(p_a)$ (b)  Existence of unique positive steady state corresponding to Fig. \ref{AND_NULL} (a). (c) The graph depicting unique positive $p_a$-intercept of $g_2(p_a)$ (d) Existence of unique positive steady state corresponding to Fig. \ref{AND_NULL} (c). }
	\label{AND_NULL}
\end{figure}
\textbf{Case 2:} When $n_a=n_b=1,\,n_{aa}=2,$ then from (\ref{and}), we get a quartic polynomial in $p_a$
\begin{equation}\label{4and}
	f(p_a)=Ap_a^4+Bp_a^3+Cp_a^2+Dp_a+E,
\end{equation}
where $A=B_1'\,\gamma_{a},\,B=-A_1'\,B_1'-\gamma_{b}\,m_{1}+\gamma_{a}\,\left(m_{2}+B_1'\,\theta _{a}\right),\,
C=\gamma_{a}\,\gamma_{b}\,\theta _{b}\,{\theta _{aa}}^2-A_1'\,\left(m_{2}+B_1'\,\theta _{a}\right)\\-\gamma_{b}\,m_{1}\,\theta _{a},\,D=-\gamma_{b}\,\theta _{b}\,{\theta _{aa}}^2\,\left(A_1'-\gamma_{a}\,\theta _{a}\right),$ and $E=-A_1'\,\gamma_{b}\,\theta _{a}\,\theta _{b}\,{\theta _{aa}}^2.$\\
The number of positive real roots of equation (\ref{4and}) can be seen using Descarte's rule of sign (mentioned in below Table (\ref{T2})); therefore, $f(p_a)=0$ will have either unique or three positive real roots.
\subsubsection{Local stability of steady state}
\begin{theorem}\label{existence_AND}
	The steady state $S_1^{*}$ is locally asymptotically stable if $\epsilon_1>0,\,\epsilon_3>0,\,\epsilon_4>0,$ and $\epsilon_1\epsilon_2\epsilon_3-\epsilon_1^2\epsilon_4 - \epsilon_3^2 >0,$ where $\epsilon_i,\,i=1,..,4$ are provided in the proof.
\end{theorem}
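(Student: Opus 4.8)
The plan is to reuse, essentially verbatim, the argument that established Theorem~\ref{existence}, exploiting the structural observation that the AND-logic model~(\ref{and}) has precisely the same network topology as the OR-logic model~(\ref{1}): the two inhibition/coupling/degradation channels are identical, and only the regulatory term for gene $a$ is replaced (product in the denominator instead of a sum). I therefore expect the Jacobian at $S_1^*$ to retain the identical zero pattern
\[
J_{S_1^*}=\left[\begin{array}{cccc} -\gamma_a & 0 & X & Y\\ 0 & -\gamma_b & Z & 0\\ k_a & 0 & -\delta_a & 0\\ 0 & -k_b & 0 & -\delta_b \end{array}\right],
\]
where $X$ and $Y$ are the partial derivatives of the gene-$a$ regulatory term with respect to $p_a$ and $p_b$, and $Z$ is the derivative of the gene-$b$ regulatory term with respect to $p_a$, unchanged from the proof of Theorem~\ref{existence}.

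First I would differentiate the AND-logic Hill term, which after clearing denominators reads $m_a p_a^{n_{aa}}\theta_b^{n_b}/(\theta_{aa}^{n_{aa}}\theta_b^{n_b}+p_a^{n_{aa}}p_b^{n_b})$. Applying the quotient rule, the cross-terms proportional to $p_a^{n_{aa}}p_b^{n_b}$ cancel cleanly, and I expect
\[
X=\frac{m_a\,n_{aa}\,\theta_{aa}^{n_{aa}}\,\theta_b^{2n_b}\,p_a^{n_{aa}-1}}{\left(\theta_{aa}^{n_{aa}}\theta_b^{n_b}+p_a^{n_{aa}}p_b^{n_b}\right)^2}>0,\qquad Y=-\frac{m_a\,n_b\,\theta_b^{n_b}\,p_a^{2n_{aa}}\,p_b^{n_b-1}}{\left(\theta_{aa}^{n_{aa}}\theta_b^{n_b}+p_a^{n_{aa}}p_b^{n_b}\right)^2}<0,
\]
while $Z=-m_b n_a p_a^{n_a-1}\theta_a^{n_a}/(p_a^{n_a}+\theta_a^{n_a})^2$ is carried over unchanged.

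Next, since $J_{S_1^*}$ has the same block structure, its characteristic polynomial is again the quartic $\lambda^4+\epsilon_1\lambda^3+\epsilon_2\lambda^2+\epsilon_3\lambda+\epsilon_4$, and the cofactor expansion reproduces the very same symbolic coefficients in terms of $X,Y,Z$: namely $\epsilon_1=\gamma_a+\gamma_b+\delta_a+\delta_b$, $\epsilon_2=\gamma_a\gamma_b+(\gamma_a+\gamma_b)(\delta_a+\delta_b)+\delta_a\delta_b-k_aX$, $\epsilon_3=\delta_a\delta_b(\gamma_a+\gamma_b)+(\delta_a+\delta_b)(\gamma_a\gamma_b)-k_a(\delta_b+\gamma_b)X$, and $\epsilon_4=\gamma_a\gamma_b\delta_a\delta_b-k_ak_bYZ-k_a\gamma_b\delta_bX$, with only the numerical values of $X$ and $Y$ altered. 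Invoking the Routh--Hurwitz criterion for the quartic then yields local asymptotic stability precisely under $\epsilon_1>0$, $\epsilon_3>0$, $\epsilon_4>0$, and $\epsilon_1\epsilon_2\epsilon_3-\epsilon_1^2\epsilon_4-\epsilon_3^2>0$, which is the claimed statement.

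The step demanding the most care is confirming that the determinant expansion genuinely reproduces the same $\epsilon_i$ expressions, i.e.\ verifying that the cancellations producing the OR-logic coefficients depend only on the sparsity pattern and on the positions of $k_a,-k_b,X,Y,Z$, not on the specific functional form of the gene-$a$ regulation. Once this structural invariance is checked, no new analytical obstacle arises: the recomputation of $X$ and $Y$ above is the only genuinely new content, and the remainder of the proof is a direct transcription of the Routh--Hurwitz argument already used for Theorem~\ref{existence}.
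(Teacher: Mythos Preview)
Your proposal is correct and follows essentially the same approach as the paper: the paper's proof of Theorem~\ref{existence_AND} computes the Jacobian with the identical sparsity pattern, obtains precisely your expressions for $X$, $Y$, and $Z$ (with only $X$ and $Y$ differing from the OR-logic case), writes the same quartic characteristic polynomial with the same symbolic coefficients $\epsilon_i$ in terms of $X,Y,Z$, and concludes via Routh--Hurwitz. Your structural observation that the $\epsilon_i$ depend only on the Jacobian's zero pattern and the positions of $k_a,\,k_b,\,X,\,Y,\,Z$ is exactly what the paper exploits (implicitly) by reproducing the argument verbatim.
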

\begin{proof}At $S_1^{*}, $ the Jacobian matrix of the model system (\ref{1}) is given by
	\begin{center}
		$J_{S_1^*}=
		\left[\begin{array}{cccc} -\gamma_a & 0 & X&Y\\ 0 & -\gamma_b & Z&0\\ k_a &0  & -\delta_a&0\\0&-k_b&0&-\delta_b \end{array}\right].$
	\end{center}
	where $X=\frac{\mathrm{m_a}\,n_{aa}\,{\mathrm{p_a}}^{n_{aa}-1}\,{\theta _{b}}^{2\,n_{b}}\,{\theta _{aa}}^{n_{aa}}}{{\left({\mathrm{p_a}}^{n_{aa}}\,{\mathrm{p_b}}^{n_{b}}+{\theta _{b}}^{n_{b}}\,{\theta _{aa}}^{n_{aa}}\right)}^2}
	,\,Y=-\frac{\mathrm{m_a}\,n_{b}\,{\mathrm{p_a}}^{2\,n_{aa}}\,{\mathrm{p_b}}^{n_{b}-1}\,{\theta _{b}}^{n_{b}}}{{\left({\mathrm{p_a}}^{n_{aa}}\,{\mathrm{p_b}}^{n_{b}}+{\theta _{b}}^{n_{b}}\,{\theta _{aa}}^{n_{aa}}\right)}^2},$ and $Z=-\frac{\mathrm{m}_b\,n_{a}\,{\mathrm{p_a}}^{n_{a}-1}\,{\theta _{a}}^{n_{a}}}{{\left({\mathrm{p_a}}^{n_{a}}+{\theta _{a}}^{n_{a}}\right)}^2}.$
	The characteristic equation of $J_{S_1^*}$ is given as
	\begin{equation}\label{9N}
\lambda^4+\epsilon_1\lambda^3+\epsilon_2\lambda^2+\epsilon_3\lambda+\epsilon_4=0.
	\end{equation}
	Here, $\epsilon_1=\gamma_a+\gamma_b+\delta_a+\delta_b,\epsilon_2=\gamma_a\gamma_b+(\gamma_a+\gamma_b)(\delta_a+\delta_b)+\delta_a\delta_b-k_aX,\,\epsilon_3=\delta_a\delta_b(\gamma_a+\gamma_b)+(\delta_a+\delta_b)(\gamma_a\gamma_b)-k_a(\delta_b+\gamma_b)X,$ and  $\epsilon_4=\gamma_a\gamma_b\delta_a\delta_b-k_ak_bYZ-k_a\gamma_b\delta_bX.$
	By the Routh-Hurwitz criteria, all the roots of (\ref{9N}) have negative real parts provided $\epsilon_1>0,\,\epsilon_3>0,\,\epsilon_4>0,$ and $\epsilon_1\epsilon_2\epsilon_3-\epsilon_1^2\epsilon_4 - \epsilon_3^2 >0.$
\end{proof}
\subsubsection{Saddle Node Bifurcation}
We find the transversality condition the same as discussed in Subsection \ref{sn_OR}.  Define $f=(f_1,f_2,f_3,f_4)^T,$ where $f_1$ are defined as
\begin{equation*}
f_1  =  \frac{m_a(\frac{p_a}{\theta_{aa}})^{n_{aa}}}{1+(\frac{p_a}{\theta_{aa}})^{n_{aa}}(\frac{p_b}{\theta_b})^{n_{b}}}-\gamma_a r_a+A_1, 
	\end{equation*}
and $f_2 ,\,f_3,$ and $f_4$ expression is same as Subsection \ref{sn_OR}.
Then the model system (\ref{and})
experiences a saddle-node bifurcation  at $(S_{SN}^*,\theta_b^{SN})$, if the following transversality conditions \cite{perko_2013} are satisfied:
\begin{equation*}
	w^Tf_{\theta_b}(S_{SN}^*,\theta_b^{SN})=w_1\frac{m_a n_b {\theta_b^{n_b-1}}^{SN}{p_b^{n_b}}^{{SN}^*}{p_a^{2n_{aa}}}^{{SN}^*}}{(\mathcal{X})^2}\neq0. \tag{$\mathcal{T}_{11}$}\end{equation*}
and  
\begin{align*}
	w^T[D^2f(S_{SN}^*,\theta_b^{SN})(v,v)]=&\frac{w_1}{\mathcal{X}^3}\Biggl\{\mathcal{O}\bigg[v_3^2\Big(\theta_{aa}^{n_{aa}}{\theta_{b}^{n_{b}}}^{SN}(n_{aa}-1){p_a^{(n_{aa}-2)}}^{{SN}^*}-(n_{aa}+1){p_b^{n_b}}^{{SN}^*}\\&{p_a^{(2n_{aa}-2)}}^{{SN}^*}\Big)-4v_3v_4n_b{p_a^{(2n_{aa}-1)}}^{{SN}^*}{p_b^{(n_{b}-1)}}^{{SN}^*}-v_4^2\mathcal{O}_1\\&\Big(\theta_{aa}^{n_{aa}}{\theta_{b}^{n_{b}}}^{SN}(n_{b}-1){p_b^{(n_{b}-2)}}^{{SN}^*}-(n_{b}+1){p_a^{n_{aa}}}^{{SN}^*}{p_b^{(2n_{b}-2)}}^{{SN}^*}\Big)\bigg]\Biggr\}\\&	-\frac{w_2v_3^2m_b n_a \theta_a^{n_a}}{(\theta_a^{n_a}+{p_a^{n_a}}^{{SN}^*})^3} \bigg\{{p_a^{n_a-2}}^{{SN}^*}\theta_a^{n_a}(n_a-1)-(n_a+1){p_a^{2n_a-2}}^{{SN}^*}\bigg\}	\\&\neq 0,\tag{$\mathcal{T}_{22}$}
\end{align*}
where $\mathcal{X}=\theta_{aa}^{n_{aa}}\theta_b^{n_b}+{p_a^{n_{aa}}}^{{SN}^*}{p_b^{n_b}}^{{SN}^*},\,\mathcal{O}=n_{aa}m_a\theta_b^{2n_b}\theta_{aa}^{n_{aa}},$ and $\mathcal{O}_1=m_an_b\theta_b^{n_b}{p_{aa}^{2n_{aa}}}^{{SN}^*}.$	
\subsubsection{Numerical simulation and discussion corresponding to model system (\ref{and})}
The purpose of this section is to provide validation for the theoretical results that were covered in Subsection \ref{3.1}. We have talked about the system's behavior in terms of monostability and bi-stability with fictitious parameters.
\begin{example} $=n_a=n_b=n_{aa}=1$ (case 1) and the other parameters are chosen as $\delta_a =0.99,\,   \delta_b = 0.06,\, \gamma_a = 3.9,\, \gamma_b =0.35,\, k_a = 0.5,\, k_b = 0.09,\,m_a =0.02,\, m_b =0.9,\,  \theta_a = 0.0086,\, A_1 = 0.8,\, B_1 = 0.009,\, \theta_{aa} = 0.086,\,\theta_b = (0.0001,5).$ For this set of parameters, we obtain a unique stable steady state as discussed in Case 1(and). The corresponding bifurcation plot is shown in Fig. \ref{case1traj_AND} (a). To ensure the stability of $S_1^*,$  we choose $\theta_b=2.5$ from the range of $\theta_b.$  The coefficients of characteristic equation at $S_1^*$ are  calculated as $\epsilon_1=5.3>0,\,\epsilon_2=5.8008>0,\,\epsilon_3=1.6504>0,\,\epsilon_4=0.0790>0,$ and $\epsilon_1\epsilon_2\epsilon_3-\epsilon_1^2\epsilon_4 - \epsilon_3^2=45.795>0.$ Therefore, from Theorem \ref{existence_AND}, we conclude that $S_1^*$ is locally asymptotically stable. We draw the phase portrait for the considered $\theta_b$, shown in Fig. \ref{case1traj_AND} (b). We can see that all trajectories starting from any initial points are converging to $S_1^*=(0.2106,0.2181, 0.1064, 0.3271).$
\begin{figure}[H]
\centering
	\subfigure[]{\includegraphics[scale=0.35]{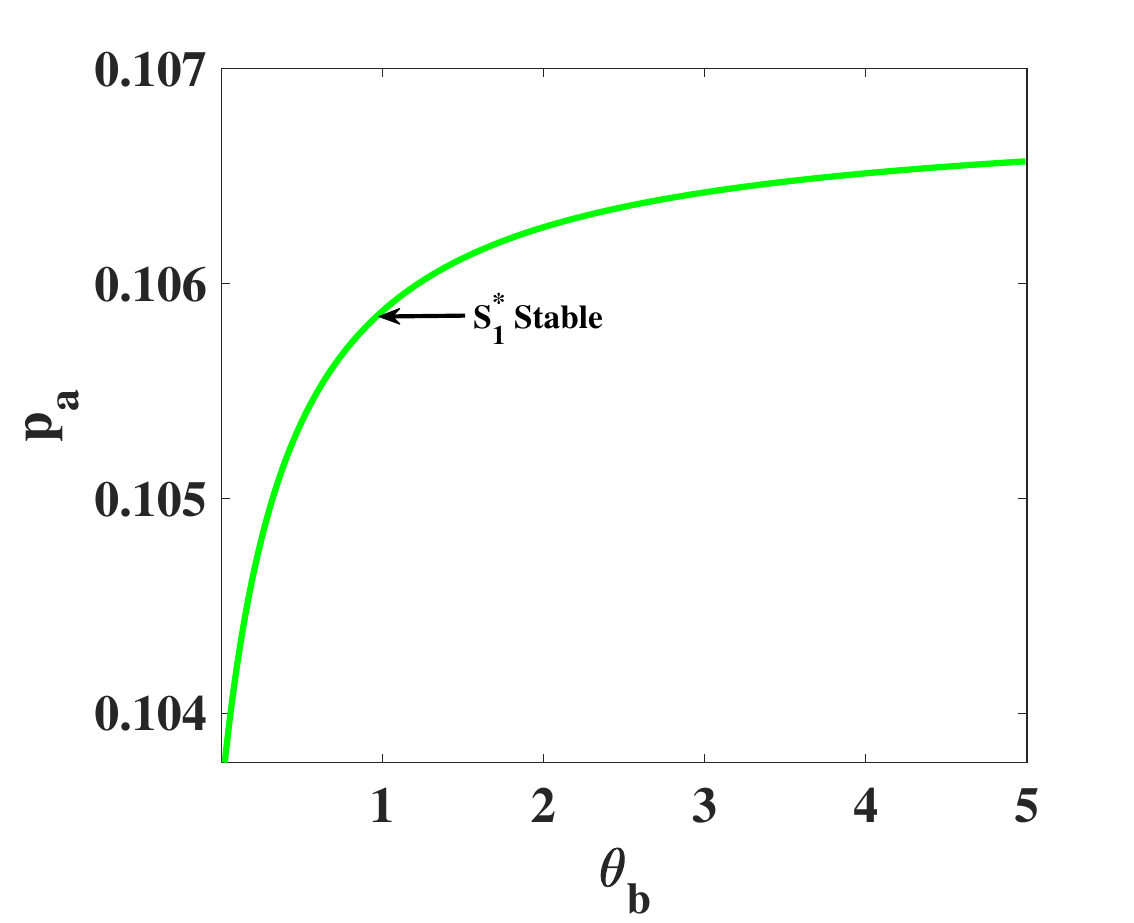}}
	\hfil
	\subfigure[]{\includegraphics[scale=0.35]{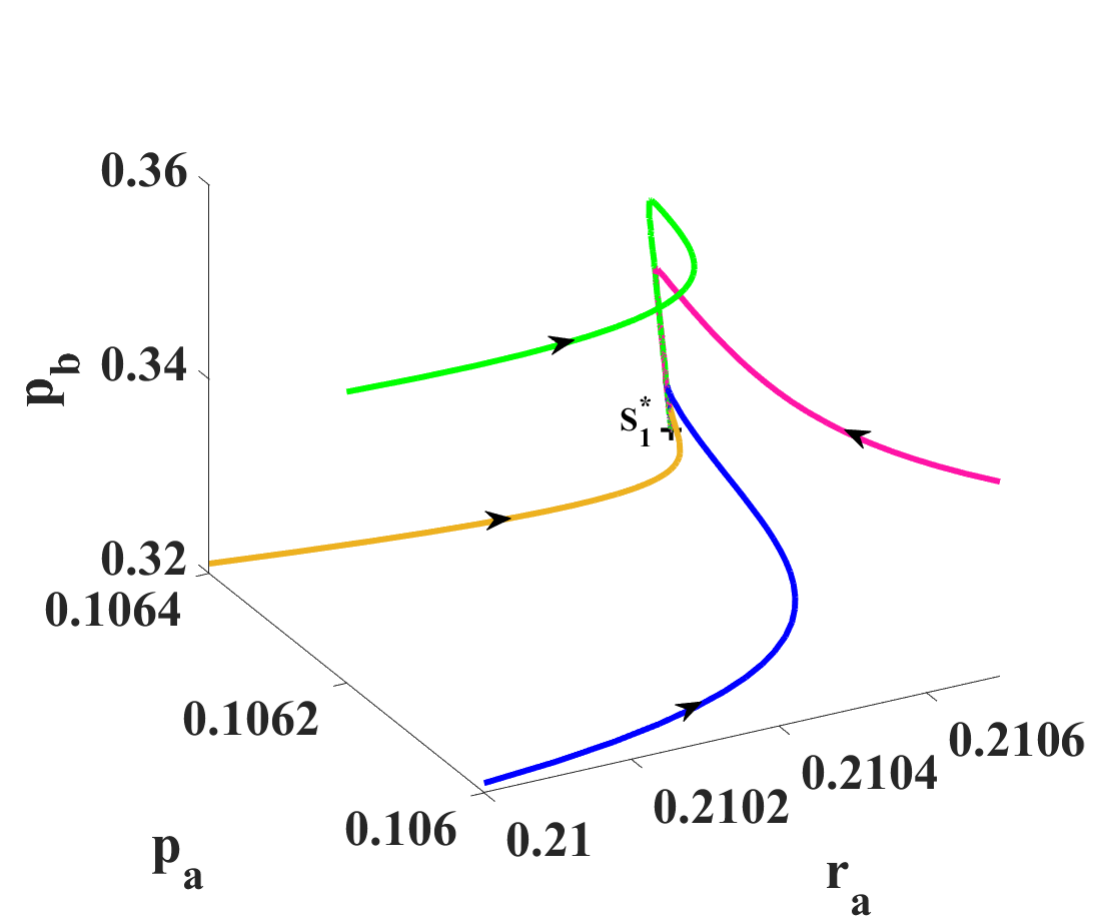}}
	\caption{For $n_a=1,\,n_b=1,\,n_{aa}=1$; (a) Bifurcation plot with unique stable steady state (b) Phase portrait at $\theta_b=2.5.$}
	\label{case1traj_AND}
\end{figure}
\end{example}
\begin{example} $n_a=n_b=1,\,n_{aa}=2$ (case 2), and the other parameters are chosen as: 
$\delta_a = 0.8,\, \delta_b = 0.03,\,  \gamma_a = 0.91,\,  \gamma_b = 0.235,\, k_a = 0.1,\,  k_b = 0.04,\,  m_b = 0.2,\,  m_a = 6,\,  \theta_a = 0.55,\,  A_1 = 5.8,\, B_1 = 0.02,\, \theta_{aa} = 1.6,\,  
\theta_b = (0.4,18).$ In this range of $\theta_b,$ there is unique ($S_1^*$) steady state in $\theta_b^1$, three ($S_1^*,\,S_2^*,\,S_3^*$) in $\theta_b^2$ and unique ($S_3^*$) in $\theta_b^3,$ such that $\theta_b=\theta_b^1\cup\theta_b^2\cup\theta_b^2$ where $\theta_b^1=(0.4,{\theta_b^{SN}}^1),\,\theta_b^2=[{\theta_b^{SN}}^1,{\theta_b^{SN}}^2]$ and $\theta_b^2=({\theta_b^{SN}}^2,18).$ The steady states $S_1^*,\,S_3^*$ are stable and ($S_2^*$) is unstable. The corresponding bifurcation diagram is shown in Fig. \ref{F3_AND} (a).\\	
To ensure the stability of $S_1^*,$ we chosen $\theta_b=0.6\in\theta_b^1$ and the coefficients of characteristic equation at $S_1^*=(8.4987,0.3754,1.0623,0.5006)$ are  calculated  as $\epsilon_1=1.975>0,\,\epsilon_2=0.922>0,\,\epsilon_3=0.1344>0,\,\epsilon_4=0.0031>0,$ and $\epsilon_1\epsilon_2\epsilon_3-\epsilon_1^2\epsilon_4 - \epsilon_3^2=0.2147>0.$ Therefore, from Theorem \ref{existence_AND}, we conclude that $S_1^*$ is locally asymptotically stable. We draw the phase portrait for the considered parameters, shown in Fig. \ref{F3_AND} (b). We can see that all trajectories starting from any initial points are converging to $S_1^*.$\\
Further, we chosen $\theta_b=6\in\theta_b^2$ and the coefficients of characteristic equation at $S_1^*=(10.9693,0.3288,1.3712,0.4383)$ are  calculated  as $\epsilon_1=1.975>0,\,\epsilon_2=0.6093>0,\,\epsilon_3=0.0516>0,\,\epsilon_4=0.000992>0,$ and $\epsilon_1\epsilon_2\epsilon_3-\epsilon_1^2\epsilon_4 - \epsilon_3^2=0.0555>0.$ The coefficients of characteristic equation at $S_2^*=(17.4281, 0.25666,2.17851,0.342213 )$ are  calculated  as $\epsilon_1=1.975>0,\,\epsilon_2=0.3530>0,\,\epsilon_3=-0.01634<0,\,\epsilon_4=-0.000922<0,$ and $\epsilon_1\epsilon_2\epsilon_3-\epsilon_1^2\epsilon_4 - \epsilon_3^2=-0.0081<0.$ The coefficients of characteristic equation at $S_3^*=(282.698 ,0.0981496,35.3372,0.130866 )$ are  calculated  as $\epsilon_1=1.975>0,\,\epsilon_2=1.0659>0,\,\epsilon_3= 0.1726>0,\,\epsilon_4=0.0036>0,$ and $\epsilon_1\epsilon_2\epsilon_3-\epsilon_1^2\epsilon_4 - \epsilon_3^2=0.319201502799312>0.$  Therefore, from Theorem \ref{existence_AND}, we conclude that $S_1^*,\,S_3^*$ is locally asymptotically stable and $S_2^*$ is unstable.  We draw the phase portrait for the considered parameters, shown in Fig. \ref{F3_AND} (c). We can see that all trajectories are converging to $S_1^*,\,S_3^*,$ and nearby trajectories of $S_2^*$ are moving away from $S_2^*.$ This shows the bistable behaviour of the system.\\
We chosen $\theta_b=17\in\theta_b^3,$
the coefficients of characteristic equation at $S_3^*=(924.286, \\0.0891386, 115.536,0.118851)$ are  calculated  as $\epsilon_1=1.975>0,\,\epsilon_2=1.1496>0,\,\epsilon_3= 0.1947>0,\,\epsilon_4=0.0046>0,$ and $\epsilon_1\epsilon_2\epsilon_3-\epsilon_1^2\epsilon_4 - \epsilon_3^2=0.3861>0.$ Therefore, from Theorem \ref{existence_AND}, we conclude that $S_3^*$ is locally asymptotically stable. We draw the phase portrait for the chosen parameters, shown in Fig. \ref{F3_AND} (d). We can see that all trajectories starting from any initial points are converging to $S_3^*.$ \\
	In this example, we also discuss the case of saddle-node bifurcation. We have seen that the system (\ref{and}) has two steady states $S_2^*,\,S_3^*$ in $\theta_b^2$ such that as the value of $\theta_b$ decreases, the two steady states collide at $\theta_b^{{{SN}^1}}=2.389134923929666$ and denoted as $S_{{SN}^1}^*=(45.7614,0.159759,5.72018,0.21301).$ The Jacobian matrix $J=Df(S_{{SN}^1}^*,\theta_b^{{{SN}^1}})$ has a simple eigenvalue $\lambda=0$ and  $v=(-0.9923,0.0015, -0.1240,0.002)^T,$\\$w=(-0.00033,0.1678,-0.0030,0.9858)^T$, are the eigenvectors of $J$ and $J^T,$ respectively. Both the tranversality conditions $w^Tf_{\theta_b}(S_{{SN}^1}^*,\theta_b^{{{SN}^1}})=-0.0026\neq 0$ and \\ $	w^T[D^2f(S_{{SN}^1}^*,\theta_b^{{{SN}^1}})(v,v)]=0.00000345\neq 0$ are satisfied. Hence, the system (1) experiences saddle-node bifurcation at $\theta_b=\theta_b^{{{SN}^1}}.$ Also, the system (\ref{and}) has two steady states $S_1^*,\,S_2^*$ in $\theta_b^2$ such that as the value of $\theta_b$ increases, the two steady states collide at $\theta_b^{{{SN}^2}}=15.667663989915999$ and denoted as $S_{{SN}^2}^*=(12.9963,0.300364,1.62454,0.400485 ).$ The Jacobian matrix $J=Df(S_{{SN}^2}^*,\theta_b^{{{SN}^2}})$ has a simple eigenvalue $\lambda=0$ and $v=(-0.9921,\\0.0123,-0.124,0.0164)^T,\,w=(-0.0627,0.1374,-0.5705,0.8073)^T$, are the eigenvectors of $J$ and $J^T,$ respectively. Both the tranversality conditions $w^Tf_{\theta_b}(S_{{SN}^2}^*,\theta_b^{{{SN}^2}})=-0.00062\\ \neq 0$ and  $	w^T[D^2f(S_{{SN}^2}^*,\theta_b^{{{SN}^2}})(v,v)]=-0.0040\neq 0$ are satisfied. Hence, the system (\ref{and}) experiences saddle-node bifurcation at $\theta_b=\theta_b^{{{SN}^2}}.$\\
The system (\ref{and}) exhibits a hysteresis effect in $\theta_b^2$ where multiple steady states coexist, as
shown in Fig \ref{F3_AND} (a). The two outer steady states are stable, while the interior steady state (red) is
unstable.\\
Again, in this case, the parameters are chosen as $\delta_a = 0.1,\,   \delta_b = 0.03,\, \gamma_a = 0.99,\, \gamma_b = 0.005,\, k_a = 0.525,\, k_b = 0.9,\, m_b = 0.01,\,  m_a =5,\, \theta_a = 0.4,\, A_1 =0.25,\, B_1 =0.00285,\, \theta_{aa}=20.7,\,\theta_b = (0.001,1.2).$ For this set of parameters, we obtain a unique stable steady state as discussed in Case 2 (AND). The corresponding bifurcation plot is shown in Fig. \ref{F3_AND} (e). To ensure the stability of $S_1^*,$ we choose $\theta_b=0.9$ from the range of $\theta_b.$  The coefficients of characteristic equation at $S_1^*$ are  calculated as $\epsilon_1=1.1250>0,\,\epsilon_2=0.1243>0,\,\epsilon_3=0.0032>0,\,\epsilon_4=0.00001284>0,$ and $\epsilon_1\epsilon_2\epsilon_3-\epsilon_1^2\epsilon_4 - \epsilon_3^2=0.0004171>0.$ Therefore, from Theorem \ref{existence_AND}, we conclude that $S_1^*$ is locally asymptotically stable. We draw the phase portrait for the considered $\theta_b$, shown in Fig. \ref{F3_AND} (f). We can see that all trajectories starting from any initial points are converging to $S_1^*=(0.273443,1.00583,1.43558,30.1749).$
\begin{figure}[H]
	\subfigure[]{\includegraphics[scale=0.27]{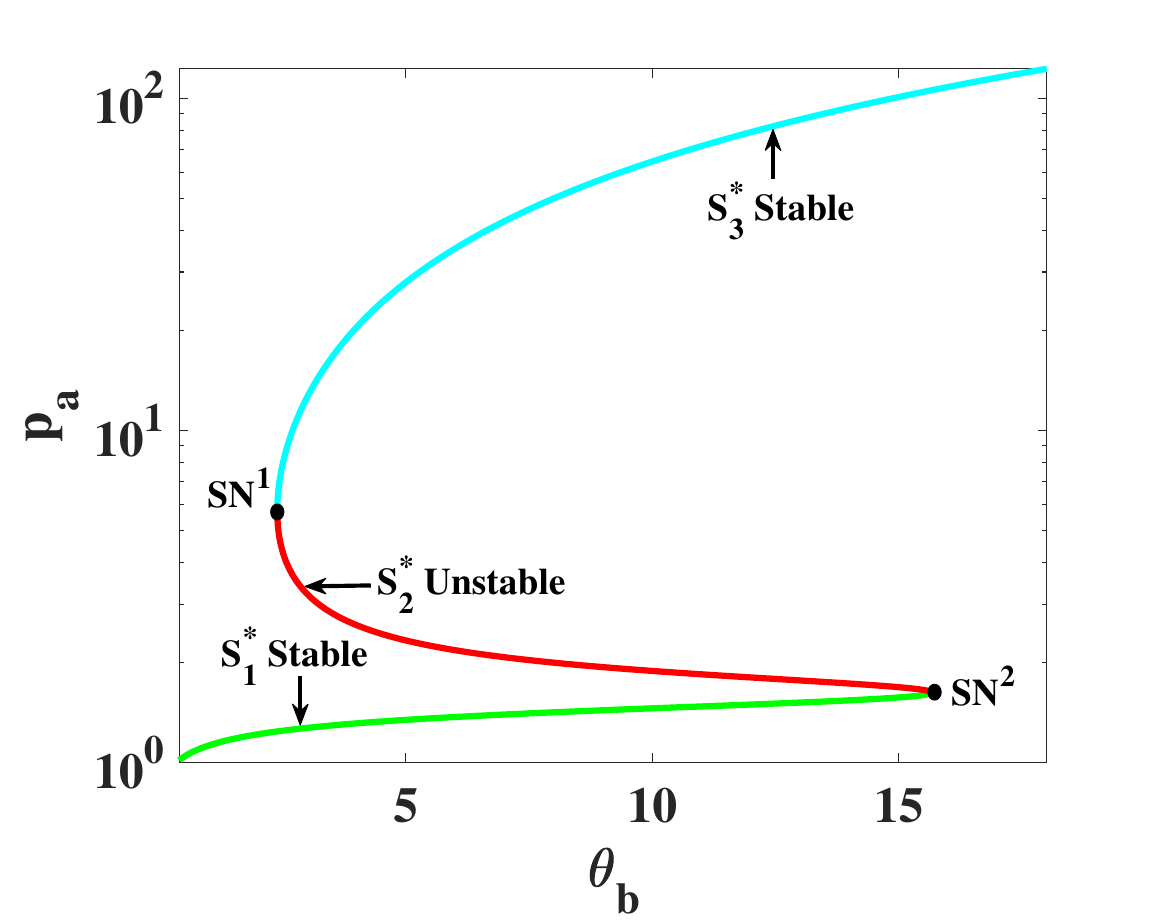}}
	\hfill
	\subfigure[]{\includegraphics[scale=0.27]{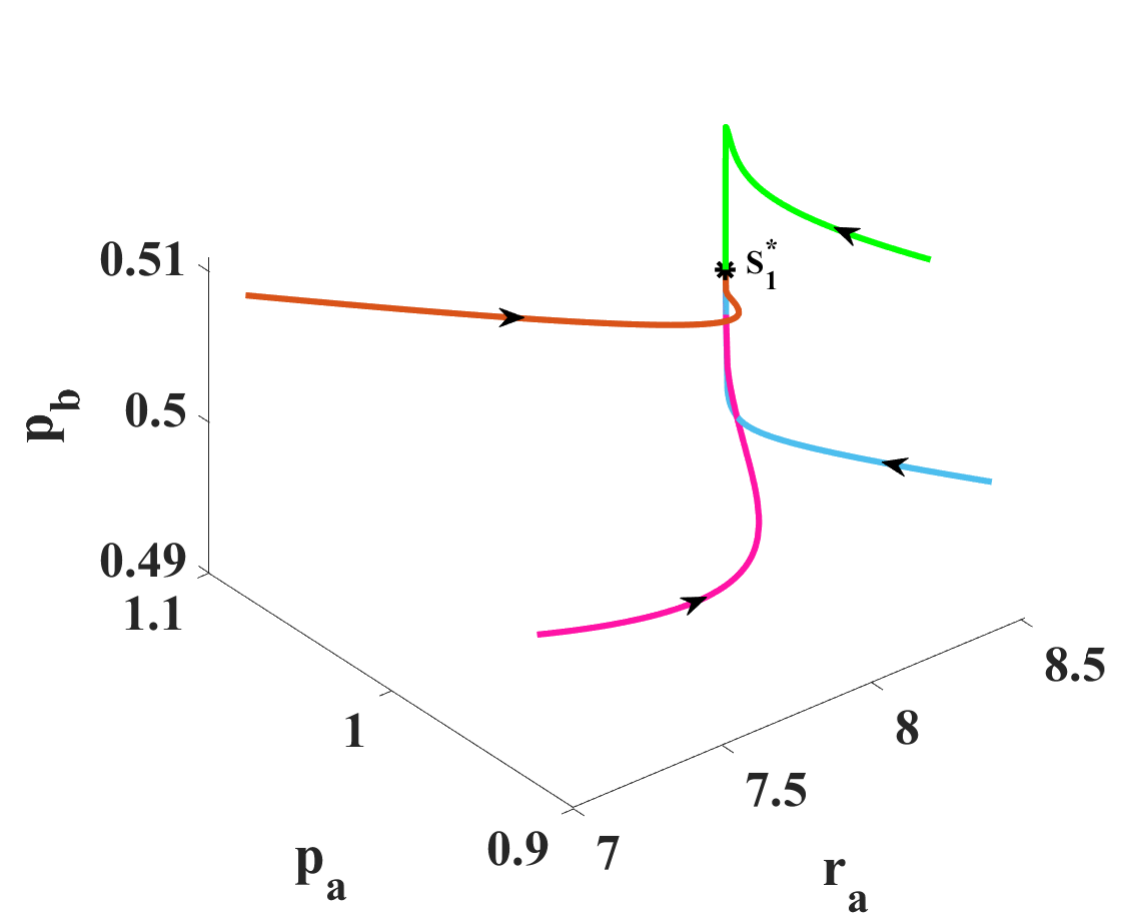}}
	\hfill
	\subfigure[]{\includegraphics[scale=0.27]{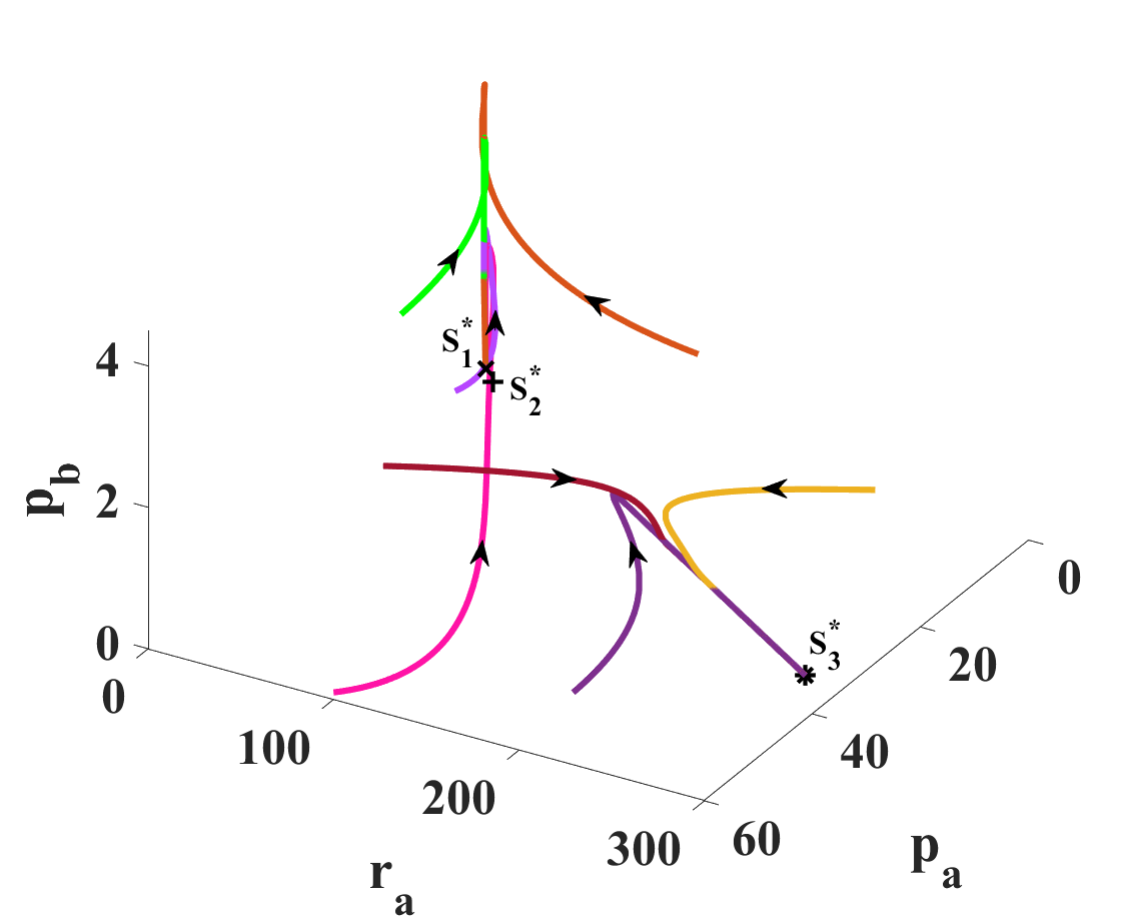}}
\end{figure}
\begin{figure}[H]
	\subfigure[]{\includegraphics[scale=0.27]{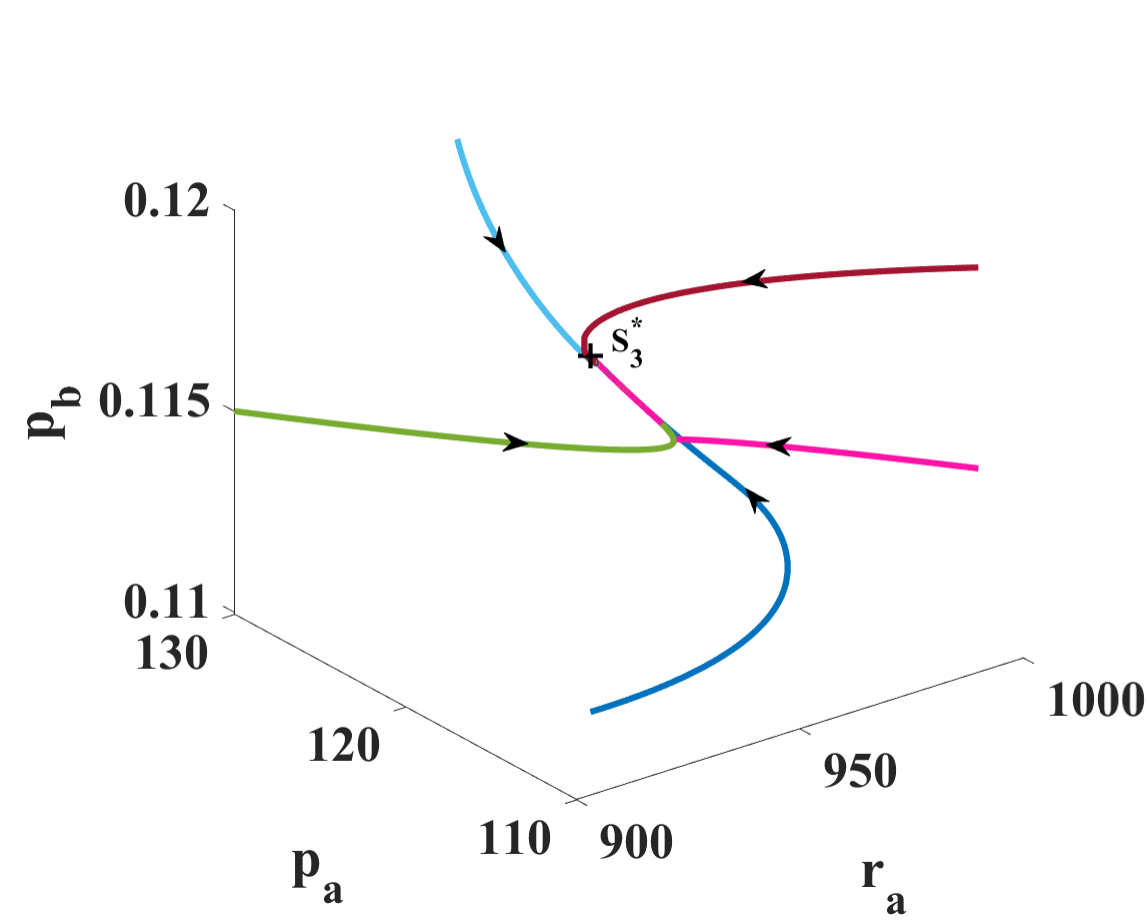}}
	\hfill
 \subfigure[]{\includegraphics[scale=0.26]{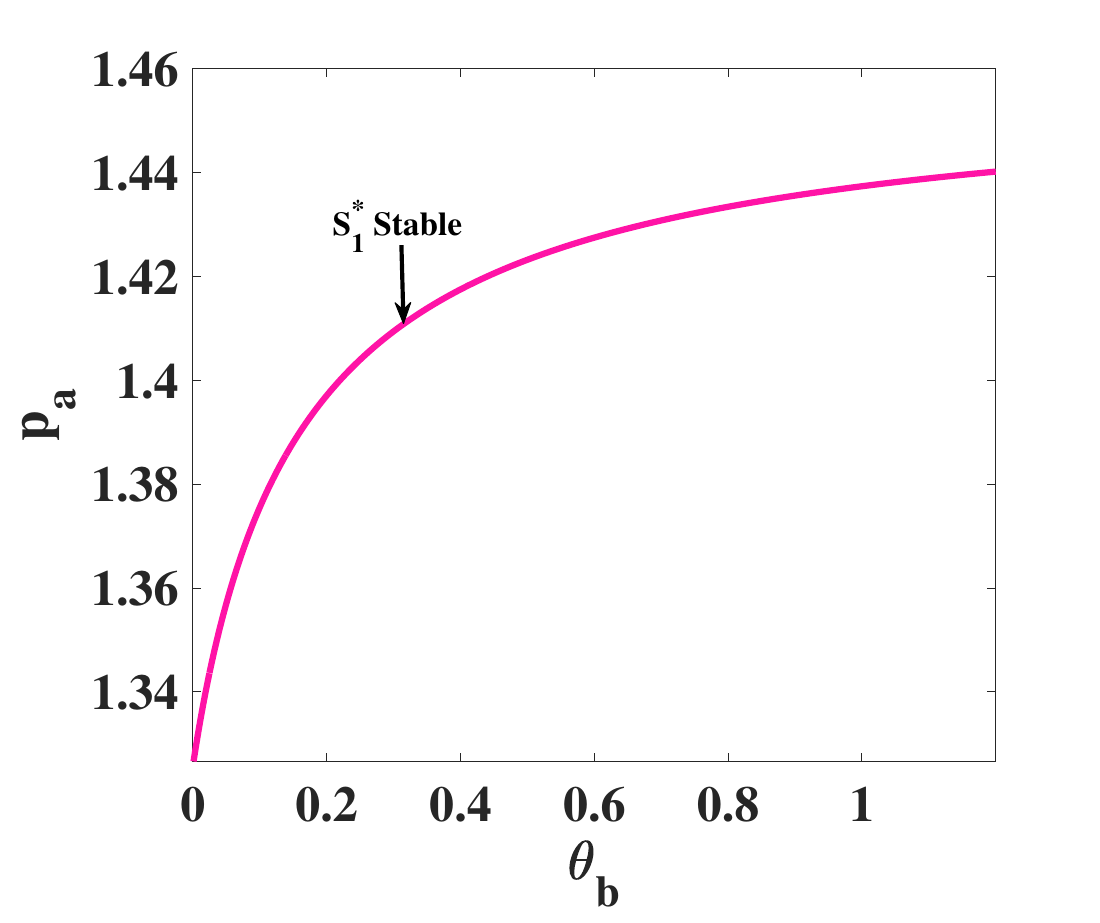}}
	\hfill
	\subfigure[]{\includegraphics[scale=0.26]{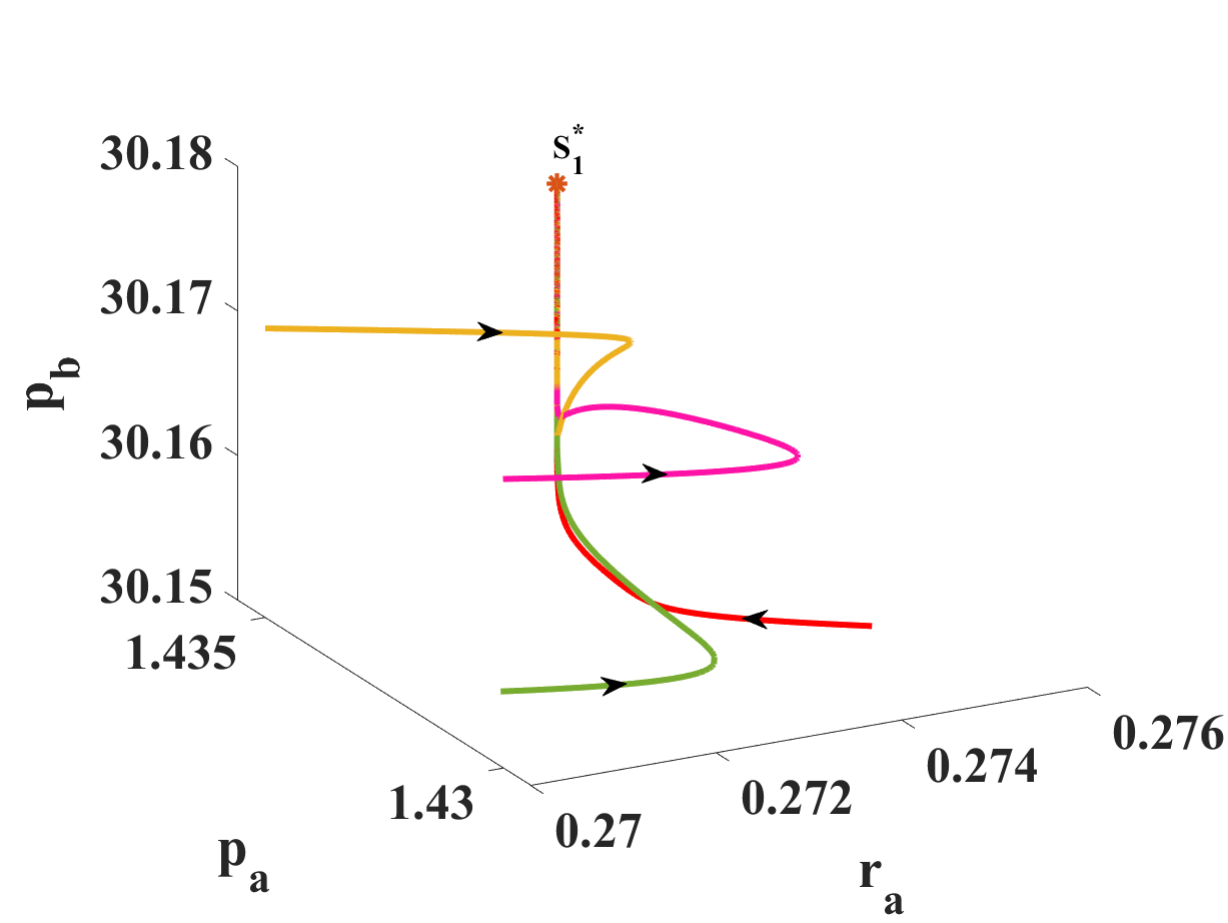}}
	\hfill
	\caption{For $n_a=1,\,n_b=1,\,n_{aa}=2$; (a) Bifurcation plot exhibiting hysteresis effect with three steady states (b) Phase portrait at $\theta_b=0.6$ (c) Phase portrait at $\theta_b=6$ depicting bistablility (d)  Phase portrait at $\theta_b=17$ (e) Bifurcation plot with the unique stable steady state (f) Phase portrait at $\theta_b=0.9.$}
	\label{F3_AND}
\end{figure}
\end{example}

\subsection{Conclusion and biological interpretation for this section}
In this section, we assume that the two genes $a$ and $b$ don't compete for binding to the promoter regions. However, only both binding events can elicit gene expression. This type of biological regulation is modeled using a non-competitive AND logic function. Our results show that, unlike competitive binding with OR logic, AND logic can't produce bistability with monomeric regulations (Hill coefficients of degree 1). We show that bistability requires dimeric (Hill coefficients of degree 2) autoloop. We also show that only through "forced" cell state transition (hysteresis) and not the smooth state swap (biphasic transition) is possible with this logic function for all combinations of dimerizations. \\

\section{Mathematical Model [Type-II $\mathcal{OR}$ Logic]}
 In this section, we model the co-regulation of node $a$ using non-competitive OR logic \cite{logics}. With this formalism, the model system takes the following form.\\
\begin{equation}\label{1_OR2}
	\begin{aligned}
		&\frac{dr_a}{dt}= m_a\biggl(\frac{(\frac{p_a}{\theta_{aa}})^{n_{aa}}}{1+(\frac{p_a}{\theta_{aa}})^{n_{aa}}}+\frac{1}{1+(\frac{p_b}{\theta_b})^{n_{b}}}\biggr)-\gamma_a r_a+A_1, \\
		&\frac{dr_b}{dt}  =\frac{m_b}{1+(\frac{p_a}{\theta_a})^{n_{a}}}-\gamma_b r_b+B_1, \\
		&\frac{dp_a}{dt} =k_ar_a-\delta_ap_a,\\
		&\frac{dp_b}{dt} =k_br_b-\delta_bp_b,
	\end{aligned}
\end{equation}
with $r_a(0)\geq 0,\,r_b(0)\geq 0,\,p_a(0)\geq 0,\,p_b\geq 0.$ Parameter description is same as Section \ref{MM_OR}.
\subsection{Mathematical Analysis}\label{4.1}
We get the steady state $S^*=(r_a^*,r_b^*,p_a^*,p_b^*)$ by setting the right hand side of the model (\ref{1_OR2}) to zero.
From third and fourth equations of model (\ref{1_OR2}), we obtained ${r_a}^*=\frac{\delta_a {p_a}^*}{k_a}$ and ${r_b}^*=\frac{\delta_b {p_b}^*}{k_b},$ respectively, as $\frac{dp_a}{dt}=0$ and $\frac{dp_b}{dt}=0.$ Using  $\frac{dr_b}{dt}=0$ and then putting the value of $r_b,$ we get ${p_b}^*=\frac{1}{\gamma_b}\bigg\{B_1'+\frac{m_2}{\theta_a^{n_a}+{{p_a}^*}^{n_a}}\bigg\}.$ Finally, using the value of $p_b$ and $r_a$ in $\frac{dr_a}{dt}=0,$ we get a polynomial in $p_a$ as
\begin{equation}\label{2_OR2}
	\begin{aligned}	\bigg(m_2+&B_1'\Big(\theta_a^{n_a}+p_a^{n_a}\Big)\bigg)^{n_b}\biggl\{-\gamma_a p_{a}^{n_{aa+1}}+p_{a}^{n_{aa}}\big(m_3+A_1'\big)-p_a\gamma_a\theta_{aa}^{n_{aa}}+A_1'\theta_{aa}^{n_{aa}}\biggr\}+\theta_{b}^{n_{b}}\gamma_{b}^{n_{b}}\\&(\theta_{a}^{n_{a}}+p_{a}^{n_{a}})^{n_b}\biggl\{-\gamma_a p_{a}^{n_{aa+1}}+p_{a}^{n_{aa}}\big(2m_3+A_1'\big)-p_a\gamma_a\theta_{aa}^{n_{aa}}+(m_3+A_1')\theta_{aa}^{n_{aa}}\biggr\}=0,
	\end{aligned}
\end{equation}
where $A_1'=\frac{A_1 k_a}{\delta_a},\,B_1'=\frac{B_1 k_b}{\delta_b},\,m_2=\frac{m_b k_b \theta_a^{n_a}}{\delta_b},$ and $m_3=\frac{m_a k_a}{\delta_a}.$ Here, $p_a^*$ is a positive real root of (\ref{2_OR2}) which we discuss in different cases explicitly. \\
\textbf{Case 1:} When $n_a=n_b=n_{aa}=1,$ then from (\ref{2_OR2}), we get a cubic polynomial in $p_a$
\begin{equation}\label{3OR2}
	f(p_a)=Ap_a^3+Bp_a^2+Cp_a+D,
\end{equation}
where $A=-\gamma_a (B_1'+\gamma_b\theta_b),\,B=-\gamma_a (B_1'\theta_a+m_2)+B_1'(m_3+A_1'-\gamma_a\theta_{aa})-\gamma_a\gamma_b\theta_a\theta_b+\theta_b\gamma_b(2m_3+A_1'-\gamma_a\theta_{aa}),\,
C=A_1'B_1'\theta_{aa}+(B_1'\theta_a+m_2)(m_3+A_1'-\gamma_a\theta_{aa})+\theta_b\gamma_b\theta_{aa}(m_3+A_1')+\theta_a\gamma_b\theta_b(2m_3+A_1'-\gamma_a\theta_{aa})$ and $D=\theta_{aa}\bigl\{(B_1'\theta_a+m_2)A_1'+\theta_a\gamma_b\theta_b(m_3+A_1')\bigr\}.$
The nature of the curve $f$ can be observe from (\ref{3OR2}) as 
\begin{enumerate}
	\item[(i)] $f(0)>0$ because $D>0,$
	\item[(ii)] $f\to -\infty\, (\infty)$ as $p_a\to \infty\,(-\infty)$ because $A<0,$
	
	\item[(iii)] The number of positive real roots of equation (\ref{3OR2}) can be seen using Descarte's rule of sign, mentioned in Table (\ref{T1and}).
\end{enumerate}
Therefore, $f(p_a)=0$ will have either unique or three positive real roots. Now, to discard the existence of three positive real roots, we use the nullcline method. From the system (\ref{1_OR2}), we have
\begin{equation}\label{6_OR2}
	\begin{aligned}
		&F_1(p_a,p_b):=\frac{m_{a} k_a\theta_bp_a}{\delta_a(\theta_{aa}+p_a)(\theta_{b}+p_b)}+A_1'-\gamma_{a}p_a \\
		&F_2(p_a,p_b):=\frac{m_{b}k_b\theta_a}{\delta_b(\theta_{a}+p_a)}+B_{1}'-\gamma_{b}p_b,
	\end{aligned}
\end{equation}
First to observe the nature of curve $F_2(p_a,p_b)=0,$ we reduce it in the form of
\begin{equation}\label{7or2}
	p_b=g_1(p_a):=\frac{1}{k_2}\Big\{\frac{m_{b}\theta_a}{(\theta_{a}+p_a)}+B_{1}\Big\}\tag{$\mathscr{A}_2$}.
\end{equation}
The following conclusion can be made from (\ref{7or2})
\begin{enumerate}
	\item[$\mathscr{A}_2$(i)] $g_1(0)=\frac{1}{k_2}(m_b +B_{1})>\frac{B_1}{k_2}>0$ 
	\item[$\mathscr{A}_2$(ii)] $g_1(p_a)\to \frac{B_1}{k_2}$ as $p_a\to \pm\infty$ 
	\item[$\mathscr{A}_2$(iii)] $g'_1(p_a)<0$, i.e, $g_1(p_a) $ is a decreasing function of $p_a.$
\end{enumerate}
The graph of $F_2(p_a,p_b)=0$ (or $p_b=g_1(p_a)$) is shown in Fig. \ref{fig:g1and}.\\
Now, to observe the nature of curve $F_1(p_a,p_b)=0,$ we reduce it in the form of
\begin{equation}\label{8_OR2}
p_b=g_2(p_a):=\frac{-\theta_{b}\left(m_{a}\left(2p_a+\theta_{aa}\right)+\left(A_1'-k_1p_a\right)\left(\theta_{aa}+p_a\right)\right)}{m_{1}p_a+\left(A_1'-k_1p_a\right)\left(\theta_{aa}+p_a\right)}\tag{$\mathscr{D}$}
\end{equation}
The following conclusion can be made from (\ref{8_OR2})
\begin{enumerate}
		\item[$\mathscr{D}$(i)]The $p_a-$intercept of $g_2(p_a)$ are $x_1=\frac{(2m_a+A_1-k_1\theta_{aa})+\sqrt{(2m_a+A_1-k_1\theta_{aa})^2+4k_1A_1\theta_{aa}}}{2k_1}$ and $x_2=\frac{(2m_a+A_1-k_1\theta_{aa})-\sqrt{(2m_a+A_1-k_1\theta_{aa})^2+4k_1A_1\theta_{aa}}}{2k_1}$. We note that $x_1>0$ and $x_2<0,$ which means we have one positive and one negative $p_a-$intercept. Also  $p_b-$intercept is negative, i.e. $g_2(0)<0.$
		\item[$\mathscr{D}$(ii)]  $g_2(p_a)$ is a rational function, and it has two vertical asymptotes at\\ $x_3=\frac{(m_a+A_1-k_1\theta_{aa})+\sqrt{(m_a+A_1-k_1\theta_{aa})^+4k_1A_1\theta_{aa}}}{2k_1}$ and \\$x_4=\frac{(m_a+A_1-k_1\theta_{aa})-\sqrt{(m_a+A_1-k_1\theta_{aa})^2+4k_1A_1\theta_{aa}}}{2k_1}$. \\We note that $x_3>0$ and $x_4<0,$ which means we have one positive and one negative vertical asymptotes. Moreover, $p_b=-\theta_b$ is a horizontal asymptote. 
			\end{enumerate}
We note that $x_3<x_1.$ For $p_a\in (0,x_3),$ we have $g_2(p_a)<0$ and for $p_a\in (x_3,x_1),$ we have $g_2(p_a)>0.$ Also, $g_2'(p_a)<0$ which implies that $g_2(p_a)$ is strictly decreasing. The corresponding diagram is shown in 	Fig. \ref{OR2_NULL} (a). Therefore, from the above discussion regarding the nature of $g_1$ and $g_2$, we conclude that there exists a unique steady state, shown in Fig. \ref{OR2_NULL} (b), and we discard the case of three positive steady states.
\begin{figure}[H]
\centering
	\subfigure[]{\includegraphics[scale=0.35]{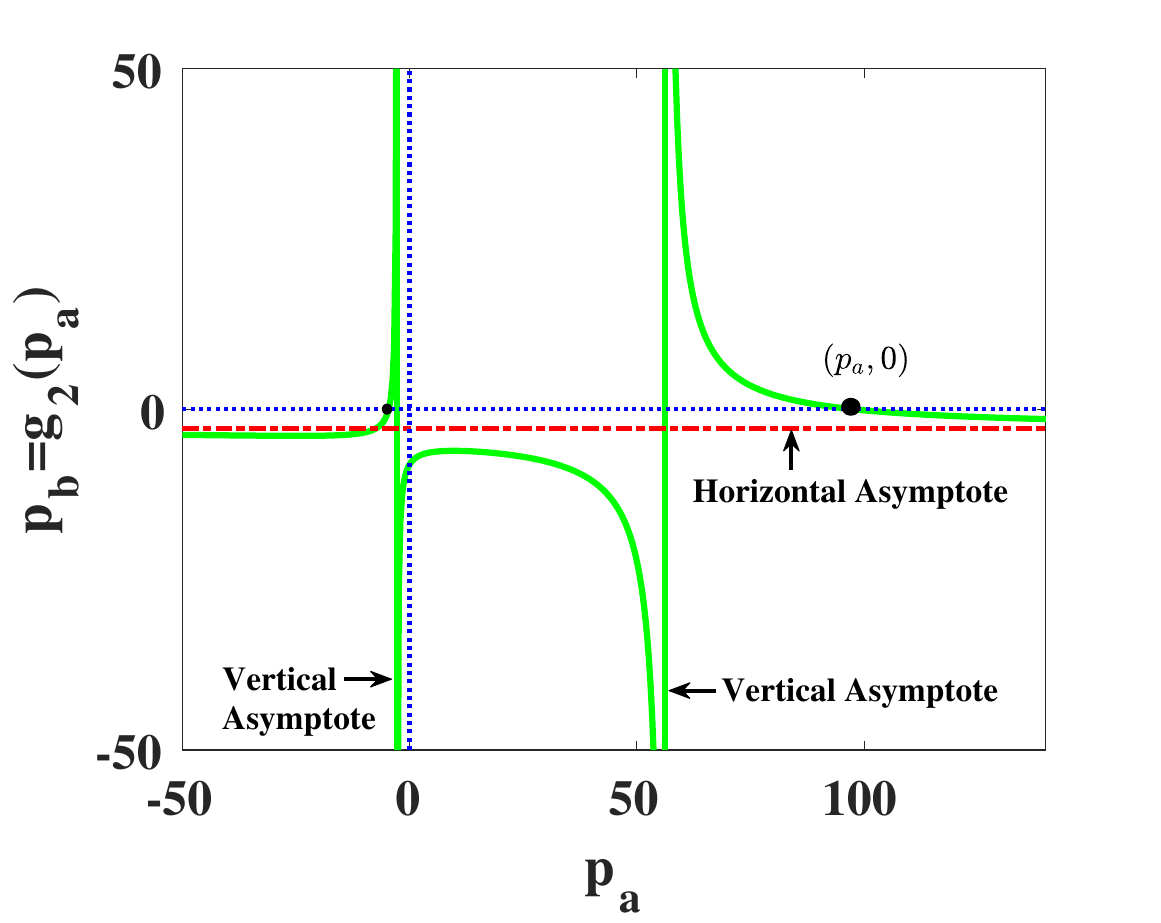}}
	\hfil
	\subfigure[]{\includegraphics[scale=0.35]{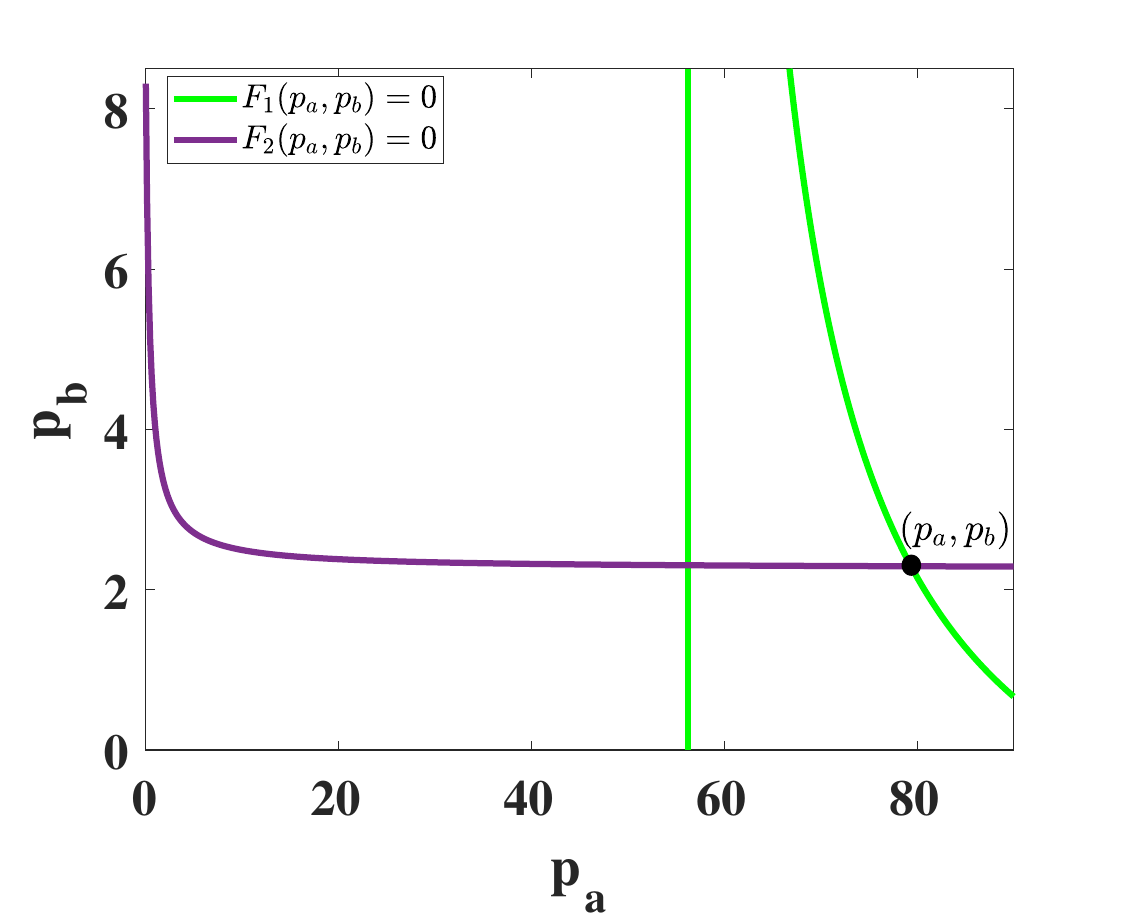}}
	\caption{For $n_a = n_b =  n_{aa} = 1;$ (a)
The graph depicting unique positive $p_a$-intercept of $g_2(pa)$ (b) Existence of unique positive steady state corresponding to Fig. \ref{OR2_NULL} (a). }
	\label{OR2_NULL}
\end{figure}
\textbf{Case 2:} When $n_a=n_b=1,\,n_{aa}=2,$ then from (\ref{2_OR2}), we get a quartic polynomial in $p_a$
\begin{equation}\label{4or2}
	f(p_a)=Ap_a^4+Bp_a^3+Cp_a^2+Dp_a+E,
\end{equation}
where $A=-\gamma_a (B_1'+\gamma_b\theta_b),\,B=-\gamma_a (B_1'\theta_a+m_2)+B_1'(m_3+A_1')-\gamma_a\gamma_b\theta_a\theta_b+\theta_b\gamma_b(2m_3+A_1'),\,
C=(B_1'\theta_{a}+m_2)(m_3+A_1')+\theta_b\gamma_b\theta_{a}(2m_3+A_1')-\gamma_a\theta_{aa}^2(B_1'+\gamma_b\theta_b),\,D=\theta_{aa}\bigl\{B_1'A_1'+\gamma_b\theta_b(m_3+A_1')\bigr\}-\gamma_a\theta_{aa}^2(B_1'\theta_a+m_2+\gamma_b\theta_b\theta_a)$ and $E={\theta _{aa}}^2\bigl\{(B_1'\theta_a+m_2)A_1'+\gamma_b\theta_b\theta_a(m_3+A_1')\bigr\}.$\\
The number of positive real roots of equation (\ref{4or2}) can be seen using Descarte's rule of sign (mentioned in Table (\ref{T2})); therefore, $f(p_a)=0$ will have either unique or three positive real roots.\\
\textbf{Case 3:} When $n_a=2,\,n_b=3,\,n_{aa}=2,$ then from (\ref{2_OR2}), we get a ninth degree polynomial in $p_a$
\begin{equation}\label{5or2}
f(p_a)=Ap_a^9+Bp_a^8+Cp_a^7+Dp_a^6+Ep_a^5+Fp_a^4+Gp_a^3+Hp_a^2+Ip_a+J,
\end{equation}
where $A=-\gamma _{a}\,\left({B_1'}^3+{\gamma _{b}}^{n_{b}}\,{\theta _{b}}^{n_{b}}\right),\,B={B_1'}^3\,\left(A_1'+m_{3}\right)+{\gamma _{b}}^{n_{b}}\,{\theta _{b}}^{n_{b}}\,\left(A_1'+2\,m_{3}\right),$\\
$C=-\gamma _{a}\,\left({B_1'}^2\,\left(3\,m_{2}+3\,B_1'\,{\theta _{a}}^{n_{a}}+B_1'\,{\theta _{aa}}^{n_{aa}}\right)+{\gamma _{b}}^{n_{b}}\,{\theta _{b}}^{n_{b}}\,\left(3\,{\theta _{a}}^{n_{a}}+{\theta _{aa}}^{n_{aa}}\right)\right),$\\$D=    {B_1'}^2\,\left(\left(A_1'+m_{3}\right)\,\left(3\,m_{2}+3\,B_1'\,{\theta _{a}}^{n_{a}}\right)+A_1'\,B_1'\,{\theta _{aa}}^{n_{aa}}\right)+{\gamma _{b}}^{n_{b}}\,{\theta _{b}}^{n_{b}}\\\left(3\,{\theta _{a}}^{n_{a}}\,\left(A_1'+2\,m_{3}\right)+{\theta _{aa}}^{n_{aa}}\,\left(A_1'+m_{3}\right)\right),$
\\$E=-3\,B_1'\,\gamma _{a}\,\left(m_{2}+B_1'\,{\theta _{a}}^{n_{a}}\right)\,\left(m_{2}+B_1'\,{\theta _{a}}^{n_{a}}+B_1'\,{\theta _{aa}}^{n_{aa}}\right)-3\,\gamma _{a}\,{\gamma _{b}}^{n_{b}}\,{\theta _{a}}^{n_{a}}\,{\theta _{b}}^{n_{b}}\,\left({\theta _{a}}^{n_{a}}+{\theta _{aa}}^{n_{aa}}\right),$
\\$F=3\,B_1'\,\left(\left(A_1'+m_{3}\right)\,\left(m_{2}+B_1'\,{\theta _{a}}^{n_{a}}\right)+A_1'\,B_1'\,{\theta _{aa}}^{n_{aa}}\right)\,\left(m_{2}+B_1'\,{\theta _{a}}^{n_{a}}\right)+3\,{\gamma _{b}}^{n_{b}}\,{\theta _{a}}^{n_{a}}\,{\theta _{b}}^{n_{b}}\\\left({\theta _{a}}^{n_{a}}\,\left(A_1'+2\,m_{3}\right)+{\theta _{aa}}^{n_{aa}}\,\left(A_1'+m_{3}\right)\right),$
\\$G=-\gamma _{a}\,{\left(m_{2}+B_1'\,{\theta _{a}}^{n_{a}}\right)}^2\,\left(m_{2}+B_1'\,{\theta _{a}}^{n_{a}}+3\,B_1'\,{\theta _{aa}}^{n_{aa}}\right)-\gamma _{a}\,{\gamma _{b}}^{n_{b}}\,{\theta _{a}}^{2\,n_{a}}\,{\theta _{b}}^{n_{b}}\,\left({\theta _{a}}^{n_{a}}+3\,{\theta _{aa}}^{n_{aa}}\right),$
\\$H=\left(\left(A_1'+m_{3}\right)\,\left(m_{2}+B_1'\,{\theta _{a}}^{n_{a}}\right)+3\,A_1'\,B_1'\,{\theta _{aa}}^{n_{aa}}\right)\,{\left(m_{2}+B_1'\,{\theta _{a}}^{n_{a}}\right)}^2+{\gamma _{b}}^{n_{b}}\,{\theta _{a}}^{2\,n_{a}}\,{\theta _{b}}^{n_{b}}\\\left({\theta _{a}}^{n_{a}}\,\left(A_1'+2\,m_{3}\right)+3\,{\theta _{aa}}^{n_{aa}}\,\left(A_1'+m_{3}\right)\right),$
\\$I=-\gamma _{a}\,{\theta _{aa}}^{n_{aa}}\,\left({\left(m_{2}+B_1'\,{\theta _{a}}^{n_{a}}\right)}^3+{\gamma _{b}}^{n_{b}}\,{\theta _{a}}^{3\,n_{a}}\,{\theta _{b}}^{n_{b}}\right),$ and\\
$J={\theta _{aa}}^{n_{aa}}\,\left(A_1'\,{\left(m_{2}+B_1'\,{\theta _{a}}^{n_{a}}\right)}^3+{\gamma _{b}}^{n_{b}}\,{\theta _{a}}^{3\,n_{a}}\,{\theta _{b}}^{n_{b}}\,\left(A_1'+m_{3}\right)\right)
.$\\
The number of positive real roots of equation (\ref{5or2}) can be obtained using Descarte's rule of sign. Due to complexity of polynomial, we discuss the existence  of steady states via numerical simulation.
\subsubsection{Local stability of steady state}
\begin{theorem}\label{existence_OR2}
	The steady state $S_1^{*}$ is locally asymptotically stable if $\epsilon_1>0,\,\epsilon_3>0,\,\epsilon_4>0,$ and $\epsilon_1\epsilon_2\epsilon_3-\epsilon_1^2\epsilon_4 - \epsilon_3^2 >0,$ where $\epsilon_i,\,i=1,..,4$ are provided in the proof.
\end{theorem}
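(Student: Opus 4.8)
The plan is to follow verbatim the two-step template already used for Theorems~\ref{existence} and~\ref{existence_AND}: linearize system~(\ref{1_OR2}) at $S_1^*$, read off the quartic characteristic polynomial, and then invoke the Routh--Hurwitz criterion for a degree-four polynomial. First I would compute the Jacobian $J_{S_1^*}$. Because the $r_b$, $p_a$, and $p_b$ equations are identical to those in the earlier models, $J_{S_1^*}$ retains exactly the same sparsity pattern
$$J_{S_1^*}=\left[\begin{array}{cccc} -\gamma_a & 0 & X & Y \\ 0 & -\gamma_b & Z & 0 \\ k_a & 0 & -\delta_a & 0 \\ 0 & -k_b & 0 & -\delta_b \end{array}\right],$$
with only the three nonzero off-diagonal entries $X=\partial f_1/\partial p_a$, $Y=\partial f_1/\partial p_b$, and $Z=\partial f_2/\partial p_a$ changing. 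The crucial simplification here is that the non-competitive $\mathcal{OR}$ form makes $f_1$ additively separable in $p_a$ and $p_b$, so differentiating each Hill term in isolation gives
$$X=\frac{m_a\, n_{aa}\, p_a^{n_{aa}-1}\,\theta_{aa}^{n_{aa}}}{\left(\theta_{aa}^{n_{aa}}+p_a^{n_{aa}}\right)^2}>0,\qquad Y=-\frac{m_a\, n_b\, p_b^{n_b-1}\,\theta_b^{n_b}}{\left(\theta_b^{n_b}+p_b^{n_b}\right)^2}<0,$$
while $Z=-m_b n_a p_a^{n_a-1}\theta_a^{n_a}/(p_a^{n_a}+\theta_a^{n_a})^2<0$ is unchanged from before.

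Next, since the zero entries of $J_{S_1^*}$ sit in precisely the same positions as in the previous two models, the cofactor expansion of $\det(J_{S_1^*}-\lambda I)$ produces a quartic
$$\lambda^4+\epsilon_1\lambda^3+\epsilon_2\lambda^2+\epsilon_3\lambda+\epsilon_4=0$$
whose coefficients are the \emph{same} symbolic functions of $X$, $Y$, $Z$ already recorded in Theorem~\ref{existence}, namely $\epsilon_1=\gamma_a+\gamma_b+\delta_a+\delta_b$, $\epsilon_2=\gamma_a\gamma_b+(\gamma_a+\gamma_b)(\delta_a+\delta_b)+\delta_a\delta_b-k_aX$, $\epsilon_3=\delta_a\delta_b(\gamma_a+\gamma_b)+(\delta_a+\delta_b)\gamma_a\gamma_b-k_a(\delta_b+\gamma_b)X$, and $\epsilon_4=\gamma_a\gamma_b\delta_a\delta_b-k_ak_bYZ-k_a\gamma_b\delta_bX$. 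I would confirm this by direct expansion rather than merely asserting it, so as to check that no sign or cross-term differs from the earlier cases. Applying the standard Routh--Hurwitz conditions for a quartic then yields local asymptotic stability exactly when $\epsilon_1>0$, $\epsilon_3>0$, $\epsilon_4>0$, and $\epsilon_1\epsilon_2\epsilon_3-\epsilon_1^2\epsilon_4-\epsilon_3^2>0$, which is the claimed statement.

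I expect no genuine analytic obstacle, since the argument is structurally identical to the two preceding stability theorems; the only place demanding care is the bookkeeping of the partial derivatives $X$, $Y$, $Z$ and the verification that the additive $\mathcal{OR}$ nonlinearity, despite altering their explicit closed forms, leaves the determinant expansion and hence the $\epsilon_i$ unchanged in form. It is also worth noting for interpretation that the signs $X>0$, $Y<0$, $Z<0$ give $YZ>0$, so that $\epsilon_4=\gamma_a\gamma_b\delta_a\delta_b-k_ak_bYZ-k_a\gamma_b\delta_bX$ can change sign as parameters vary; this is precisely the mechanism by which stability is lost through the saddle-node structure documented in the numerical examples for this logic.
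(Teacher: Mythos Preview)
Your proposal is correct and follows essentially the same approach as the paper: linearize system~(\ref{1_OR2}) at $S_1^*$, record the Jacobian with the same sparsity pattern and the model-specific partials $X$, $Y$, $Z$ (which you compute correctly for the additive $\mathcal{OR}$ logic), observe that the quartic characteristic polynomial has the identical symbolic coefficients $\epsilon_i$ as in Theorems~\ref{existence} and~\ref{existence_AND}, and conclude via Routh--Hurwitz. Your added remarks on the signs of $X,Y,Z$ and the interpretation of $\epsilon_4$ changing sign are not in the paper's proof but are consistent with its subsequent saddle-node analysis.
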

\begin{proof}At $S_1^{*}, $ the Jacobian matrix of the model system (\ref{1_OR2}) is given by
	\begin{center}
		$J_{S_1^*}=
		\left[\begin{array}{cccc} -\gamma_a & 0 & X&Y\\ 0 & -\gamma_b & Z&0\\ k_a &0  & -\delta_a&0\\0&-k_b&0&-\delta_b \end{array}\right].$
	\end{center}
	where $X=\frac{\mathrm{m}_a\,n_{aa}\,{\theta _{aa}}^{n_{aa}}\,p_a^{n_{aa-1}}}{{\left({\theta _{aa}}^{n_{aa}}+p_a^{n_{aa}}\right)}^2}
	,\,Y=-\frac{\mathrm{m}_a\,n_{b}\,{\theta _{b}}^{n_{b}}\,p_b^{n_{b}-1}}{{\left({\theta _{b}}^{n_{b}}+p_b^{n_{b}}\right)}^2},$ and $Z=-\frac{\mathrm{m}_b\,n_{a}\,{\mathrm{p_a}}^{n_{a}-1}\,{\theta _{a}}^{n_{a}}}{{\left({\mathrm{p_a}}^{n_{a}}+{\theta _{a}}^{n_{a}}\right)}^2}.$
	The characteristic equation of $J_{S_1^*}$ is given as
	\begin{equation}\label{9_OR2}
		\lambda^4+\epsilon_1\lambda^3+\epsilon_2\lambda^2+\epsilon_3\lambda+\epsilon_4=0.
	\end{equation}
	Here, $\epsilon_1=\gamma_a+\gamma_b+\delta_a+\delta_b,\epsilon_2=\gamma_a\gamma_b+(\gamma_a+\gamma_b)(\delta_a+\delta_b)+\delta_a\delta_b-k_aX,\,\epsilon_3=\delta_a\delta_b(\gamma_a+\gamma_b)+(\delta_a+\delta_b)(\gamma_a\gamma_b)-k_a(\delta_b+\gamma_b)X,$ and  $\epsilon_4=\gamma_a\gamma_b\delta_a\delta_b-k_ak_bYZ-k_a\gamma_b\delta_bX.$
	By the Routh-Hurwitz criteria, all the roots of (\ref{9_OR2}) have negative real parts provided $\epsilon_1>0,\,\epsilon_3>0,\,\epsilon_4>0,$ and $\epsilon_1\epsilon_2\epsilon_3-\epsilon_1^2\epsilon_4 - \epsilon_3^2 >0.$
\end{proof}
\subsubsection{Saddle Node Bifurcation}
We find the transversality condition the same as discussed in Subsection \ref{sn_OR}.  Define $f=(f_1,f_2,f_3,f_4)^T,$ where $f_1$ are defined as
\begin{equation*}
	f_1=m_a\biggl(\frac{(\frac{p_b}{\theta_b})^{n_{b}}}{1+(\frac{p_b}{\theta_b})^{n_{b}}}+\frac{1}{1+(\frac{p_b}{\theta_b})^{n_{b}}}\biggr)-\gamma_a r_a+A_1, 
\end{equation*}
and the expression of $f_2 ,\,f_3,$ and $f_4$  are same as Subsection \ref{sn_OR}.
Then the model system (\ref{1_OR2})
experiences a saddle-node bifurcation  at $(S_{SN}^*,\theta_b^{SN})$, if the following transversality conditions \cite{perko_2013} are satisfied:
\begin{equation*}
	w^Tf_{\theta_b}(S_{SN}^*,\theta_b^{SN})=w_1\frac{m_a n_b {\theta_b^{n_b-1}}^{SN}{p_b^{n_b}}^{{SN}^*}}{(\mathcal{X}_2)^2}\neq0. \tag{$\mathcal{T}_{33}$}\end{equation*}
and  
\begin{align*}
	w^T[D^2f(S_{SN}^*,\theta_b^{SN})(v,v)]=&w_1\Biggl\{\frac{\mathcal{O}v_3^2}{\mathcal{X}_1^3}\bigg(\theta_{aa}^{n_{aa}}(n_{aa}-1){p_a^{(n_{aa}-2)}}^{{SN}^*}-(n_{aa}+1){p_a^{(2n_{aa}-2)}}^{{SN}^*}\bigg)\\&-\frac{v_4^2\mathcal{O}_1}{\mathcal{X}_2^3}\bigg({\theta_{b}^{n_{b}}}^{SN}(n_{b}-1){p_b^{(n_{b}-2)}}^{{SN}^*}-(n_{b}+1){p_b^{(2n_{b}-2)}}^{{SN}^*}\bigg)\Biggr\}\\&	-\frac{w_2v_3^2m_b n_a \theta_a^{n_a}}{(\theta_a^{n_a}+{p_a^{n_a}}^{{SN}^*})^3} \bigg\{{p_a^{n_a-2}}^{{SN}^*}\theta_a^{n_a}(n_a-1)-(n_a+1){p_a^{2n_a-2}}^{{SN}^*}\bigg\}\\&\neq 0,\tag{$\mathcal{T}_{44}$}
\end{align*}
where $\mathcal{X}_1=\theta_{aa}^{n_{aa}}+{p_a^{n_{aa}}}^{{SN}^*},\,\mathcal{X}_2=\theta_b^{n_b}+{p_b^{n_b}}^{{SN}^*},\,\mathcal{O}=n_{aa}m_a\theta_{aa}^{n_{aa}},$ and $\mathcal{O}_1=m_an_b\theta_b^{n_b}.$
\subsubsection{Numerical simulation and discussion corresponding to model system (\ref{1_OR2})}
The purpose of this section is to provide validation for the theoretical results that were covered in Subsection \ref{4.1}. We have talked about the system's behavior in terms of monostability, bistability, and tristability with fictitious parameters.
\begin{example}
	$n_a=n_b=n_{aa}=1$ (case 1) and the other parameters are chosen as $\delta_a = 0.09,\,\delta_b = 0.9,\, \gamma_a = 1,\, \gamma_b = 0.9,\, k_a = 0.25,\, k_b = 0.9,\, m_b = 0.1,\,  m_a =2,\,\theta_a = 0.4,\, A_1 =0.0035,\, B_1 =7.85,\, \theta_{aa} =20,\,  ,\,\theta_b = (0.00001,50).$ For this set of parameters, we obtain a unique stable steady state as discussed in Case 1. The corresponding bifurcation plot is shown in Fig. \ref{traj_OR2} (a). To ensure the stability of $S_1^*,$  we choose $\theta_b=10$ from the range of $\theta_b.$  The coefficients of characteristic equation at $S_1^*$ are  calculated as $\epsilon_1=2.89>0,\,\epsilon_2=2.8445>0,\,\epsilon_3=1.0133>0,\,\epsilon_4=0.0790>0,$ and $\epsilon_1\epsilon_2\epsilon_3-\epsilon_1^2\epsilon_4 - \epsilon_3^2=0.0587>0.$ Therefore, from Theorem \ref{existence_OR2}, we conclude that $S_1^*$ is locally asymptotically stable. We draw the phase portrait for the considered $\theta_b$, shown in Fig. \ref{traj_OR2} (b). We can see that all trajectories starting from any initial points are converging to $S_1^*=(1.39595,8.73261,
	3.87764,8.73261 ).$
	\begin{figure}[H]
 \centering
		\subfigure[]{\includegraphics[scale=0.3]{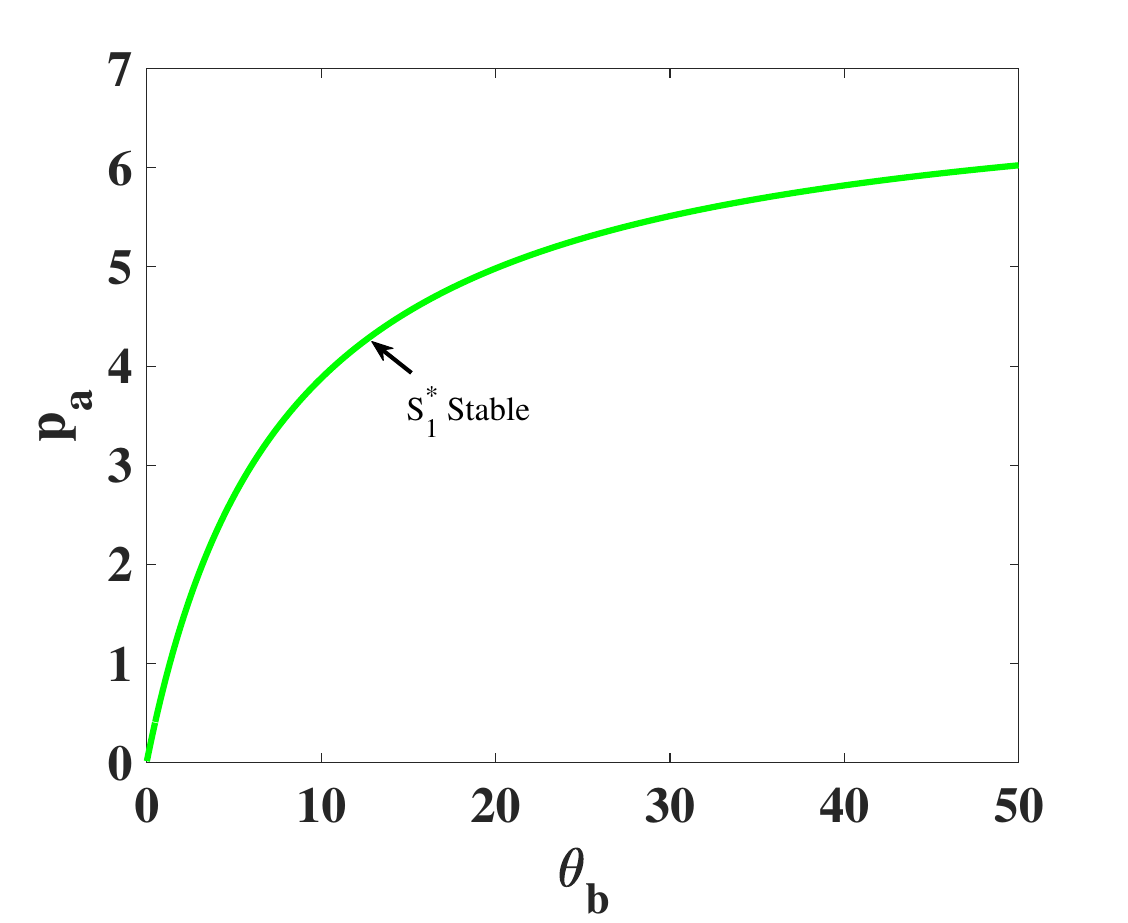}}
		\hfil
		\subfigure[]{\includegraphics[scale=0.3]{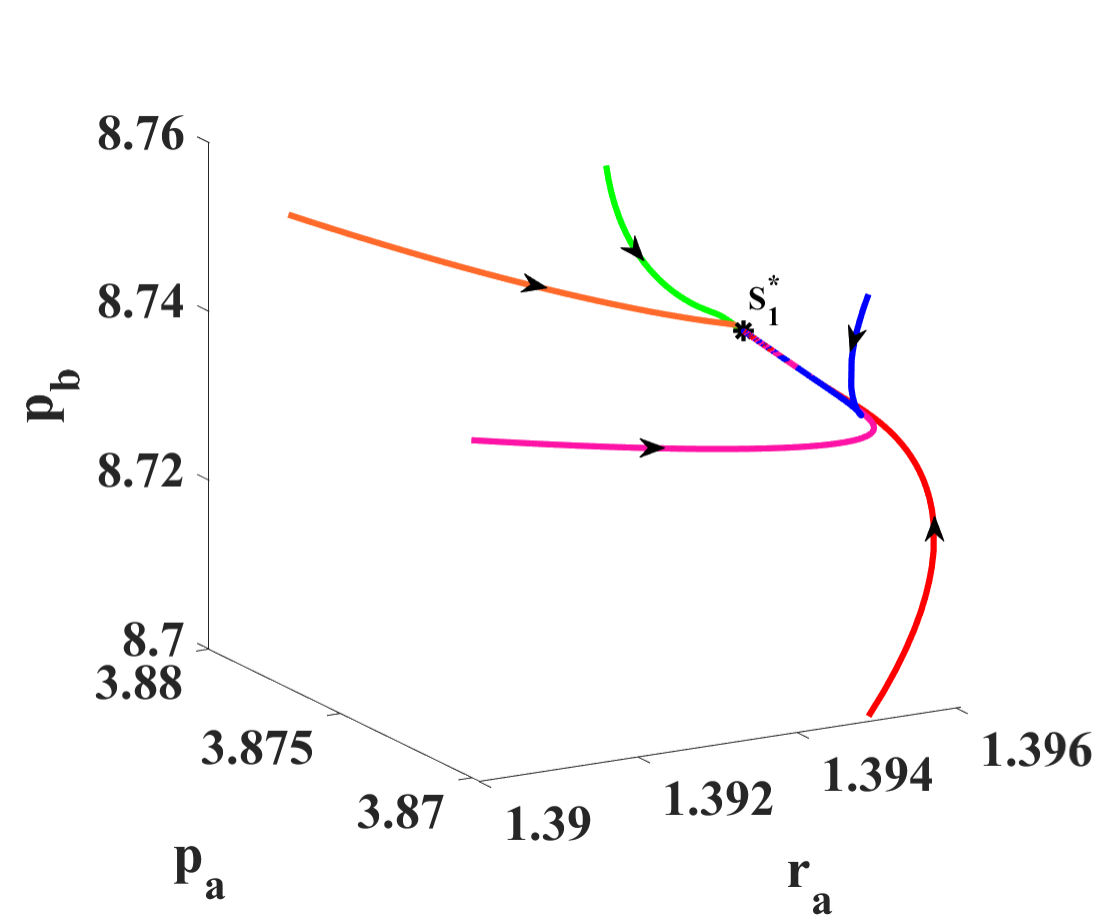}}
		\caption{For $n_a=1,\,n_b=1,\,n_{aa}=1$; (a) Bifurcation plot with the unique stable steady state (b) Phase portrait at $\theta_b=10.$}
		\label{traj_OR2}
	\end{figure}
\end{example}
	\begin{example}$n_a=n_b=1,\,n_{aa}=2$ (case 2), and the other parameters are chosen as: 
		$\delta_a = 0.363,\,\delta_b = 0.35,\, \gamma_a = 0.84,\, \gamma_b = 0.9,\, k_a = 2,\, k_b = 6.3,\, m_b = 5.7,\, m_a = 3.4,\, \theta_a= 6.2,\, A_1 = 0.1,\, B_1 = 0.01,\, \theta_{aa} = 13.1,\,  
		\theta_b = (1.6,6.5).$ In this range of $\theta_b,$ there is unique ($S_1^*$) steady state in $\theta_b^1$, three ($S_1^*,\,S_2^*,\,S_3^*$) in $\theta_b^2$ and unique ($S_3^*$) in $\theta_b^3,$ such that $\theta_b=\theta_b^1\cup\theta_b^2\cup\theta_b^2$ where $\theta_b^1=(1.6,{\theta_b^{SN}}^1),\,\theta_b^2=[{\theta_b^{SN}}^1,{\theta_b^{SN}}^2]$ and $\theta_b^2=({\theta_b^{SN}}^2,6.5).$ The steady states $S_1^*,\,S_3^*$ are stable and ($S_2^*$) is unstable. The correspond bifurcation diagram is shown in Fig. \ref{F3_OR2} (a).\\	
		To ensure the stability of $S_1^*,$ we chosen $\theta_b=1.89\in\theta_b^1$ and the coefficients of characteristic equation at $S_1^*=(0.239206,5.23418,1.31794,94.2152)$ are  calculated  as $\epsilon_1=2.453>0,\,\epsilon_2=2.0213>0,\,\epsilon_3=0.6321>0,\,\epsilon_4=0.0583>0,$ and $\epsilon_1\epsilon_2\epsilon_3-\epsilon_1^2\epsilon_4 - \epsilon_3^2=2.3838>0.$ Therefore, from Theorem \ref{existence_OR2}, we conclude that $S_1^*$ is locally asymptotically stable. We draw the phase portrait for the considered parameters, shown in Fig. \ref{F3_OR2} (b). We can see that all trajectories starting from any initial points are converging to $S_1^*.$\\
		Further, we chosen $\theta_b=3.81\in\theta_b^2$ and the coefficients of characteristic equation at $S_1^*=(0.416123,4.6347,2.29269,
		83.4245)$ are  calculated  as $\epsilon_1=2.453>0,\,\epsilon_2=1.9526>0,\,\epsilon_3=0.5463>0,\,\epsilon_4=0.0317>0,$ and $\epsilon_1\epsilon_2\epsilon_3-\epsilon_1^2\epsilon_4 - \epsilon_3^2=2.1276>0.$ The coefficients of characteristic equation at $S_2^*=(1.1931, 3.08517,6.57356,
		55.5331)$ are  calculated  as $\epsilon_1=2.453>0,\,\epsilon_2=1.7912>0,\,\epsilon_3=0.3445<0,\,\epsilon_4=-0.0187<0,$ and $\epsilon_1\epsilon_2\epsilon_3-\epsilon_1^2\epsilon_4 - \epsilon_3^2=1.5077>0.$ The coefficients of characteristic equation at $S_3^*=(3.15301,1.67693,17.3719,
		30.1848)$ are  calculated  as $\epsilon_1=2.453>0,\,\epsilon_2=1.9428>0,\,\epsilon_3= 0.5339>0,\,\epsilon_4=0.03001>0,$ and $\epsilon_1\epsilon_2\epsilon_3-\epsilon_1^2\epsilon_4 - \epsilon_3^2=2.0785>0.$  Therefore, from Theorem \ref{existence_OR2}, we conclude that $S_1^*,\,S_3^*$ is locally asymptotically stable and $S_2^*$ is unstable.  We draw the phase portrait for the considered parameters, shown in Fig. \ref{F3_OR2} (c). We can see that all trajectories are converging to $S_1^*,\, S_3^*,$ and nearby trajectories of $S_2^*$ are moving away from $S_2^*.$ This shows the bistable behavior of the model system (\ref{1_OR2}).\\
		We chosen $\theta_b=5.94\in\theta_b^3,$
		the coefficients of characteristic equation at $S_3^*=(3.7429,\\ 1.47508,20.622, 26.5515)$ are  calculated  as $\epsilon_1=2.453>0,\,\epsilon_2=1.9886>0,\,\epsilon_3= 0.5912>0,\,\epsilon_4=0.0417>0,$ and $\epsilon_1\epsilon_2\epsilon_3-\epsilon_1^2\epsilon_4 - \epsilon_3^2=2.2838>0.$ Therefore, from Theorem \ref{existence_OR2}, we conclude that $S_3^*$ is locally asymptotically stable. We draw the phase portrait for the chosen parameters, shown in Fig. \ref{F3_OR2} (d). We can see that all trajectories starting from any initial points are converging to $S_3^*.$\\
		In this example, we also discuss the case of saddle-node bifurcation. We have seen that the system (\ref{1_OR2}) has two steady states $S_2^*,\,S_3^*$ in $\theta_b^2$ such that as the value of $\theta_b$ decreases, the two steady states collide at $\theta_b^{{{SN}^1}}=2.154867833473525$ and denoted as $S_{{SN}^1}^*=(2.12792,2.20183,11.7241,
	39.6329).$ The Jacobian matrix $J=Df(S_{{SN}^1}^*,\theta_b^{{{SN}^1}})$ has a simple eigenvalue $\lambda=0$ and  $v=(-0.0748,0.0504, -0.4121,0.9067)^T,$\\$w=(-0.9192,0.0771,-0.3861,0.0110)^T$, are the eigenvectors of $J$ and $J^T,$ respectively. Both the tranversality conditions $w^Tf_{\theta_b}(S_{{SN}^1}^*,\theta_b^{{{SN}^1}})=-0.0709\neq 0$ and \\ $	w^T[D^2f(S_{{SN}^1}^*,\theta_b^{{{SN}^1}})(v,v)]=0.0015\neq 0$ are satisfied. Hence, the system (\ref{1_OR2}) experiences saddle-node bifurcation at $\theta_b=\theta_b^{{{SN}^1}}.$ Also, the system (\ref{1_OR2}) has two steady states $S_1^*,\,S_2^*$ in $\theta_b^2$ such that as the value of $\theta_b$ increases, the two steady states collide at $\theta_b^{{{SN}^2}}=4.804927166491674$ and denoted as $S_{{SN}^2}^*=(0.7238,3.8654,3.9879,69.5764).$ The Jacobian matrix $J=Df(S_{{SN}^2}^*,\theta_b^{{{SN}^2}})$ has a simple eigenvalue $\lambda=0$ and  $v=(0.0263,-0.0549,0.1450,-0.9876)^T,\,w=(0.9206,-0.0544,0.3866,-0.0078)^T$, are the eigenvectors of $J$ and $J^T,$ respectively. Both the tranversality conditions $w^Tf_{\theta_b}(S_{{SN}^2}^*,\theta_b^{{{SN}^2}})=0.0394\neq 0$ and  $	w^T[D^2f(S_{{SN}^2}^*,\theta_b^{{{SN}^2}})(v,v)]=0.0004194\neq 0$ are satisfied. Hence, the system (\ref{1_OR2}) experiences saddle-node bifurcation at $\theta_b=\theta_b^{{{SN}^2}}.$\\
The system (\ref{1_OR2}) exhibits a hysteresis effect in $\theta_b^2$ where multiple steady states coexist, as
shown in Fig \ref{F3_OR2} (a). The two outer steady states are stable, while the interior steady state (red) is unstable.\\
	Again, in this case, the parameters are chosen as $\delta_a = 0.09,\,\delta_b = 0.9,\, \gamma_a = 1,\, \gamma_b = 0.9,\, k_a = 0.25,\, k_b = 0.9,\, m_b = 0.1,\,  m_a =2,\,\theta_a = 0.4,\, A_1 =0.0035,\, B_1 =7.85,\, \theta_{aa} =20,\,\theta_b = (0.00001,50).$ For this set of parameters, we obtain a unique stable steady state as discussed in Case 2 (OR2). The corresponding bifurcation plot is shown in Fig. \ref{F3_OR2} (e). To ensure the stability of $S_1^*,$  we choose $\theta_b=10$ from the range of $\theta_b.$  The coefficients of characteristic equation at $S_1^*$ are  calculated as $\epsilon_1=2.89>0,\,\epsilon_2=2.8546>0,\,\epsilon_3=1.0316>0,\,\epsilon_4=0.0669>0,$ and $\epsilon_1\epsilon_2\epsilon_3-\epsilon_1^2\epsilon_4 - \epsilon_3^2=6.88757>0.$ Therefore, from Theorem \ref{existence_OR2}, we conclude that $S_1^*$ is locally asymptotically stable. We draw the phase portrait for the considered $\theta_b$, shown in Fig. \ref{F3_OR2} (f). We can see that all trajectories starting from any initial points are converging to $S_1^*=(1.11812, 8.7349,  3.1059, 8.7349).$
 \begin{figure}[H]
			\subfigure[]{\includegraphics[scale=0.27]{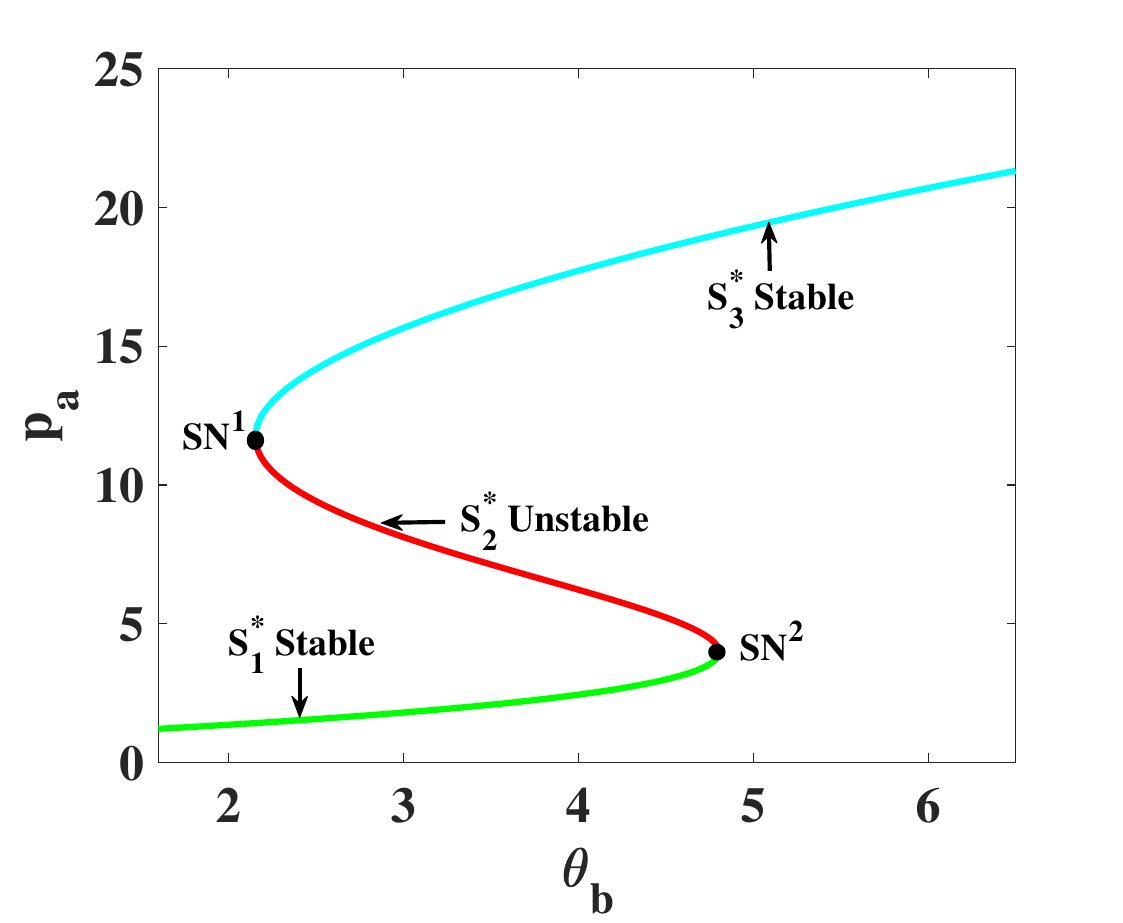}}
			\hfill
			\subfigure[]{\includegraphics[scale=0.27]{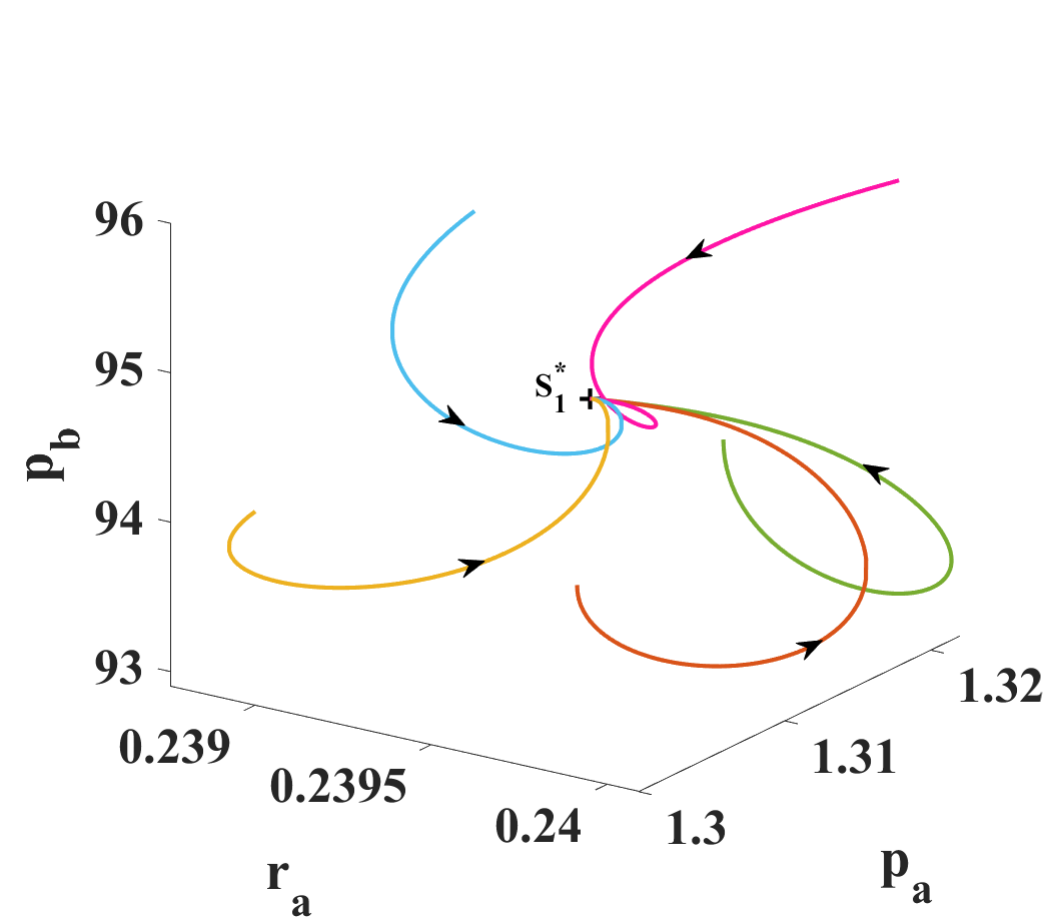}}
			\hfill
			\subfigure[]{\includegraphics[scale=0.27]{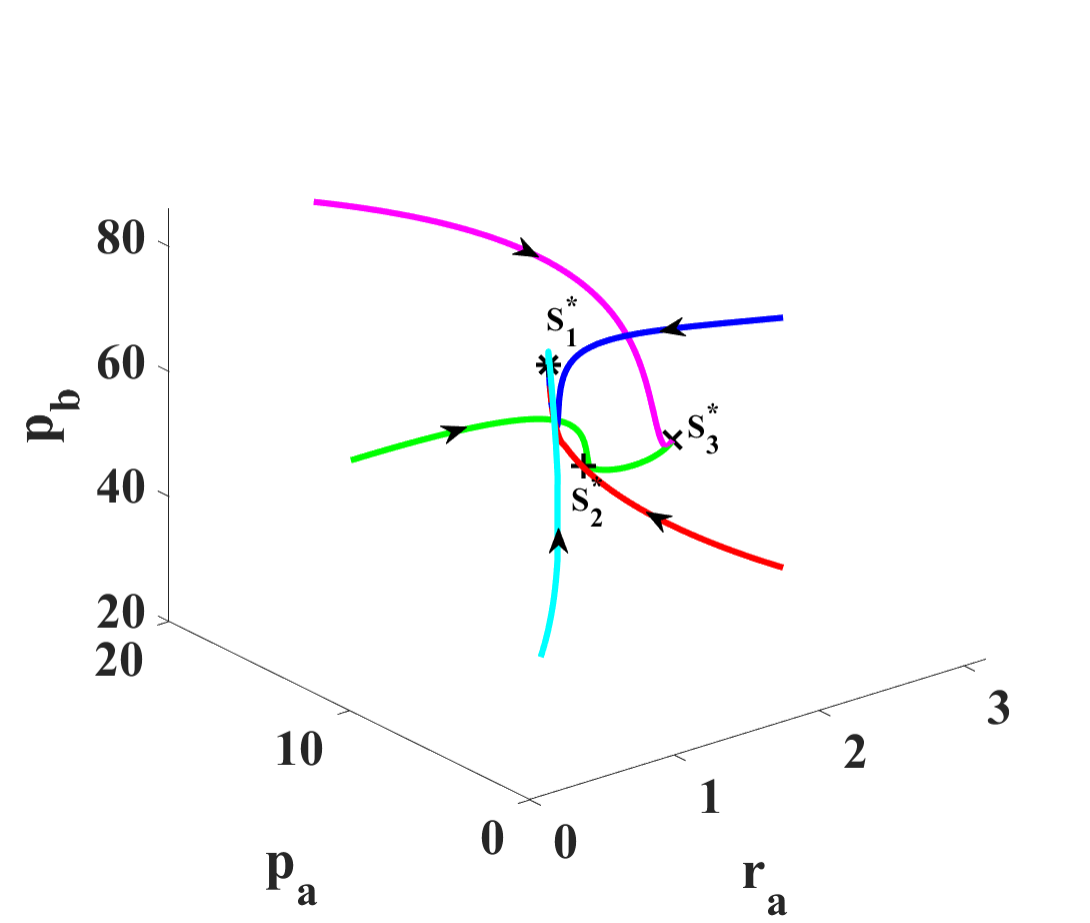}}
			\hfill
			\subfigure[]{\includegraphics[scale=0.27]{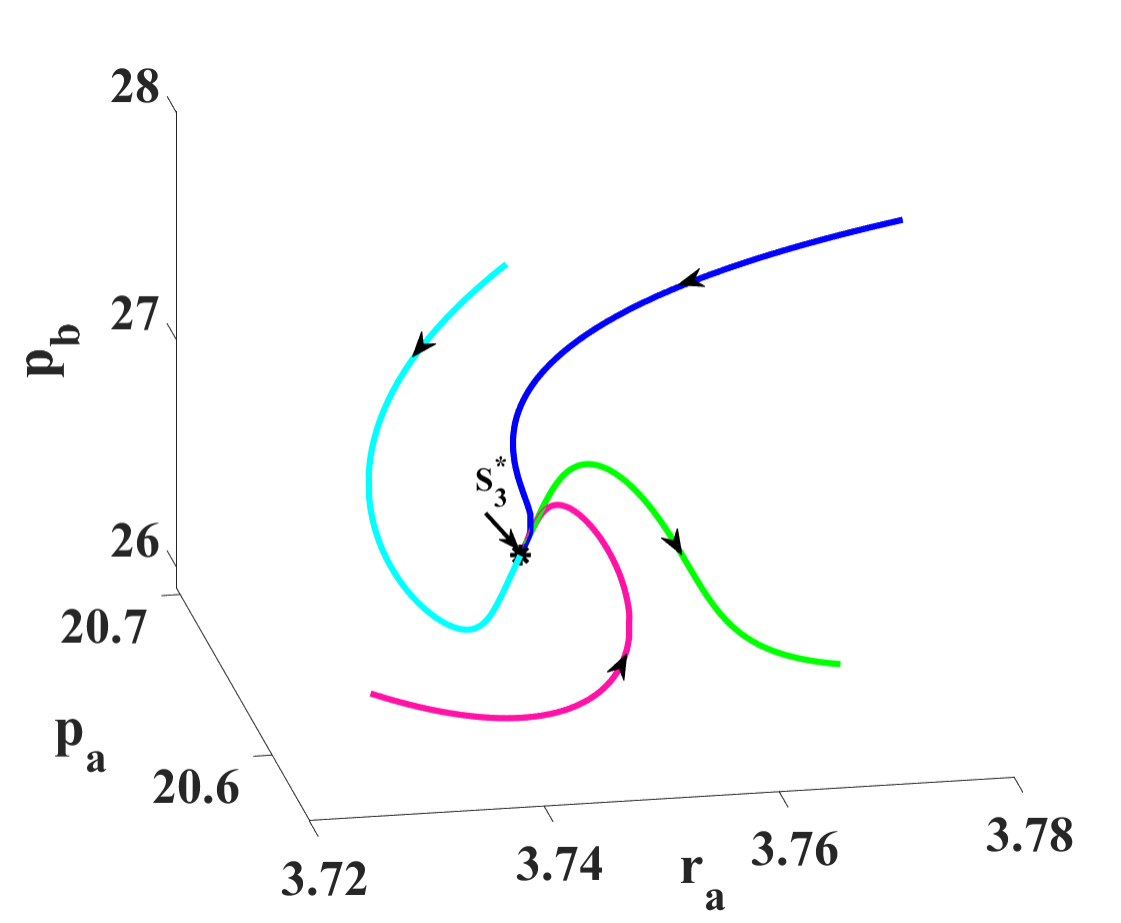}}
			\hfill
   \subfigure[]{\includegraphics[scale=0.27]{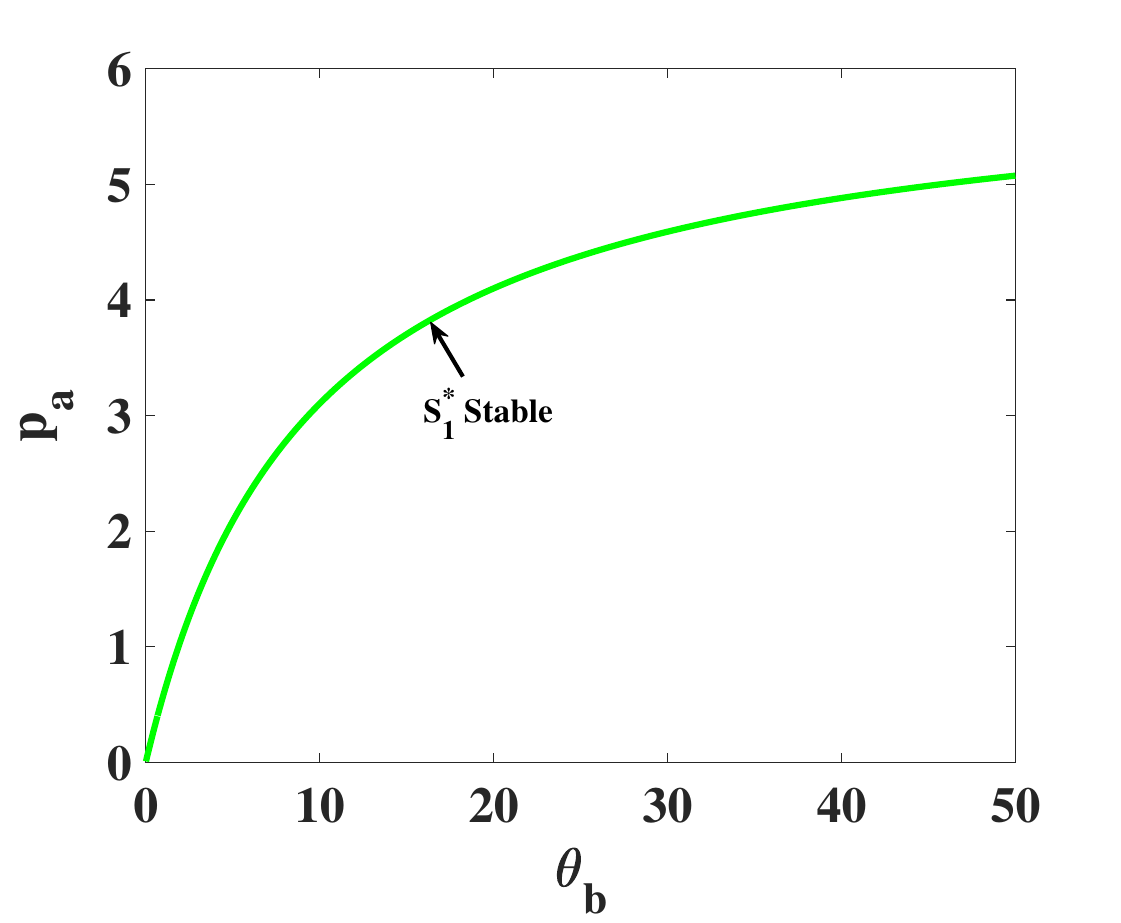}}
	\hfill
	\subfigure[]{\includegraphics[scale=0.27]{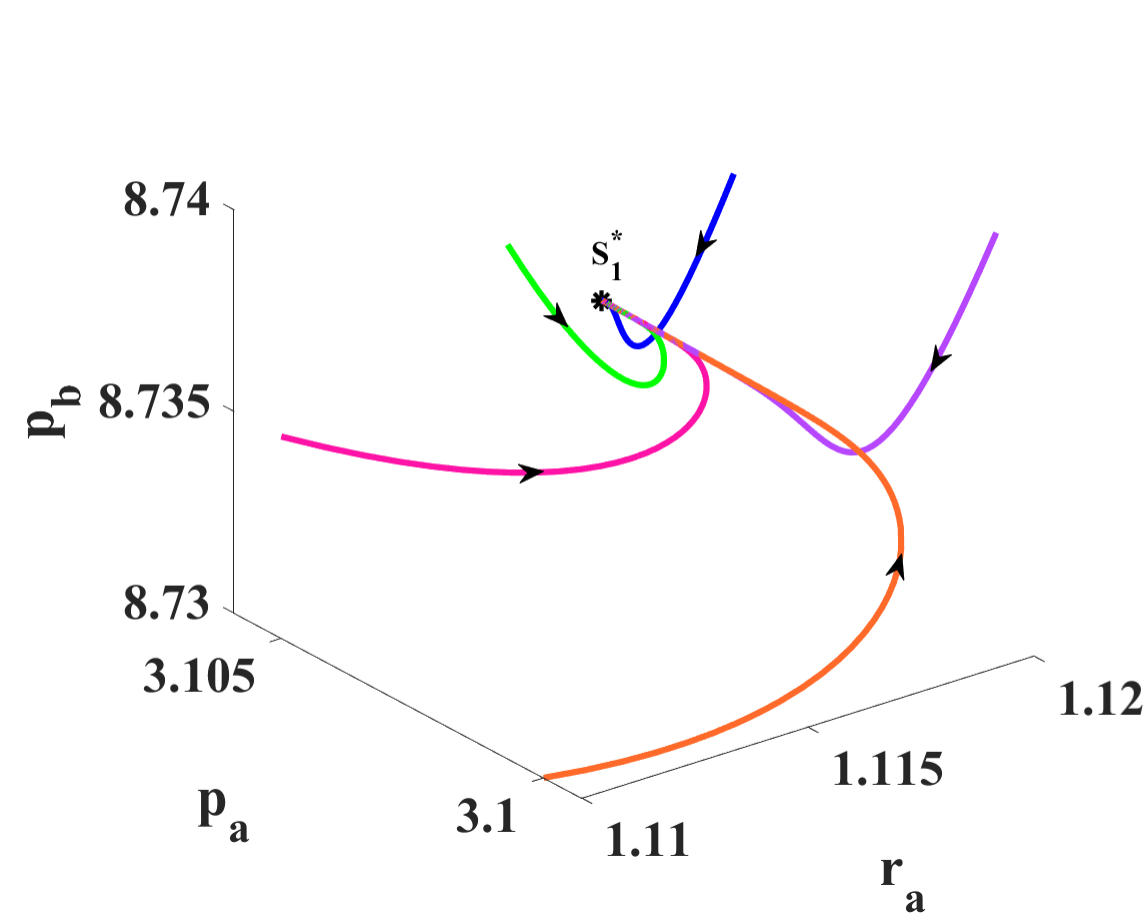}}
	\hfill
			\caption{For $n_a=1,\,n_b=1,\,n_{aa}=2$; (a) Bifurcation plot exhibiting hysteresis effect with three steady states (b) Phase portrait at $\theta_b=1.89$ (c) Phase portrait at $\theta_b=3.81$ depicting bistability (d)  Phase portrait at $\theta_b=5.94$ (e) Bifurcation plot with the unique stable steady state (f) Phase portrait at $\theta_b=10.$}
			\label{F3_OR2}
		\end{figure}
		\end{example}
  \begin{example}
  $n_a=2,\,n_b=3,\,n_{aa}=2$ (case 3) and the other parameters are chosen as $\delta_a = 0.32,\, \delta_b = 0.35,\, \gamma_a = 0.25,\,\gamma_b = 1,\,k_a = 0.5,\, k_b = 0.91,\, m_b = 15.5,\, m_a = 3.4,\, \theta_a = 9.4,\, A_1 = 0.2,\, B_1 = 0.01,\, \theta_{aa} = 11.67,\,\theta_b\in(3,11).$
  In this range of $\theta_b,$ there is three ($S_1^*,\,S_2^*,\,S_3^*$) steady states in $\theta_b^1$, five ($S_1^*,\,S_2^*,\,S_3^*,\,S_4^*,\,S_5*$) in $\theta_b^2,$ three ($S_1^*,\,S_2^*,\,S_5*)$ in $\theta_b^3,$ and unique ($S_5^*$) in $\theta_b^4,$ such that $\theta_b=\theta_b^1\cup\theta_b^2\cup\theta_b^2\cup\theta_b^4$ where $\theta_b^1=(3,3.51),\,\theta_b^2=(3.52,3.95),\,,\,\theta_b^3=(3.96,10.21)$ and $\theta_b^4=(10.22,11).$ The steady states $S_1^*,\,S_3^*,\,S_5^*$ are stable and $S_2^*,\,S_4^*$ are unstable. The corresponding bifurcation diagram is shown in Fig. \ref{tri_OR2} (a).\\	
To ensure the tristability in $\theta_b^2,$ we chosen $\theta_b=3.7\in \theta_b^2$ and the coefficients of characteristic equation at $S_1^*=(1.10066,15.008,1.71979,39.0207)$ are  calculated  as $\epsilon_1=1.92>0,\,\epsilon_2=1.1584>0,\,\epsilon_3=0.2520>0,\,\epsilon_4=0.0135>0,$ and $\epsilon_1\epsilon_2\epsilon_3-\epsilon_1^2\epsilon_4 - \epsilon_3^2=0.4470>0.$ 
The coefficients of characteristic equation at $S_2^*=(4.69214,9.64743,7.33148,25.0833)$ are  calculated  as $\epsilon_1=1.92>0,\,\epsilon_2=1.1054>0,\,\epsilon_3=0.1805>0,\,\epsilon_4=-0.0055<0,$ and $\epsilon_1\epsilon_2\epsilon_3-\epsilon_1^2\epsilon_4 - \epsilon_3^2=0.3708>0.$ 
The coefficients of characteristic equation at $S_3^*=(10.4286,3.88023,16.2947,10.0886)$ are  calculated  as $\epsilon_1=1.92>0,\,\epsilon_2=1.1527>0,\,\epsilon_3=0.2444>0,\,\epsilon_4=0.0043>0,$ and $\epsilon_1\epsilon_2\epsilon_3-\epsilon_1^2\epsilon_4 - \epsilon_3^2=0.4653>0.$ 
The coefficients of characteristic equation at $S_4^*=(16.8118,1.7695,26.2684,4.60071)$ are  calculated  as $\epsilon_1=1.92>0,\,\epsilon_2=1.1817>0,\,\epsilon_3=0.2834>0,\,\epsilon_4=-0.0052<0,$ and $\epsilon_1\epsilon_2\epsilon_3-\epsilon_1^2\epsilon_4 - \epsilon_3^2=0.5819>0.$ 
The coefficients of characteristic equation at $S_5^*=(23.5993,0.95581,36.874,2.48511)$ are  calculated  as $\epsilon_1=1.92>0,\,\epsilon_2=1.1919>0,\,\epsilon_3=0.2972>0,\,\epsilon_4=0.0093>0,$ and $\epsilon_1\epsilon_2\epsilon_3-\epsilon_1^2\epsilon_4 - \epsilon_3^2=0.5576>0.$ 
Therefore, from Theorem \ref{existence_OR2}, we conclude that $S_1^*,\,S_3^*,\,S_5^*$ are locally asymptotically stable and $S_2^*,\,S_4^*$ are unstable. We draw the phase portrait for the considered parameters, shown in Fig. \ref{tri_OR2} (b). We can see that all trajectories starting from any initial points are converging to $S_1^*,\,S_3^*,\,S_5^*$ and nearby trajectories of $S_2^*,\,S_4^*$ are moving away from $S_2^*,\,S_4^*.$ This shows the tristable behavior of the model system (\ref{1_OR2}).\\
		Further, to ensure the bistability in $\theta_b^3,$ we chosen $\theta_b=7\in \theta_b^3,$  and the coefficients of characteristic equation at $S_1^*=(1.25204, 14.8665,1.95631, 38.653)$ are  calculated  as $\epsilon_1=1.92>0,\,\epsilon_2=1.1533>0,\,\epsilon_3=0.2451>0,\,\epsilon_4=0.0114>0,$ and $\epsilon_1\epsilon_2\epsilon_3-\epsilon_1^2\epsilon_4 - \epsilon_3^2=0.4407>0.$
The coefficients of characteristic equation at $S_2^*=(3.84503, 11.0147, 6.00785,\\ 28.6382)$ are  calculated  as $\epsilon_1=1.92>0,\,\epsilon_2=1.1058>0,\,\epsilon_3=0.1810<0,\,\epsilon_4=-0.0072<0,$ and $\epsilon_1\epsilon_2\epsilon_3-\epsilon_1^2\epsilon_4 - \epsilon_3^2=0.3782>0.$ 
The coefficients of the characteristic equation at $S_5^*=(26.7174, 0.757962,  41.7459,  1.9707)$ are  calculated  as $\epsilon_1=1.92>0,\,\epsilon_2=1.194>0,\,\epsilon_3= 0.3001>0,\,\epsilon_4=0.0244>0,$ and $\epsilon_1\epsilon_2\epsilon_3-\epsilon_1^2\epsilon_4 - \epsilon_3^2=0.5081>0.$  Therefore, from Theorem \ref{existence_OR2}, we conclude that $S_1^*,\,S_5^*$ are locally asymptotically stable and $S_2^*$ is unstable.  We draw the phase portrait for the considered parameters, shown in Fig. \ref{tri_OR2} (c). We can see that all trajectories are converging to $S_1^*,\, S_5^*,$ and nearby trajectories of $S_2^*$ are moving away from $S_2^*.$ This shows the bistable behavior of the model system (\ref{1_OR2}).\\
To ensure the mono-stability in $\theta_b^4,$ we chosen  $\theta_b=10.5\in\theta_b^4,$ and 
		the coefficients of characteristic equation at $S_5^*=(26.9444, 0.746008,  42.1006, 1.93962)$ are  calculated  as $\epsilon_1=1.92>0,\,\epsilon_2=1.1941>0,\,\epsilon_3= 0.3003>0,\,\epsilon_4=0.0256>0,$ and $\epsilon_1\epsilon_2\epsilon_3-\epsilon_1^2\epsilon_4 - \epsilon_3^2=0.5038>0.$ Therefore, from Theorem \ref{existence_OR2}, we conclude that $S_5^*$ is locally asymptotically stable. We draw the phase portrait for the chosen parameters, shown in Fig. \ref{tri_OR2} (d). We can see that all trajectories starting from any initial points are converging to $S_5^*.$
 \begin{figure}[H]
 \centering
			\subfigure[]{\includegraphics[scale=0.33]{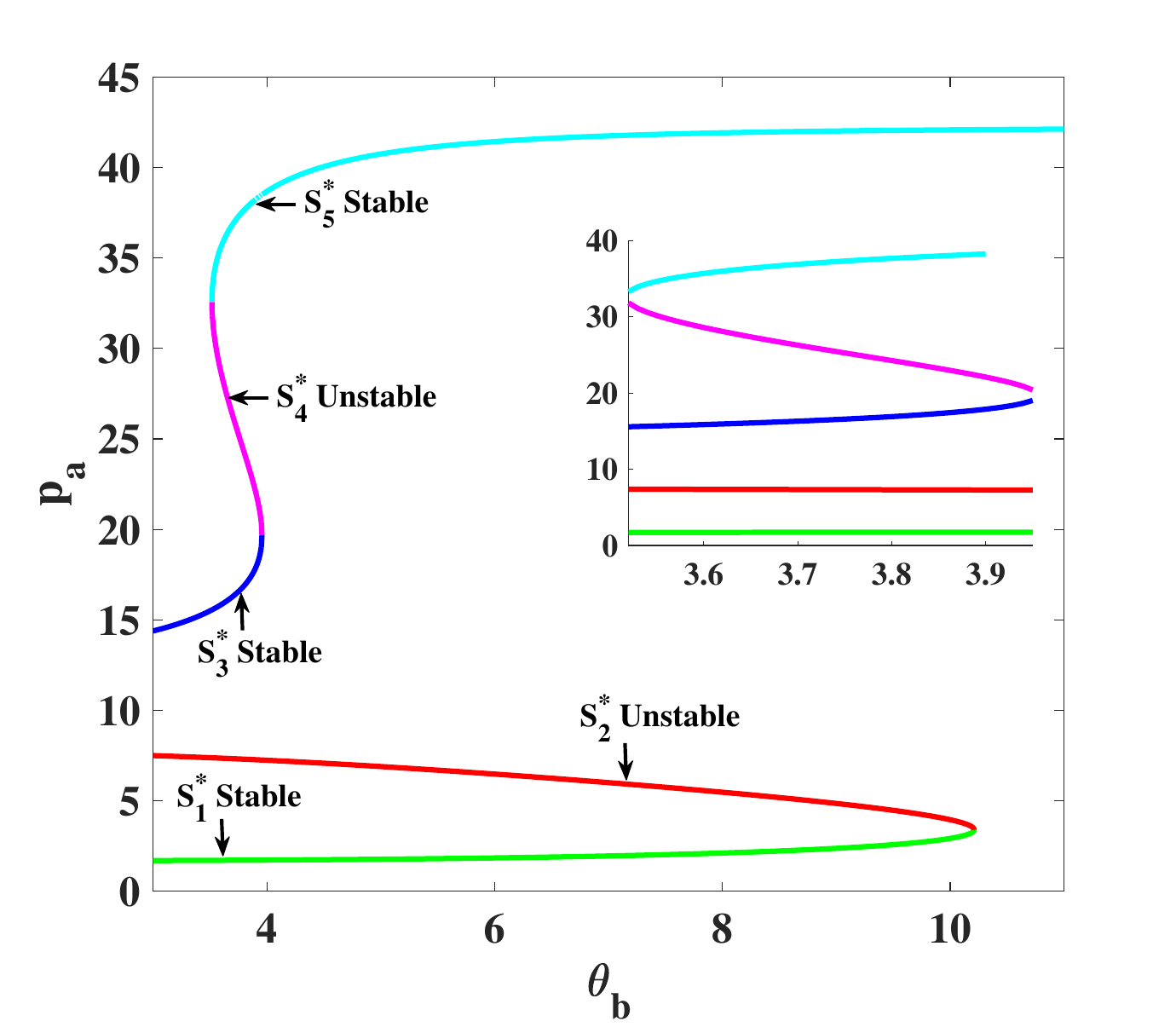}}
			\hfil
			\subfigure[]{\includegraphics[scale=0.33]{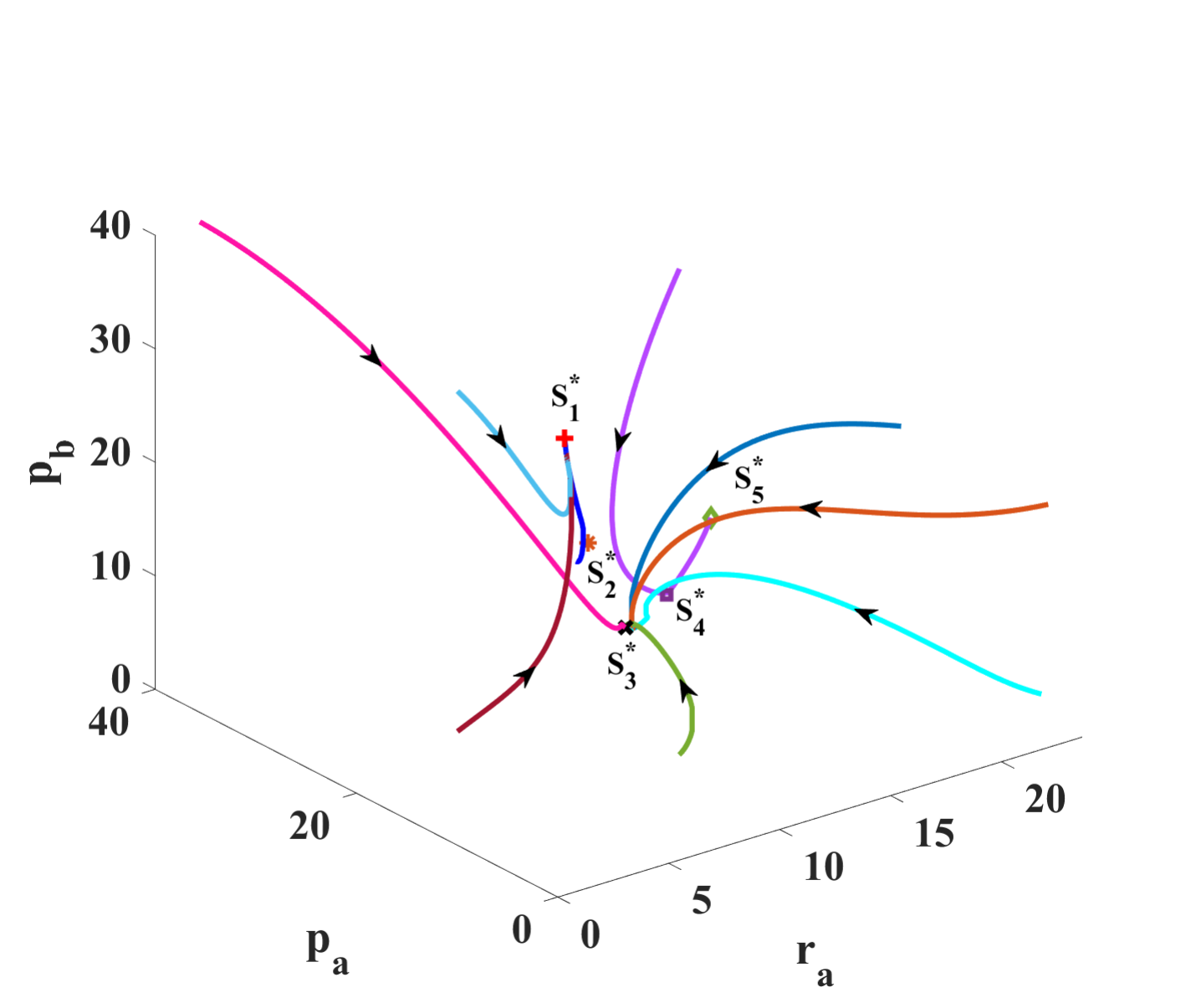}}
			\end{figure}
    \begin{figure}[H]
 \centering
			\subfigure[]{\includegraphics[scale=0.3]{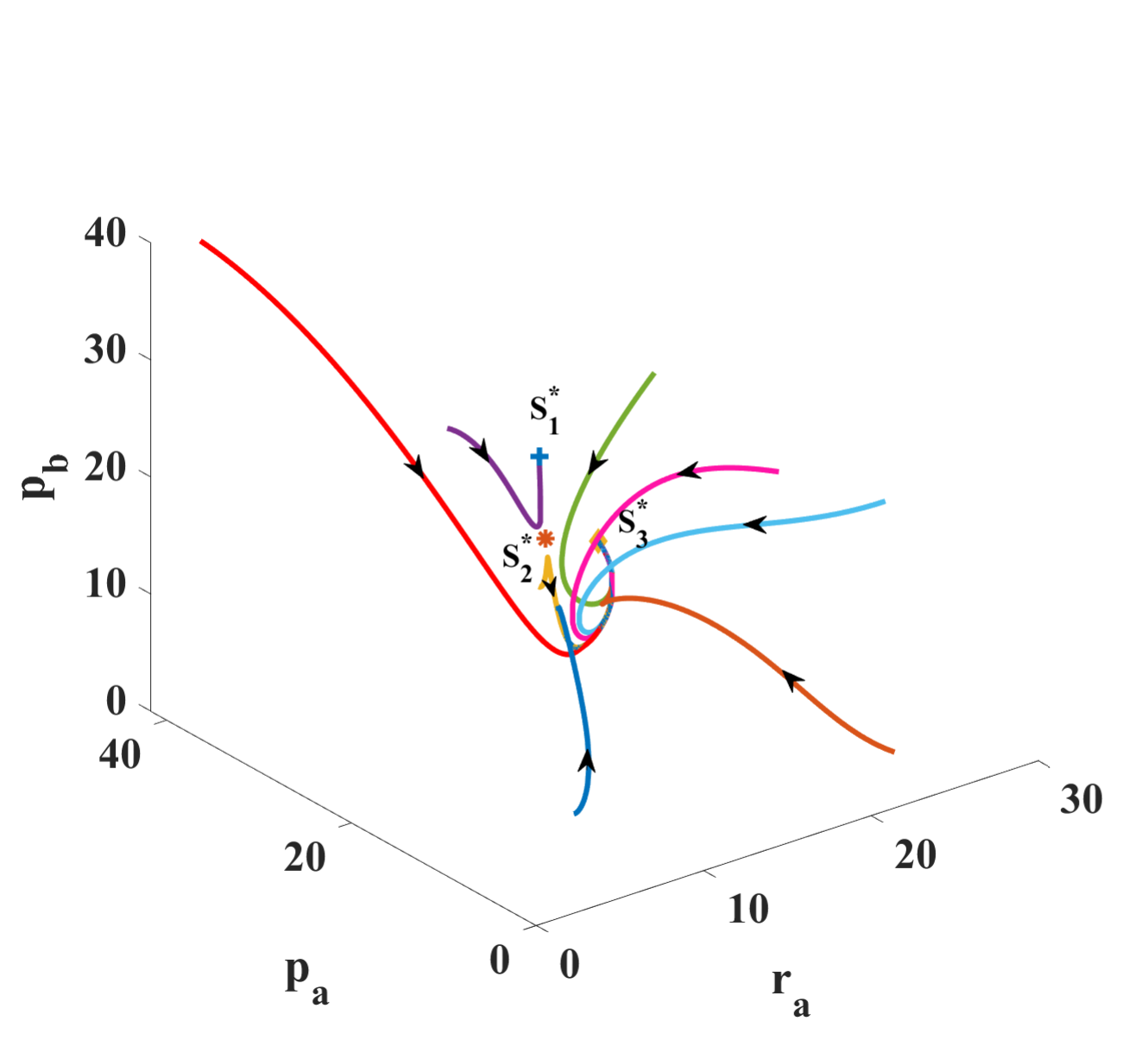}}
			\hfil
			\subfigure[]{\includegraphics[scale=0.3]{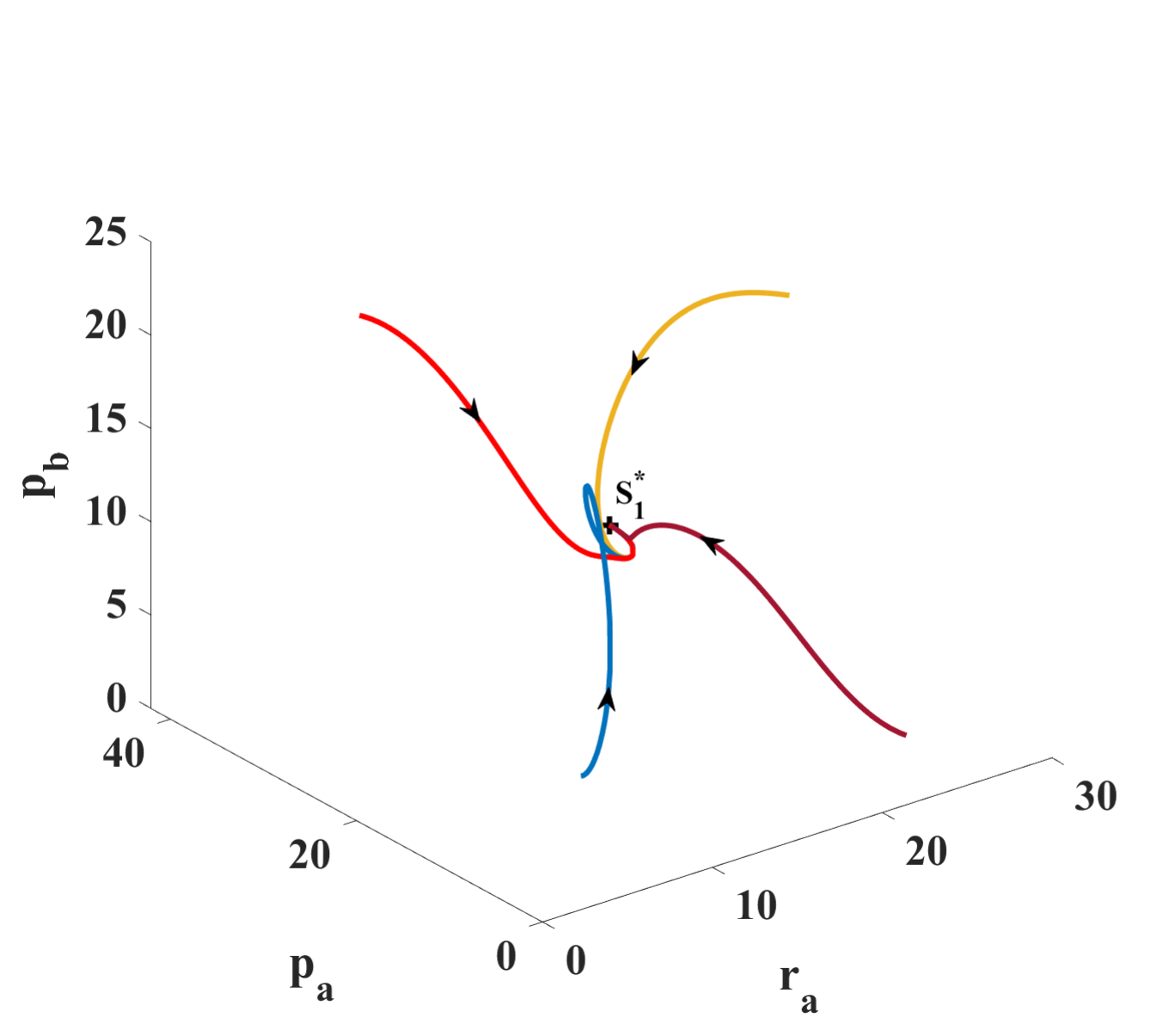}}
			 \caption{For $n_a=2,\,n_b=3,\,n_{aa}=2$; (a) Bifurcation plot with five steady states (b) Phase portrait at $\theta_b=3.7$ depicting tristability (c) Phase portrait at $\theta_b=7$ depicting bistability (d)  Phase portrait at $\theta_b=10.5.$}
			\label{tri_OR2}
		\end{figure}
  \end{example}

\subsection{Conclusion and biological interpretation for this section}
In this section, we consider a non-competitive type-II $\mathcal{OR}$ logic as an alternate to competitive OR logic. Here we assume that the two genes $a$ and $b$ don't compete for binding and any binding events can elicit gene expression. Our results overlap with that of AND logic by showing that bistability requires dimeric (Hill coefficients of degree 2) autoloop and cell state transition can occur only through abrupt transitions (hysteresis). Nevertheless, the main result of this section is tristability, which we show is possible only in higher degrees of multimerization (e.g., ($n_a, \ n_{aa}, \ n_b) \ = \ (2,\ 2, \ 3$)). One manifestation of tristability is phenotypic heterogeneity (concurrent occurrence of three phenotypes) and we also observe that cell transition occurs abruptly after a certain threshold of $\theta_b$ is crossed (Fig. \ref{tri_OR2}a). We thus conclude that an autoregulated two-node positive feedback circuit with non-competitive OR logic (type-II $\mathcal{OR}$) is the "minimal" circuit to exhibit multistability under higher degrees of multimerization.\\

\section{Impact of basal production rate on model dynamics}
Basal production rate or leakage defines the production of mRNA in the absence of any activatory or inhibitory interacting transcription factors. For instance, the basal production of $r_a$ will be the rate of production of $r_a$ in "normal" conditions, i.e., when the self-activation due to $a$ itself or inhibition by $b$ is absent. While modeling gene expression (transcription) processes, leakage is sometimes neglected at the cost of simplifying modeling efforts and analytical calculations. However, in all of our model systems, (\ref{1}), (\ref{and}) and (\ref{1_OR2}), we have considered the leakage rates of mRNA's, $r_a$ and $r_b$ as $A_1$ and $B_1$, respectively. 

We wanted to investigate whether the absence of the leakage rate of $r_a$ will have any impact on the model dynamics. To understand that, we compared the mRNA concentration $r_a$ for the model systems (\ref{1}), (\ref{and}) and (\ref{1_OR2}) for the zero and non-zero values of $A_1,$ as shown in Fig. \ref{orandOR2} (a) and (b), respectively. From Fig. \ref{orandOR2}, we observed that the level of mRNA concentration $r_a$ is high for the model with type-II $\mathcal{OR}$ logic (model system \ref{1_OR2}) (blue color curve) and low for the model with type-I OR logic (model system \ref{1}) (orange color curve). Also, it is evident that the level of mRNA concentration $r_a$ is high for all the three model systems with non-zero $A_1$ (Fig. \ref{orandOR2}b) compared to $A_1=0$ case (Fig. \ref{orandOR2}a). However, the structure of the time-dynamics curves of mRNA concentration remains largely unaltered. Our results thus show that leakage only plays the role of "scaling" in the model and its incorporation in the model systems won't have any impact on the overall qualitative dynamics of the network. We thus conclude that our results hold true even when the leakage is completely ignored. \\

   \begin{figure}[H]
   \centering
			\subfigure[]{\includegraphics[scale=0.3]{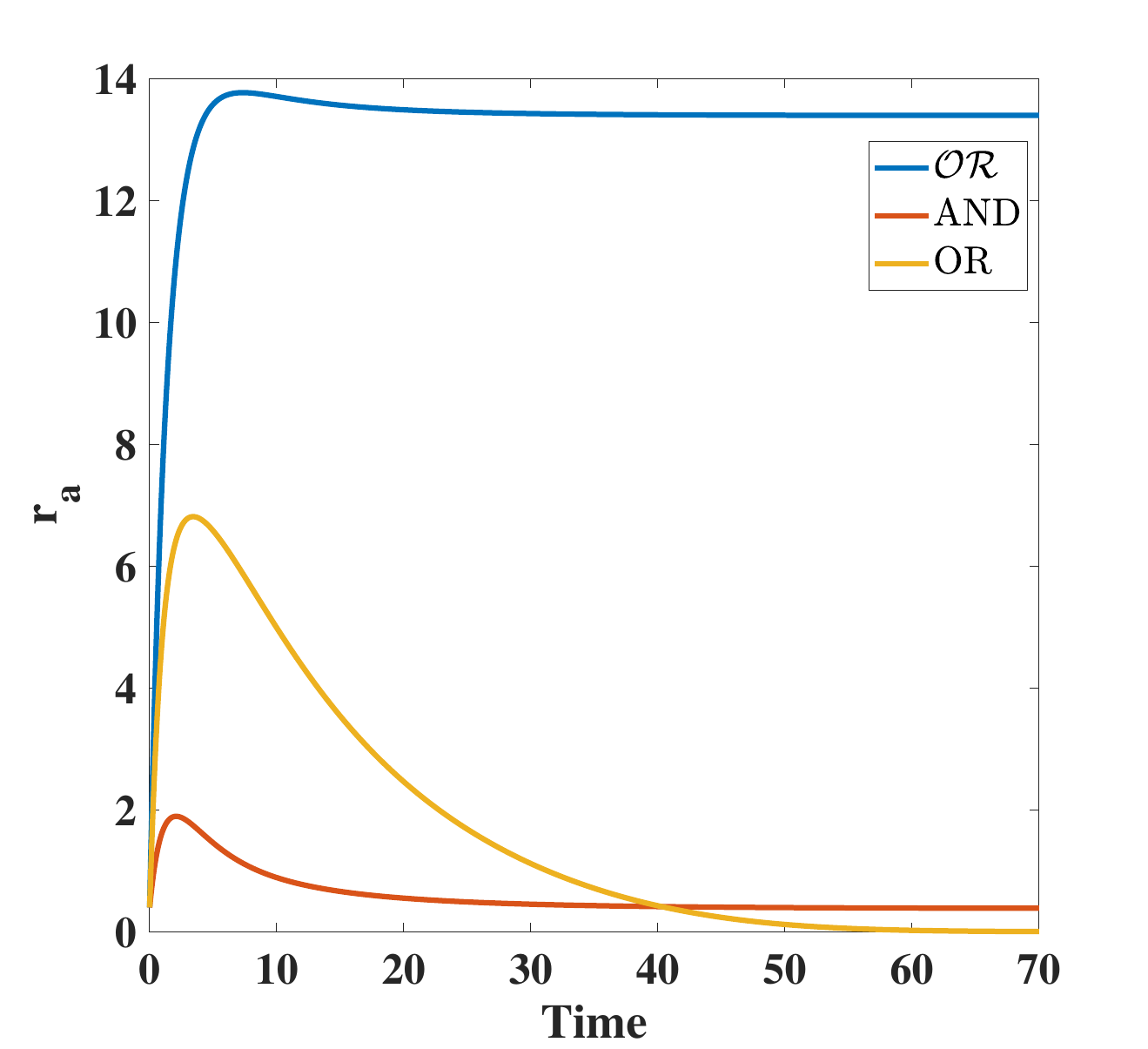}}
   \hfil
			\subfigure[]{\includegraphics[scale=0.3]{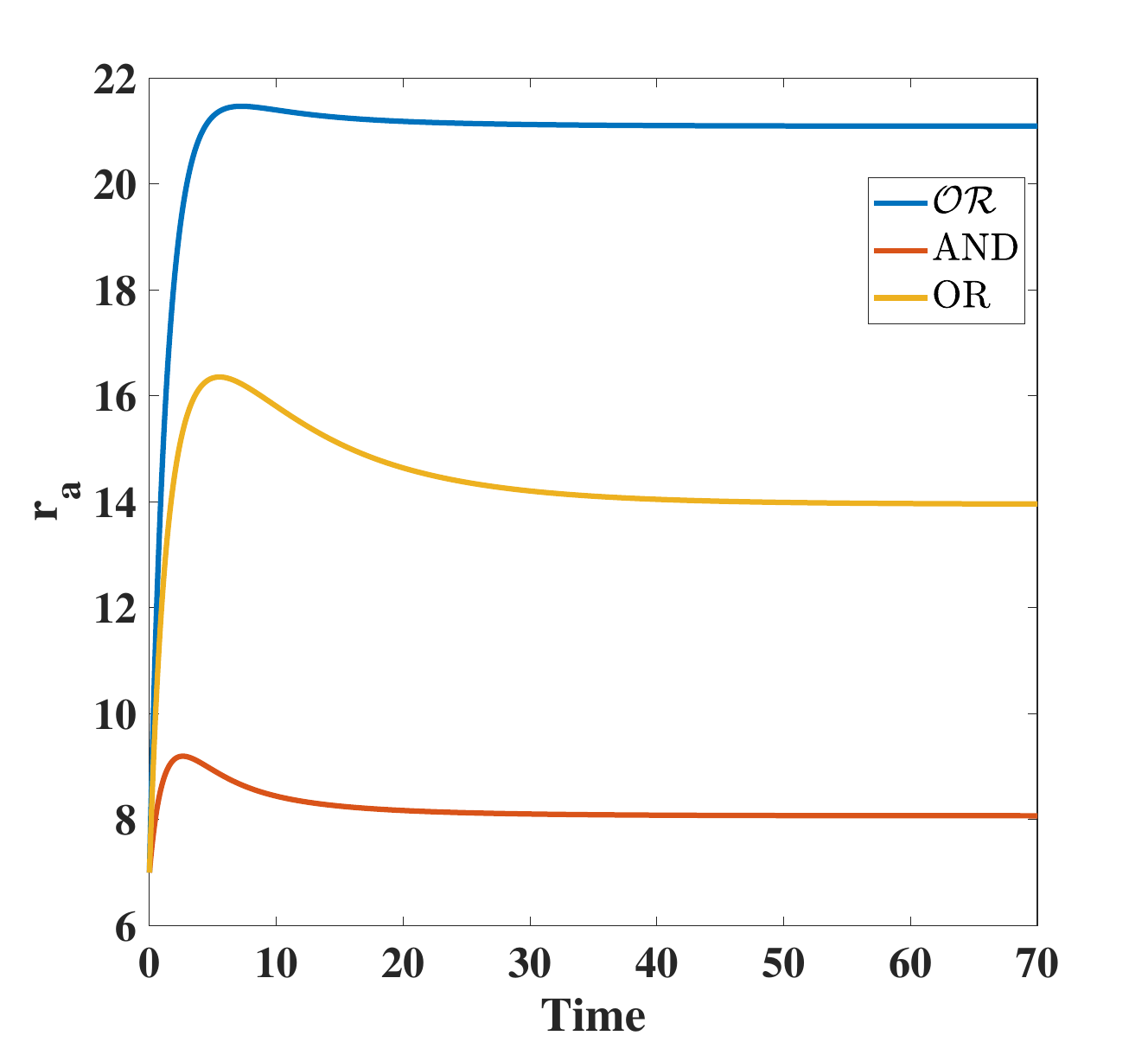}}
	 \caption{Comparative study of temporal dynamics of mRNA concentration $r_a$ of gene $a$ using three regulatory logics: Type-I OR logic, AND logic, and Type-II $\mathcal{OR}$ logic.(a) when $A_1=0$ and (b) when $A_1\neq0$. The parameter values for these plots are same as that for Fig. \ref{case1traj_new} (f)}.
			\label{orandOR2}
		\end{figure}

\section{Discussion and Final Conclusion}
Cell fate switching is a dynamic phenomenon often tied to regulatory network motifs that at the cellular level define the computational machinery of life. Most of these network motifs define molecular switches exhibiting diverse qualitative behaviors such as bistability, catastrophes, and hysteresis \cite{cell-fate-roy-soc-inter}. The most prominent examples of molecular switches involving minimal circuitry are the two-component positive feedback network motifs resulting from mutual repression or mutual activation of two genes. Traditional studies have shown that these motifs can exhibit at most bistable dynamics (two stable states or attractors) allowing the system to alternately switch between two states. Further, the underlying cause of bistable dynamics is attributed to multimeric regulation (higher order Hill coefficient), in contrast to monomeric regulation that gives rise to only monostable dynamics. The biological programs underlying cell fate decision-making are however not just restricted to mono-and bistable regimes of dynamics. In fact, higher-order dynamics such as tristability (three stable states/attractors/phenotypes) is now also prominently observed in biological mechanisms. This happens in differentiation programs such as differentiation of naive CD4+ T cells \cite{CD4-cell-differ-tristability} as well as in diseases-inducing processes such as epithelial-mesenchymal transition (EMT) in which non-motile epithelial cells switch to mesenchymal and hybrid epithelial-mesenchymal cell fates which have migratory and invasive traits that often causes metastasis of carcinomas \cite{microRNA-based}. Nevertheless, the minimal network motif that can exhibit tristability, or for that matter, mon, bi-and tri stability, remains largely unexplored. 

In our previous study on networks underlying and driving EMT, we used a “numerical” approach to report that a positive feedback loop formed by mutual repression of two genes, miR200 ( microRNA) and ZEB (mRNA), with one gene, ZEB, self-activated can exhibit tristability. However, this being a microRNA-mRNA circuit involves translational repression and active degradation of ZEB due to micro-RNA, making it a complicated mechanism. We asked whether a generic positive feedback loop formed by mutual repression of two transcription factors (rather than microRNAs and mRNA’s) with an autoloop, can be the minimal circuit to exhibit lower as well as higher-order stability regimes such as mon, bi-and tri stability. While small feedback circuits with and without delays have been studied previously \cite{wang2020dynamic,xi2015parameter, hogan-time-delay, oper-princp-tristable_switches, kishore-coupled-bistable-switches, delaydiffeq-natcomm, ruiqi-iden-crit-inter}, we still lack answers with rigorous analytical foundations to the following crucial questions: (a) How is multimerization associated with the order of stability? (b) Does the “logic” of regulation have any functional role in multistability? In this work, we address these questions using a rigorous analytical (and numerical) approach that allows us to reach the following conclusions summarised in Table \ref{summary}. 

We show that a positive feedback loop with an autoloop can exhibit diverse dynamical behaviors that depend on the degree of multimerization as well as the “logic” of regulation. The autoloop can be integrated into the core circuit through three types of logic gates: OR (Type-I and Type-II), and AND. With Type-I OR logic, the monomeric circuit with monomeric autoloop can show biphasic kinetics as well as hysteresis (bistability). However, with AND and Type-II logics the circuit can show hyperbolic saturation kinetics - a manifestation of monostability. We show that the necessary geometrical constraint for hysteresis (bistability) with AND and Type-II logics imposes the existence of (at least) dimeric autoloop. We also analyzed higher degrees of multimerization and reported the prevalence of biphasic as well as mono-and bistable dynamics. Finally, and strikingly, we show the possibility of tristability in this circuit. The two requirements for tistable dynamics are: (a) the degrees of multimerization for the two regulations in the core circuit should be (at least) two and three, (b) the degree of multimerization of autoloop should be (at least) two (i. e., dimeric). Taken together, our analysis shows how different multimerization and logical constraints in this minimal circuit can work together to produce numerous types of dynamical scenarios including monostability,  hysteresis (bistability), biphasic kinetics, and importantly, tristability. While hysteresis translates to forced and abrupt cell state transitions, biphasic kinetics explains $smooth \ cell \ state \ swap$ without a step-like switch. The biological network (i.e., miR200-ZEB circuit) corresponding to this generic circuit underlies EMT, and drives cell fate transition and metastasis in carcinomas \cite{genetic-dissect}. Besides, it has been shown (experimentally) to exhibit hysteric as well non-hysteric dynamics, with only the hysteric EMT enabling lung metastasis \cite{hysteric-control}, our results can thus have crucial implications in furthering our understanding of EMT mechanism due to miR200-ZEB feedback loop and can provide inputs to control cell-fate switching during metastasis of carcinomas. 
\begin{table}[H]
\centering
\scalebox{0.55}{
\begin{tabular}{|l|lll|lll|lll|}
\hline
\multicolumn{1}{|c|}{\multirow{2}{*}{\textbf{$(n_a,n_b,n_{aa})$}}} & \multicolumn{3}{c|}{\textbf{Type-I OR logic}}                                                                                                                                      & \multicolumn{3}{c|}{\textbf{AND logic}}                                                                                                                                            & \multicolumn{3}{c|}{\textbf{Type-II $\mathcal{OR}$ logic}}                                                                                                                                     \\ \cline{2-10} 
\multicolumn{1}{|c|}{}                                             & \multicolumn{1}{c|}{\textbf{Monostable}} & \multicolumn{1}{c|}{\textbf{\begin{tabular}[c]{@{}c@{}}Bistable\\ (Hysteresis)\end{tabular}}} & \multicolumn{1}{c|}{\textbf{Tristable}} & \multicolumn{1}{c|}{\textbf{Monostable}} & \multicolumn{1}{c|}{\textbf{\begin{tabular}[c]{@{}c@{}}Bistable\\ (Hysteresis)\end{tabular}}} & \multicolumn{1}{c|}{\textbf{Tristable}} & \multicolumn{1}{c|}{\textbf{Monostable}} & \multicolumn{1}{c|}{\textbf{\begin{tabular}[c]{@{}c@{}}Bistable\\ (Hysteresis)\end{tabular}}} & \multicolumn{1}{c|}{\textbf{Tristable}} \\ \hline
\multicolumn{1}{|c|}{\textbf{(1,1,1)}}                                              & \multicolumn{1}{c|}{$\checkmark$}        & \multicolumn{1}{c|}{$\checkmark$}                                                             & \multicolumn{1}{c|}{-}       & \multicolumn{1}{c|}{$\checkmark$}        & \multicolumn{1}{c|}{-}                                                             & \multicolumn{1}{c|}{-}       & \multicolumn{1}{c|}{$\checkmark$}        & \multicolumn{1}{c|}{-}                                                             & \multicolumn{1}{c|}{-}       \\ \hline
\multicolumn{1}{|c|}{\textbf{(1,1,2)}   }                                  & \multicolumn{1}{c|}{$\checkmark$}        & \multicolumn{1}{c|}{$\checkmark$}                                                             & \multicolumn{1}{c|}{-}     & \multicolumn{1}{c|}{$\checkmark$}        & \multicolumn{1}{c|}{$\checkmark$}                                                             & \multicolumn{1}{c|}{-}       & \multicolumn{1}{c|}{$\checkmark$}        & \multicolumn{1}{c|}{$\checkmark$}                                                             & \multicolumn{1}{c|}{-}      \\ \hline
\multicolumn{1}{|c|}{\textbf{(2,1,2)}   }                                      & \multicolumn{1}{c|}{$\checkmark$}        & \multicolumn{1}{c|}{$\checkmark$}                                                             & \multicolumn{1}{c|}{-}     & \multicolumn{1}{c|}{$\checkmark$}        & \multicolumn{1}{c|}{$\checkmark$}                                                             & \multicolumn{1}{c|}{-}       & \multicolumn{1}{c|}{$\checkmark$}        & \multicolumn{1}{c|}{$\checkmark$}                                                             & \multicolumn{1}{c|}{-}        \\ \hline
\multicolumn{1}{|c|}{\textbf{(2,2,1)}   }                                       & \multicolumn{1}{c|}{$\checkmark$}        & \multicolumn{1}{c|}{$\checkmark$}                                                             & \multicolumn{1}{c|}{-}     & \multicolumn{1}{c|}{$\checkmark$}        & \multicolumn{1}{c|}{$\checkmark$}                                                             & \multicolumn{1}{c|}{-}       & \multicolumn{1}{c|}{$\checkmark$}        & \multicolumn{1}{c|}{$\checkmark$}                                                             & \multicolumn{1}{c|}{-}        \\ \hline
\multicolumn{1}{|c|}{\textbf{(2,2,2)}   }                                         & \multicolumn{1}{c|}{$\checkmark$}        & \multicolumn{1}{c|}{$\checkmark$}                                                             & \multicolumn{1}{c|}{-}     & \multicolumn{1}{c|}{$\checkmark$}        & \multicolumn{1}{c|}{$\checkmark$}                                                             & \multicolumn{1}{c|}{-}       & \multicolumn{1}{c|}{$\checkmark$}        & \multicolumn{1}{c|}{$\checkmark$}                                                             & \multicolumn{1}{c|}{-}      \\ \hline
\multicolumn{1}{|c|}{\textbf{(2,3,2)}   }                                         & \multicolumn{1}{c|}{$\checkmark$}        & \multicolumn{1}{c|}{$\checkmark$}                                                             & \multicolumn{1}{c|}{-}     & \multicolumn{1}{c|}{$\checkmark$}        & \multicolumn{1}{c|}{$\checkmark$}                                                             & \multicolumn{1}{c|}{-}       & \multicolumn{1}{c|}{$\checkmark$}        & \multicolumn{1}{c|}{$\checkmark$}                                                             & \multicolumn{1}{c|}{$\checkmark$}        \\ \hline
\multicolumn{1}{|c|}{\textbf{(3,3,3)}   }                                       & \multicolumn{1}{c|}{$\checkmark$}        & \multicolumn{1}{c|}{$\checkmark$}                                                             & \multicolumn{1}{c|}{-}       & \multicolumn{1}{c|}{$\checkmark$}        & \multicolumn{1}{c|}{$\checkmark$}                                                             & \multicolumn{1}{c|}{-}       & \multicolumn{1}{c|}{$\checkmark$}        & \multicolumn{1}{c|}{$\checkmark$}                                                             & \multicolumn{1}{c|}{$\checkmark$}         \\ \hline

\end{tabular}}
\caption{Summary of different dynamical scenarios exhibited by the network.}
    \label{summary}
\end{table}

\vspace*{1.0cm}

\noindent {\bf Author Contributions} \\
\noindent {\bf MR}: Conceptualization, supervision, funding acquisition. {\bf AS}: Analytical calculations and numerical simulations. Both authors analysed the data, discussed the results, and wrote the manuscript. \\ 

\noindent {\bf Acknowledgments} \\
This work is supported by the Department of Science and
Technology, India [Grant No. DST/INSPIRE/04/2020/001492] and the Science and Engineering Research Board, India [Grant No. CRG/2023/006432] to Mubasher Rashid.

\vspace*{1.0cm}

\bibliographystyle{unsrt}
\bibliography{ref}

	\end{document}